\newenvironment{boxes}[1]
{\begin{tcbraster}[raster columns=#1, raster equal height, size=small,
  space to upper,
  colframe=white,
  colback=white,
  coltitle=black,
  valign=center,
  halign=center,
  ]}
{\end{tcbraster}}
\newcommand\axiom[2]{\ensuremath{\textsc{#1}_\textsc{#2}}}
\newcommand\cons[1]{\textsc{#1}}
\newtheorem{definition}{Definition}
\newtheorem{proposition}{Proposition}
\newtheorem{theorem}{Theorem}
\newtheorem{remark}{Remark}
\begin{document}

\title{A Framework for Consistency Models in Distributed Systems}
\subtitle{Time, Visibility, Order, Convergence, Basic Axioms, and a New Trilemma}
\author{Paulo Sérgio Almeida}
\email{psa@di.uminho.pt}
\orcid{0000-0001-7000-0485}
\affiliation{%
  \institution{INESC TEC \& University of Minho}
  \country{Portugal}
}

\begin{abstract}
Consistency models define possible outcomes when concurrent processes use a
shared abstraction. Classic models explain the local history at each process
as a serial execution where effects of remote operations are interleaved in
some order.  Some frameworks allow types with sequential specifications.
But many useful abstractions like some CRDTs do not have sequential
specifications; other abstractions have sequential specifications but cannot
be explained by serial executions.  Some frameworks allow concurrent
specifications but focus on eventual consistency and cannot describe
non-converging models.  Most models are timeless, so they can be used for
distributed histories with no order information about operations from
different processes.  As we observe in this article, many classic models,
such as PRAM and causal memory, allow histories that can only be explained
by physically impossible causality cycles. Models that use a happens-before
relation assume it to be a strict partial order, preventing visibility
cycles. But forbidding all cycles is overly restrictive: it does not allow
histories from synchronization-oriented abstractions with partial concurrent
specifications, such as a barrier or a consensus object.

We define a general axiomatic timeless framework for asynchronous distributed
systems, together with well-formedness and consistency axioms. It unifies and
generalizes the expressive power of current approaches, simultaneously in
several dimensions.
1) It combines classic serialization per-process with a global visibility,
in abstract executions and semantic functions.
2) It defines a physical realizability well-formedness axiom to prevent
physically impossible causality cycles, while allowing possible and useful
visibility cycles, namely to allow synchronization-oriented abstractions,
allowing happens-before to be a preorder.
3) Allows adding time-based constraints, from a logical or physical clock,
either partially or totally ordered, in an optional and orthogonal way, while
keeping models themselves timeless.
4) It simultaneously generalizes from memory to general abstractions, from
sequential to both sequential and concurrent specifications, either total or
partial, and beyond serial executions.
5) Defines basic consistency axioms: monotonic visibility, local visibility,
and closed past. These are satisfied by what we call serial consistency, the
model explained by serial execution, but can be used as building blocks for
novel consistency models with histories not explainable by any serial
execution.
6) Revisits classic pipelined and causal consistency, revealing weaknesses
in previous axiomatic models for PRAM and causal memory.
7) Introduces convergence and arbitration as safety properties for consistency
models, departing from the use of eventual consistency, which conflates
safety and liveness.
8) Formulates and proves the CLAM theorem for asynchronous distributed systems,
which essentially says that any wait-free implementation of practically all
data abstractions cannot simultaneously satisfy Closed past, Local visibility,
Arbitration, and Monotonic visibility.  From it results the CAL trilemma:
which of Closed past, Arbitration, or Local visibility to forgo. While
technically incomparable, the CLAM theorem is practically stronger than the
CAP theorem, as it allows reasoning about the design space and possible
tradeoffs in highly available partition tolerant systems.
9) Presents a new taxonomy of consistency models that shows how the introduced
axioms can be combined, clearly showing the tradeoffs resulting from the CAL trilemma.
\end{abstract}

\begin{CCSXML}
<ccs2012>
   <concept>
       <concept_id>10010147.10010919</concept_id>
       <concept_desc>Computing methodologies~Distributed computing methodologies</concept_desc>
       <concept_significance>300</concept_significance>
       </concept>
   <concept>
       <concept_id>10003752.10003809.10010172</concept_id>
       <concept_desc>Theory of computation~Distributed algorithms</concept_desc>
       <concept_significance>300</concept_significance>
       </concept>
   <concept>
       <concept_id>10010520.10010575.10010578</concept_id>
       <concept_desc>Computer systems organization~Availability</concept_desc>
       <concept_significance>300</concept_significance>
       </concept>
 </ccs2012>
\end{CCSXML}

\ccsdesc[300]{Computing methodologies~Distributed computing methodologies}
\ccsdesc[300]{Theory of computation~Distributed algorithms}
\ccsdesc[300]{Computer systems organization~Availability}

\keywords{Consistency models, Causal consistency, Convergence, Availability, CAP theorem}

\maketitle

\section{Introduction}

A consistency model describes what are the possible outcomes when several
concurrent processes use a shared abstraction. For a long time, the focus was
on memory, a set of registers with read and write operations, for which most
consistency models where developed.

When multiprogramming started in uniprocessors, by operating systems
time-slicing processor time by preemptive multitasking, there was even no
discussion of consistency models. Memory acted as it ``obviously'' should:
if a process wrote to some memory location, the value would be immediately
available to be read by any other process. Each write and read would be
atomic, and whatever happened would be equivalent to the actions (read or
write) by all processes being interleaved in some global sequence, respecting
the program order at each process. This guarantee is
what~\textcite{DBLP:journals/tc/Lamport79} called \emph{Sequential
Consistency}. It became the goal to be achieved in multiprocessors or
distributed systems, to make programmer life not too hard;
concurrent programming is hard enough even under this intuitive model.

For some time this guarantee would hold (e.g., a simple bus-based
multiprocessor with no per-processor cache), but as hardware became more
sophisticated (e.g., with cache hierarchies and write buffers)
guaranteeing sequential consistency would hinder performance. Memory started
behaving in possibly non-intuitive ways. This lead to the need to describe
what could possible happen, and the so called \emph{weak consistency models} arose.

Traditional models and frameworks for consistency
models~\cite{DBLP:journals/computer/AdveG96}, focused on memory, tended to
use an operational approach, making use of implementation related terms such
as cache or write buffer.
They describe what is forbidden or allowed to occur, e.g., stating what
relaxations of different orders between reads and writes are possible, or if a
write must appear atomic, with descriptions such as ``whether they allow a
read to return the value of another processor’s write before all cached copies
of the accessed location receive the invalidation or update messages generated
by the write''~\cite[p. 11]{DBLP:journals/computer/AdveG96}.
Operational approaches make it difficult to compare different models, or
to reason about the design space for new models. As an example,
\textcite{DBLP:conf/isca/Gharachorloo98} provides an operational definition
for \emph{Processor Consistency} which allows outcomes slightly different 
than the original by~\textcite{goodman1991cache}, as discussed
by~\textcite{DBLP:conf/spaa/AhamadBJKN93}.

The alternative axiomatic approach~\cite{DBLP:journals/toplas/Misra86}, more
formal and declarative, does not resort to implementation concepts, but
defines axioms regarding what possible histories are allowed for a given
model. A history is explained by some existentially assumed abstract execution.
Some frameworks~\cite{DBLP:conf/spaa/AhamadBJKN93,DBLP:conf/europar/BatallerB97}
use the concept of \emph{serial view}: a serialization of a subset of
operations satisfying some ordering constraints, representing a serial
execution of those operations. Orderings are defined based on program order
and a writes-to order.

Based on serial views, \textcite{DBLP:journals/jacm/SteinkeN04} defined a
framework which allows different consistency models to be obtained as a
combination of several properties, and be ordered by strength in a lattice of
models. The framework allows describing all classic models:
Slow~\cite{DBLP:conf/icdcs/HuttoA90}, Cache~\cite{goodman1991cache},
PRAM~\cite{lipton1988pram}, Processor~\cite{goodman1991cache},
Causal~\cite{DBLP:journals/dc/AhamadNBKH95}, and
Sequential~\cite{DBLP:journals/tc/Lamport79} consistency. In all these models,
each process explains all local reads from some serial execution respecting
program order for local operations, and some order of writes from other
processes. The weakest such model is
what~\textcite{DBLP:conf/europar/BatallerB97} call \emph{Local Consistency}.

The first dimension of generalization was defining consistency models for
higher level abstractions, concurrent shared objects, where memory is just a
special case. The most well-known specific model is
Linearizability~\cite{DBLP:journals/toplas/HerlihyW90}. Defining a framework
to allow different consistency models while going beyond memory brings some
differences. The main one is that a query is in general explained by several
updates, while in a serial view for memory a read is explained by a single write
(the last one before the read). Interestingly, going beyond memory, as
\textcite{DBLP:conf/ppopp/PerrinMJ16} did for causal consistency, revealed an
existing problem in memory-specific models and the above framework using
serial views: they use an existentially assumed writes-to relation, where each
read is related to a single write, but the last write in a serial view may be
another one. More general frameworks beyond memory (such as ours) solve this problem.

Local consistency as the weakest model in the lattice of models is motivated
by the intuition that any process must be able to explain what it reads by
some serial execution that contains all operations from the process itself in
program order. Unfortunately, while this assumption is reasonable for
shared-memory multiprocessors, the same cannot be said for distributed
systems.  Programming a distributed system, through explicit message passing,
and explicit decisions of what to store at each node, allows many
possibilities in terms of defining what should be the set of possible outcomes
when processes use an abstraction.

Aiming for scalability of queries and tolerance to a server being down led to
replicating data over several servers. But strong models such as
linearizability cannot be ensured while guaranteeing availability in the
presence of a network partition. This became known as the CAP
conjecture~\cite{DBLP:conf/hotos/FoxB99} and
theorem~\cite{DBLP:journals/sigact/GilbertL02}).
Achieving availability is a motivation for optimistic
replication~\cite{DBLP:journals/csur/SaitoS05}, where weaker
guarantees are provided, to ensure availability even under network partitions.
Bayou~\cite{DBLP:conf/sosp/TerryTPDSH95} is an example of this approach.
However, it allows histories that cannot be explained by any serial execution
with the local operations in program order, even if the remote effects are
interleaved in any order. I.e., it does not respect the generalization of
local consistency to arbitrary data types -- what we call \emph{Serial Consistency}.
Other more modern examples are OpSets~\cite{DBLP:journals/corr/abs-1805-04263}
or ECROs~\cite{DBLP:journals/pacmpl/PorreFPB21}.
No classic consistency model accepts their behaviors, and almost no framework
for consistency models beyond memory is able to express such outcomes.

In the above examples, as is usually the case, the shared data abstraction is
defined through a sequential specification. This is appealing but mat be
limiting. An important principled approach to obtain highly-available shared
abstractions are the so called CRDTs -- \emph{Conflict-free Replicated Data
Types}~\cite{DBLP:conf/sss/ShapiroPBZ11}. They allow defining shared objects,
such as counters, sets, or lists, where queries can be answered immediately, from
the local state, and updates propagated asynchronously, opportunistically, as
desired or network conditions allow.
While network partitions may have a transient impact on state growth in the
operation-based approach~\cite{10.1145/3695249}, state-based CRDTs are
completely immune to network partitions. CRDTs are important in practice and
have been used successfully in the industry. However, several CRDTs, of which
the Observed-Remove Set (ORSet) and the Multi-Value Register (MVRegister) are
well known examples, do not have sequential specifications. They rely on
concurrent specifications that rely on the causal past. An MVRegister
corresponds to the behavior of the seminal
Dynamo~\cite{DBLP:conf/sosp/DeCandiaHJKLPSVV07} from Amazon, where a read returns
the set of the most recent concurrent writes.

\subsection{Scope: distributed systems and distributed histories}

Consistency models can be roughly divided into those for shared
memory multiprocessors and those for distributed systems (even if for
distributed shared memory).

Memory consistency models for shared-memory multiprocessors aim to address
architecture specific features (e.g., x86-TSO~\cite{DBLP:journals/cacm/SewellSONM10})
or what are the possible behaviors of concurrent programs in some language (e.g.,
C++~\cite{DBLP:conf/pldi/BoehmA08,DBLP:conf/popl/BattyOSSW11}).
Either way, they face a combination of aspects which together bring enormous
complexity.

Models for high level programming languages face the problem of how to define
compiler optimizations that, not only preserve the semantics of a sequential
program, but do not cause havoc when several processes are involved. But even
without compiler optimizations, and considering assembly language programs to
be directly executed, there is still the issue of data dependencies and
control dependencies, in addition to processors features motivated by
performance: instruction reordering, speculative execution, local cache
hierarchies, and write buffers.

One of the most difficult and undesirable phenomenon is that of
out-of-thin-air (OOTA) results: values that are only explained by a circular
reasoning, which would also explain any other value.
Consider the three multi-threaded programs in Figure~\ref{fig:programs-oota}, 
in which two threads read and write shared variables x and y, and also use local
variables l1 and l2.
Under a weak model, the program in Figure~\ref{fig:oota-possible} allows both
reads to return 42; this is explained by reordering the write to x before the
read of y. But a similar result in Figure~\ref{fig:oota-invalid} can only be
explained by a circular reasoning: the value each thread reads, and uses for
the write, depends on what the other thread wrote, but no thread writes any
particular value, making 42 (or any other value) come from nowhere.

\newcommand\threads[2]
{
  \begin{tabular}{lcl}
  \begin{tabular}{l}
    #1
  \end{tabular}
    & $\parallel$ &
  \begin{tabular}{l}
    #2
  \end{tabular}
  \end{tabular}
}

\begin{figure*}
  \small
  \begin{boxes}{3}
    \begin{tcolorbox}
      \threads{l1 = x; \\ y = l1;}{l2 = y; \\ x = 42;}
      \subcaption{l1 = l2 = 42 is possible}
      \label{fig:oota-possible}
    \end{tcolorbox}
    \begin{tcolorbox}
      \threads{l1 = x; \\ y = l1;}{l2 = y; \\ x = l2;}
      \subcaption{l1 = l2 = 42 is OOTA}
      \label{fig:oota-invalid}
    \end{tcolorbox}
    \begin{tcolorbox}
      \threads{if (x) \\ \ \ \ \ y = 1;}{if (y) \\ \ \ \ \ x = 1;}
      \subcaption{x = y = 1 is OOTA}
      \label{fig:oota-sc-drf}
    \end{tcolorbox}
  \end{boxes}
  \caption{Three multi-threaded programs; x and y are shared variables, l1 and
  l2 local variables. (a) reads can return 42,
  explained by the reordering of x = 42 and l2 = y. (b) Any value, like 42,
  comes out-of-thin-air, only explained by a circular reasoning. (c) Assuming
  sequential consistency the program has no races; it should not accept the
  outcome where branches run.}
  \label{fig:programs-oota}
\end{figure*}

While these examples are easy to understand, it is extremely
difficult~\cite{DBLP:conf/pldi/BoehmD14} to define exactly what an OOTA result
is. OOTA results can determine which code branches execute, and which writes are
performed. In Figure~\ref{fig:oota-sc-drf}, any condition being true (1)
could only happen if any write occurs; but these can only occur if a condition
is true.
A goal of consistency models for programming languages has been to ensure
sequential consistency (SC) for data race free (DRF) programs
(DRF-SC)~\cite{DBLP:conf/isca/AdveH90}, i.e., if a program is DRF assuming SC,
then it will satisfy SC.
The program in Figure~\ref{fig:oota-sc-drf} is DRF assuming SC (no write
occurs) and so it should not produce the OOTA result x = y = 1, invalid under SC.

This article does not address shared-memory multiprocessors or concurrent
languages, but focuses on consistency models for distributed systems. In
these, the ``input'' to be checked for outcomes is not a program but a
distributed history. A distributed history can be thought of as the recording
of what operations, with respective results, each process has performed over
time. For traditional models (PRAM, causal memory, sequential consistency) a
distributed history has been defined as a collection of sequences of
operations, with the operations of each process totally ordered, but no
information about the relative order of operations from different processes.
Occasionally, time information is also included, as for linearizability.

Focusing on distributed histories sidesteps most difficult problems discussed
above. There is no program, no data dependencies, no control dependencies.
For example, the situations in Figure~\ref{fig:oota-possible} and
Figure~\ref{fig:oota-invalid} are indistinguishable: the history would only have
the values (e.g., 42) themselves, not where they came from.

Moreover, the history represents, for each process, the actual order in which
operations occurred in time. If we think of a program being optimized and
instructions being reordered, the history is the recording post any
transformation, even inside an hypothetical processor. The history represents
the order in which each process interacted with other components of the
distributed system. Consistency models should aim to identify 1) if the history
can be explained by some pattern of information propagation, discarding
``impossible histories'' (something which is often forgotten); 2) if the
history can be explained while respecting specific constraints from the
consistency model.

For accessing whether a history is physically possible, the central aspect is
the assumption of precedence in physical time between operations from the same
process, that the actual execution respects the program order: the next
operation by a process only starts, and can produce effects on the rest of the
system, after the previous operation has returned a result, which cannot
depend on those effects. This will reject the history when x = y = 42,
corresponding to both Figure~\ref{fig:oota-possible} and Figure~\ref{fig:oota-invalid}.
Programs frequently use query results to compute arguments to subsequent updates.
If histories as those were accepted, they would be allowing OOTA for common programs.
The assumption ensures that, whatever the (unknown) data dependencies in the
(unknown) program that produced the history, it can be explained by physical
propagation of information, without circular reasoning, without the danger
that it can only be explained by OOTA results.
Consistency models themselves also tend to be conservative. An example is causal
consistency, which deals with potential causality, ensuring that any actual
dependency is met by an accepted history.

While distributed histories are simpler to deal with than concurrent programs,
models for distributed systems focus on something that is not an issue for
shared-memory models: convergence. These typically assume that there is ``a''
memory and that eventually each location in ``the'' memory ends up with some
value. This can be seen in the abstract machine for
x86-TS~\cite{DBLP:journals/cacm/SewellSONM10}: values to be stored go through
write buffers but end up in ``the'' shared memory.

Consistency models for distributed systems are suitable to systems in which
each location is replicated in more than one place (from one replica
per process, in peer-to-peer architectures, to one replica per server, with
processes being clients to those servers). In this setting it is possible that
replicas diverge forever, if nothing is done to prevent it. Traditional models
for distributed systems weaker than sequential consistency (such as PRAM and
causal memory) do not ensure convergence.

\subsection{Contributions}

This article presents a unifying framework and axioms for consistency
models for asynchronous distributed systems.
It generalizes current frameworks to address simultaneously the aspects
described above (going beyond memory, going beyond sequential specifications,
going beyond histories that can be explained by serial executions), as well as
other relevant issues, such as how to deal with time and how convergence can
be achieved.
It makes several contributions spanning many different but related aspects which,
together, advance significantly the state of the art in consistency models.

\paragraph{Novel framework} Introduces a new framework for
consistency models, timeless (no notion of ``real time'') and abstract
(independent from any implementation details), based on the combination of a
global visibility relation and per-process serializations. While each has been
individually used in previous frameworks, their combination in our
framework is a novel and powerful feature.
Visibility is defined in a way to avoid notational clutter and allow elegant
formulations of constraints, unlike in some previous approaches (e.g.,
\cite{DBLP:conf/popl/BurckhardtGYZ14}), which end up notationally heavy.
Here, the essential concepts and differences become clearly visible.

\paragraph{Time and consistency models} Clarifies the role of time and
happens-before, distinguishing between physical, messaging and semantic
happens-before (the last being the relevant one for consistency models). Shows
how knowledge about time, through logical or physical clocks, either totally or
partially ordered, can be used to
prune the space of candidate visibility relations, to define \emph{time valid
executions}.  This allows an orthogonal combination of time constraints with
timeless consistency models.

\paragraph{Physical realizability axiom}
While the framework itself is timeless, it includes a physical realizability
axiom, derived from the precedence in physical time between operations from
the same process. This axiom is missing from classic axiomatic specifications
such as PRAM and causal memory, which allow histories that can only be explained
by circular reasoning and out-of-thin-air results. Our axiom overcomes the problem
in a unified way, freeing specific models from having to introduce specific ad
hoc axioms. Importantly, generalizing frameworks where happens-before is a
strict partial order, the axiom does not overconstrain possible visibility, and
allows describing synchronization-oriented abstractions such as barriers,
through partial concurrent specifications, where visibility contains
cycles and happens-before is a preorder.

\paragraph{Dimensions of expressivity} The framework generalizes the
representable consistency models simultaneously in three dimensions: 1) from
read/write registers to general abstractions; 2) allows both sequential and
concurrent specifications; 3) allows describing models with either final
executions, or that involve (conceptual) re-execution, where the past may be
``rewritten''.  Most current approaches either: consider concurrent specifications
in the context of eventual
consistency~\cite{DBLP:journals/ftpl/Burckhardt14,DBLP:conf/popl/BurckhardtGYZ14},
resorting to visibility and arbitration (a global total order), without
per-process serializations; or resort only to per-process serializations,
without visibility, being limited to sequential
specifications~\cite{DBLP:conf/ppopp/PerrinMJ16}. Most consistency
frameworks also do not allow describing abstractions involving (conceptual)
re-execution, such as OpSets~\cite{DBLP:journals/corr/abs-1805-04263} or
ECROs~\cite{DBLP:journals/pacmpl/PorreFPB21}.

\paragraph{Consistency axioms} Defines three basic consistency axioms
(\emph{monotonic visibility}, \emph{local visibility}, and \emph{closed
past}), which classic consistency models obey, to avoid being too weak or able to
produce too strange outcomes. The first two are generalizations of two of the
four classic session guarantees~\cite{DBLP:conf/pdis/TerryDPSTW94}, which are
more fundamental than the other two. The third allows distinguishing models in
which operations are final, from those that have (conceptual) re-execution of
operations.
Introduces \emph{Serial Consistency}, stronger than the combination of the
three basic axioms, being the bottom of the taxonomy of classic consistency
models, in which executions can be explained by serial executions, playing the
same role as \emph{Local Consistency} in the frameworks by
\textcite{DBLP:conf/europar/BatallerB97, DBLP:journals/jacm/SteinkeN04}, here
generalized beyond serial views.
Different combinations of a subset of the basic axioms allow
novel models to be obtained, with histories that cannot be explained by serial
executions. This allows a precise characterization of some existing
consistency models (such as \emph{Consistent
Prefix}~\cite{DBLP:journals/cacm/Terry13}), which have been up to now
described in an informal way, and allow them to be compared in a precise way.

\paragraph{Pipelined and Causal consistency}
Adequately defines Pipelined Consistency as a generalization of PRAM while
addressing a problem in the PRAM axiomatic
specification~\cite{DBLP:conf/spaa/AhamadBJKN93} which has been reused in many
places. That specification is missing a relevant axiom, being weaker than the
original operational definition~\cite{lipton1988pram} of PRAM. Importantly, it
allows reads resulting from a causality loop, not possible by the original
definition.
As far as we are aware, Pipelined Consistency as defined by our framework
(including well-formedness given by physical realizability) is the first
axiomatic definition which adequately generalizes PRAM to arbitrary data
types, while corresponding to the original operational definition of PRAM when
instantiated for memory. Our specification of Causal Consistency, when instantiated
for memory, also corrects a similar problem in the original specification of
Causal Memory~\cite{DBLP:journals/dc/AhamadNBKH95}, which does not satisfy the
stated intention/goal, namely being stronger than PRAM.  (It is stronger than
the axiomatic specification of PRAM, but not the original definition of PRAM,
also allowing physically impossible causality loops.)

Generalizes the definition of causal consistency, beyond sequential
specifications, as \textcite{DBLP:conf/ppopp/PerrinMJ16} generalized it beyond
memory, keeping it compatible with the intention of the original proposal for
memory, common understanding and current
taxonomies~\cite{DBLP:journals/jacm/SteinkeN04,DBLP:journals/csur/ViottiV16}.

Singles out a common base criterion, \emph{Causality}, which can be used to
obtain variants of causal consistency, by combining it with other criteria,
e.g., (classic) causal consistency, when combined with \emph{Serial
Consistency} and \emph{Causal Prefix Consistency} when combined with
\emph{Prefix Consistency}, also introduced.

\paragraph{Convergence as a safety property} Introduces convergence and
arbitration as safety properties for consistency models, through the
definition of \emph{Convergence} and \emph{Arbitration} in the
framework, allowing them to be combined with other constraints in an
orthogonal way. Current taxonomies~\cite{DBLP:journals/csur/ViottiV16} lack
these properties, including only \emph{Eventual Consistency}, which conflates
safety and liveness, something not useful for finite histories.
As an example, it is now possible to state that both causal consistency and
consistent prefix satisfy monotonic visibility and closed past, while the
former satisfies local visibility but not arbitration, and the latter
satisfies arbitration but not local visibility.


\paragraph{New trilemma for wait-free systems}
Formulates and proves the CLAM theorem for highly available partition tolerant
distributed systems (AP systems, from the
CAP theorem~\cite{DBLP:conf/hotos/FoxB99,DBLP:journals/sigact/GilbertL02}).
Essentially, a system which remains available even under partitions, i.e.,
with a wait-free implementation, cannot provide simultaneously
\emph{Closed past}, \emph{Local visibility}, \emph{Arbitration} and
\emph{Monotonic visibility}. As ``M'' is too fundamental, this results in
a new trilemma (CAL) about which of the other three properties to forgo.
Provides examples of how apparently disparate consistency models
or systems, such as Strong Update Consistency~\cite{DBLP:conf/ipps/PerrinMJ15}
or Consistent Prefix~\cite{DBLP:journals/cacm/Terry13} can be compared by how
they position under this trilemma, by forgoing one of these properties. The
trilemma clarifies that wait-free systems providing \emph{Convergent Causal
Consistency}, i.e., causal consistency with convergence, are not ruled out.
The dilemma of not being able to simultaneously achieve causal
consistency and causal convergence for wait-free
systems~\cite{DBLP:conf/ppopp/PerrinMJ16} does not apply to concurrent
specifications, nor to all sequential data types, only to those which rely on
arbitration for convergence.

\paragraph{Taxonomy} Presents a taxonomy of the main timeless consistency
models, using the new framework, according to the different criteria for
visibility and serializations. The taxonomy illustrates how the new framework
and newly introduced axioms allow very different models to be compared,
as different combination of basic axioms, something not possible under
current frameworks. With the exception of some use of session guarantees,
current taxonomies simply place most models above ``weak'' consistency, and do
not include models not satisfying serial consistency.

\subsection{Notation}

A binary relation is a set of pairs. For relation $\rel R$, we use $a \rel R b$ for
$(a, b) \in {\rel R}$. Given relation $\rel R$ and set $S$, we use
$\rel R y \defeq \{ x | x \rel R y\}$ for the set of 
values related to $y$ and
${\rel R}\domainrestrict{S} \defeq \{ x \rel R y | x \in S \land y \in S\}$
for the restriction of relation $\rel R$ to set $S$.
We use $\rel R^+$ for the transitive closure and $\rel R^=$ for
the reflexive closure of a relation $\rel R$.
We use $\max$ for the maximals of a partially ordered set.
We use standard notation for set or list comprehensions, sometimes together
with pattern matching. e.g., $\{P[x] \in S\}$ as shorthand for
$\{x | x \in S \land x \text{ matches } P\}$.
We may also use patterns or relations in quantifications: $\forall P[x] \rel R y$
as shorthand for
$\forall x \cdot x \text{ matches } P \land x \rel R y$.

\section{The Framework}

\subsection{Events, histories and executions}

A local history of a process is the sequence of events, corresponding to
the operations executed sequentially by the process. There is no notion of time,
or operations taking time, or the need to distinguish invocation and response
events: an event corresponds to the full execution of an operation, possibly
returning a result. The only assumption is that an operation returns the
result before the next operation of the same process starts being executed.
A distributed history is a collection of local histories, with no information
about the relative order of operations from different processes.

In Section~\ref{sec:time} we show how information about time, either physical or
logical, implementation-level events such as message sending, and start/finish
events of an operation can be abstracted away from consistency models, and
used as an orthogonal ingredient for optional time constraints, while keeping
histories and consistency models timeless.

A process is the abstraction for a context of sequential execution of a
series of operations, and can correspond to a process, a thread, a node or
replica, or a session of a client of a distributed data store service,
possibly replicated in several nodes. Implementation details are irrelevant
for the model. The model is about guarantees regarding what the process, as
client of the shared abstraction, can expect.

Events are written as, e.g., $\af{deq}^o_i(q){:}5$. In this case an event
we can refer to by unique identifier $o$, performed by process $i$, was a dequeue
on some object $q$, and returned 5. We use $\pr(o)$, $\op(o)$, and $\re(o)$, for
the process, operation and result of event $o$, respectively, in this example
$i$, $\af{deq}(q)$, and $5$. Process, identifier, or result of an operation
are hidden when irrelevant or understood from context. We use $o_i$ to
denote an event performed by process $i$.

Objects are mostly used for exemplification purposes, as they are not relevant
for most consistency models, in particular those discussed here.
When operations are defined over objects or locations, consistency models
where these play a role, such as Slow, Cache, or Processor consistency, can be
easily described in this framework, using $\obj(o)$ in constraints to
denote the object accessed by the operation. We refrain to present such models
with per-object semantics, to avoid cluttering the taxonomy, as no significant
insight would be gained. They form a partial order of models between
what we define below as \emph{Serial Consistency} and the classic
\emph{Sequential Consistency}.

A distributed history $(H, \po)$ is a partially ordered set of events $H$,
with events from the same process being totally ordered according with the
respective process sequence, by the so called \emph{program order} $\po$.

\begin{definition}[Program order]
Events $a$ and $b$ are related by program order, written $a \po b$,
if and only if they are from the same process, with $a$ before $b$ in the
local history.
\end{definition}

Depending on the shared abstraction and consistency model, the set of possible
histories are those which can be explained by the existence of a
\emph{visibility} relation and a \emph{serialization} relation per process, that:
\begin{itemize}
  \item are well-formed as described below;
  \item explain the result of each operation, according with the data type semantics;
  \item comply with some constraints, depending on the consistency model.
\end{itemize}
Towards this we define an abstract execution, or simply, execution.

\begin{definition}[Execution]
An execution $(H, \po, \vis, \{\ser_i\}_{i \in \ids})$, for a given history $H$
with program order $\po$, is the enhancement of ($H, \po$) with a
visibility relation $\vis$, and a set of serialization relations $\ser_i$, for
each process $i$ in the set of processes $\ids$.
\end{definition}

\subsection{Visibility}

Visibility is an irreflexive binary relation describing the possibility of effects
between events.

\begin{definition}[Visibility]
Operation event $a$ is visible to $b$, written $a \vis b$, if the effect of
$a$ is visible to the process performing $b$, when performing $b$.
\end{definition}

This means that $a$ can be considered in the computation of the outcome of
$b$, if relevant, according with the data type semantics.
If $a$ is not visible to $b$, from the point of view of $b$, it is as if $a$
has not been performed, even if it has occurred much in the past, even in the
same process (something undesirable, which does not occur in classic
consistency models).

We do not restrict operations to act on single objects (which would not allow,
e.g., an operation which performs the dot product of two vectors), nor we
define visibility only between operations on the same object, or even
operations on the same data type. An operation $a$ being visible to $b$ just
means that it is available to being accessed by the semantic function for $b$
to derive its result, if it is a query. The semantic function can simply
ignore operations on other objects, or operations of other data types. As the
main consistency models are not per-object based, only for those which are
does object-based notation need to be introduced, being otherwise a
distraction.
Visibility as defined here mimics information propagation between
processes, relevant to the abstraction, regardless of the information being
relevant to specific operations.

We also do not need to distinguish updates and queries. Pure updates have no
result, and can be considered to return the $()$ value of the unit type, with
a semantic function which ignores visible operations. A pure query has no
effect, but it can still be considered visible to other operations (and
ignored by their semantic functions) to meet constraints for visibility that
can be kept simple and elegant.
Two examples are requiring that visibility includes the program order for each
process, and that visibility is a total order of all events in the history.
More restrictive definitions of visibility (e.g.,
from~\cite{DBLP:conf/popl/BurckhardtGYZ14}) lead to more clutter and less
elegant formulations for consistency models.

To define a condition on what possible visibility relations can be assumed,
for a given history, it is useful to define semantic happens-before, or simply
\emph{happens-before}, between operations:

\begin{definition}[Happens-before]
  Operation event $a$ happens-before $b$, written $a \hb b$, if and
  only if: 1) $a \po b$, or 2) $a \vis b$, or 3) there is some event $c$ such
  that $a \hb c$ and $c \hb b$.
\end{definition}

This can be more succinctly expressed using the transitive closure:
\[
{\hb} \defeq (\po \union \vis)^+
\]
and it is analogous to the classic \emph{happened before}
by~\textcite{DBLP:journals/cacm/Lamport78} defined for messages. In
Section~\ref{sec:time} we discuss several variants of happens-before.

\begin{definition}[Physical realizability axiom]
For some distributed history $(H, \po)$, a visibility
relation $\vis$ satisfies physical realizability if and only if:
  \[
    a \hb b \implies b \not\po a
  \tag{$\axiom{vis}{pr}$}
  \]
\end{definition}

The reason is that visibility must be implemented by some information
propagation mechanism, such as message passing, limited by the physical
happens-before, given by light-cones (which we detail below). Effects of an
operation $a$, or of other operations that see the effects of $a$, or of any
other operation that is subsequent or sees the effects of any such operation
cannot be observed by an operation $b$ that has computed its result before $a$
started executing.  Doing so would create a cycle bringing information from
the future.

Some examples are given in Figure~\ref{fig:vis-pr}, for two processes. In cases
(a) and (b), the visibility does not satisfy physical realizability, as
$b$ occurs before $a$ in the same process, while being transitively reachable
from $a$ by following program order or visibility edges.
The case in Figure~\ref{fig:vis-pr-c} satisfies physical realizability. There is
a cycle but only involving visibility between operations from different
processes, not program order.
In defining the physical realizability axiom we consider two assumptions about
operations and their semantics in our framework:
\begin{itemize}
  \item An operation is the basic unit of computation, having atomic
    visibility of its effects. (The A in ACID, as all-or-nothing, but nothing
    else) An operation $a$ cannot be some long running computation which is
    producing some effects for operation $b$ and some different effects for
    operation $c$. The implementation must somehow ensure it. If it is a
    low level abstraction such as memory, the implementation must still ensure
    that a write to a location is either seen or not as a unit, preventing word-tearing
    if a location involves to two machine words. If it is a higher-level
    abstraction, such as a stack with a ``double push'', it must ensure that
    either two elements are seen pushed or none.
  \item If an operation $a$ is visible to $b$, then the effect of $b$
    \emph{may} depend on it, being different if $a$ was not visible.
    Therefore we cannot let $b$ be visible to an operation $c$ which has
    completed before $a$ started.
\end{itemize}

\begin{figure*}
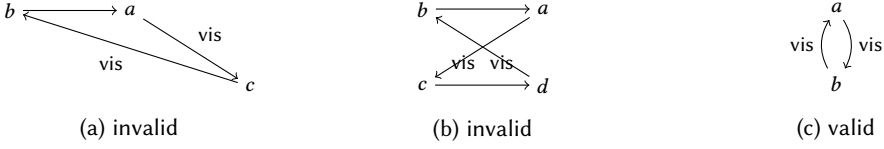

  \newlength\h
  \setlength\h{16mm}
  \footnotesize
  \begin{boxes}{3}
    \begin{tcolorbox}
      \tikz \graph [grow right=\h] {
        b/$b$ -> a/$a$;
        c/$c$ [x=2\h];
        a ->["$\vis$"] c;
        c ->["$\vis$"] b;
};
      \subcaption{invalid}
      \label{fig:vis-pr-a}
    \end{tcolorbox}
    \begin{tcolorbox}
      \tikz \graph [grow right=\h] {
        b/$b$ -> a/$a$;
        c/$c$ -> d/$d$;
        a ->["$\vis$"] c;
        d ->["$\vis$"] b;
};
      \subcaption{invalid}
      \label{fig:vis-pr-b}
    \end{tcolorbox}
    \begin{tcolorbox}
      \tikz \graph [grow right=\h] {
        a/$a$;
        b/$b$;
        a ->["$\vis$", bend left] b;
        b ->["$\vis$", bend left] a;
};
      \subcaption{valid}
      \label{fig:vis-pr-c}
    \end{tcolorbox}
  \end{boxes}
  \caption{Operations from each process from left to right. Well-formedness of
  visibility relations according to the physical realizability axiom.}
  \label{fig:vis-pr}
\end{figure*}

Although the model is timeless, with histories containing no time
information this axiom uses the knowledge that if $a$ precedes $b$ in some
process, the completion of $a$ precedes the start of $b$ in physical time.
Histories for linearizability or in some frameworks with time
information~\cite{DBLP:journals/ftpl/Burckhardt14} can define a global
returns-before relating operations from different processes, but we are not aware
of any timeless framework having the physical realizability axiom.
As we will see, axiomatic models weaker than sequential consistency lacking this
axiom will allow physically impossible histories.

This axiom allows happens-before not to be a strict partial
order but a preorder.  A more restrictive axiom, sometimes used for models
involving happens-before, such as causal consistency, prevents any
cycle involving program order or visibility:
\[
  \po \union \vis \text{ is acyclic}
  \tag{$\axiom{vis}{no-cycles}$}
\]
This alternative axiom makes happens-before between operations a strict partial
order. It can be used to constrain visibility in almost all cases.
Histories resulting from shared abstractions with wait-free implementations can
be explained while satisfying it. It can even be used by abstractions with
partial sequential specifications whose implementations resort to blocking.
An example is a queue with a partial sequential
specification~\cite[p.472]{DBLP:journals/toplas/HerlihyW90},
where a dequeue is undefined for an empty queue, and where the implementation
blocks the dequeue operation until an enqueue is invoked.

But preventing all cycles, as usually done, is overly restrictive according to what is
physically possible, and does not allow histories resulting from
synchronization-oriented abstractions with blocking implementations, such
as barrier-like abstractions. Two examples are presented in
Section.~\ref{sec:semantics}, a Barrier and a Consensus abstraction, both
defined resorting to partial concurrent specifications. The physical realizability
axiom allows such abstractions.
Preventing acyclic happens-before through \axiom{vis}{no-cycles} is,
therefore, an optional axiom which we may want some implementations to
satisfy, but not compulsory as physical realizability.

\subsection{Serializations}

If data types only with commutative update operations are involved, visibility is
enough, and there is no need for serializations. This is also the case for
CRDTs~\cite{DBLP:conf/sss/ShapiroPBZ11}, that can resolve conflicts without the
need for arbitration, as discussed below. But in general, when sequential
specifications are used, total orders are required.

We use the term \emph{serialization} as used, e.g., by
\textcite{DBLP:conf/spaa/AhamadBJKN93,DBLP:journals/dc/AhamadNBKH95},
for totally ordering a set of events into a sequence, possibly is some
``strange'' way. This term is more general than
\emph{linearization}~\cite{DBLP:journals/toplas/HerlihyW90}, which refers to a
serialization which respects a precedence partial-order according with
``real-time'' ($a$ precedes $b$ if it is before $b$, not overlapping $b$ in real
time). Not only our framework for consistency models is timeless (with time
constraints as an optional orthogonal ingredient), without operations taking
time or overlapping in time, but we need to allow serializations that
contradict program order.

The purpose of a serialization is to be used in the semantic function, for
sequential specifications, which determines the result of query operations. It
defines the order in which effects of visible operations are applied, starting
from the initial state, to obtain the current state, observable by some query
operation.
If operation $a$ is visible to operation $b_i$, at process $i$,
then its effect has taken place at $i$ when $b_i$ executes; therefore, $a$
must be serialized before $b_i$ in $\ser_i$.

\begin{definition}[Serialization of visibility axiom]
For any execution $(H, \po, \vis, \{\ser_i\}_{i \in \ids})$, serialization
$\ser_i$ for process $i$ is a strict total order on the events in $H$ that must
respect, for any event $a$ and event $b_i$ by process $i$:
\[
  a \vis b_i \implies a \ser_i b_i
  \tag{$\axiom{ser}{vis}$}
\]
\end{definition}

Including all events in the history in each serialization allows elegant
formulations (e.g., of sequential consistency or arbitration, see below).
Serializations reflect the program order of the operations issued by
the process itself, unless the basic \emph{local visibility}, described below,
is not met.
Visible operations from another process may be serialized
not in program order, depending on the consistency model.
This is the case when pipelined consistency is not satisfied (see below).


\subsection{Well-formed executions}

An execution which satisfies both the physical realizability and 
serialization of visibility axioms is said to be well-formed. Being
well-formed is a pre-requisite to submitting an execution to be checked as
valid, given a data abstraction semantics, as we describe below.

\begin{definition}[Well-formed execution]
An execution $(H, \po, \vis, \{\ser_i\}_{i \in \ids})$ is well-formed if and
only if it satisfies:
  \[ 
  \axiom{vis}{pr} \land
  \axiom{ser}{vis}
  \tag{$\axiom{exe}{wf}$}
  \]
\end{definition}

\subsection{Data abstraction semantics}
\label{sec:semantics}

Each data abstraction defines a semantic function $\sfR{}$ for what result an
operation must return in a given execution. It may use the process
serialization to order visible operations (sequential specifications) or just
the visibility (concurrent specifications, e.g., in CRDTs).

\begin{definition}[Result validity axiom]
An execution $(H, \po, \vis, \{\ser_i\}_{i \in \ids})$, satisfies result
validity if and only if, for each event $o_i$, by any process $i$:
\[
  \re(o_i) = \sfR{o_i,\vis,\ser_i}
  \tag{$\axiom{res}{val}$}
\]
\end{definition}

Two classes of specifications are relevant. The first is for standard
sequential data types, through a semantic function $\sfS{}$ that takes the
operation and a sequence of the previous operations, and applies them in
sequence, starting from the initial state. For these, serialization defines
the order over the visible operations:
\[
  \sfR{o_i,\vis,\ser_i} = \sfS{\op(o_i),
  [\op(x) | x \in {\ser_i}\domainrestrict{\vis\, o_i}]}
\]
(For notational convenience we use $\ser_i$ either as a relation between
events or a list of events, to write list comprehensions, as here.)
The possibility of restricting the operations in a serialization to the
visible subset is the main novel feature of our framework, when compared
with the more standard, that only uses serializations, with the implicit
assumption that every operation serialized before a given one is visible to
that operation (has taken effect before). This feature allows our framework to
be more expressive, being able to specify systems that (conceptually)
re-execute operations.

However, as we discuss below, such systems allow somewhat strange outcomes,
which are not explainable as serial executions consistent with the program,
for each process, even allowing them to be different for different processes.
Classic consistency models do not apply this visibility based restriction.
When described in this framework, they satisfy an axiom which restricts
admissible serializations, \emph{closed past}, which we introduce below.

The second class of specifications only use visibility, with no need for
serializations. They allow concurrent specifications, using a semantic
function $\sfC{}$ that can use visibility not only for the set of visible
events, but as defining the concurrency relation between them:
\[
  \sfR{o,\vis,\ser_i} = \sfC{o, \vis}
\]

These have started being used with CRDTs, that allow concurrent specifications
(such as the Observed-Remove Set or the Multi-Value Register), whose outcome
is not equivalent to any serialization. It can also be used for data types
containing only commutative operations (such as counters and grow-only sets),
with semantics equivalent to some permutation of the visible updates
(\emph{Principle of Permutation Equivalence}~\cite{DBLP:conf/wdag/BieniusaZPSBBD12}).

As we normally discuss single executions at a time, we write semantic
functions implicitly parameterized over $\vis$ and $\ser_i$, to remove
clutter.  Table~\ref{tab:concurrent-specs} presents specifications using only
visibility for some well known data types: counter, grow-only set,
observed-remove set (where a remove acts on visible adds, making an add
``win'' over concurrent removes of the same element), and the multi-value
register (where a read returns the set of the more recent concurrent
writes). These specifications show how all events, from all objects, of all
data types can be considered together in the history, and be potentially
visible, and how the semantic function that computes a result can pick only
the relevant ones, ignoring the others. E.g., when computing the value of a
counter $c$, only the increments to that counter, $\inc(c, v)$, for some $v$,
will be considered.

\begin{table}
  \caption{Some concurrent data type specifications based on visibility (and
  implicitly parameterized over it). Counter, GSet, ORSer and MVRegister are
  traditional in CRDTs, allowing wait-free implementations. Barrier and
  Consensus are synchronization objects requiring blocking by the
  implementations and cyclic visibility.}
  \label{tab:concurrent-specs}
  \auxfun{wait}
  \auxfun{epoch}
  \auxfun{propose}
  \begin{center}
    \begin{tabular}{@{}ll@{}}
  \toprule
    Data type & Semantic function \\
  \midrule
    \multirow2*{Counter} & $\sfC{\inc(c, v)} = ()$ \\
    & $\sfC{\val^o(c)} = \sum \{ v | \inc(c, v) \vis o \}$ \\
    \cmidrule{1-2}
    \multirow2*{GSet} & $\sfC{\add(s, v)} = ()$ \\
    & $\sfC{\val^o(s)} = \{ v | \add(s, v) \vis o \}$ \\
    \cmidrule{1-2}
    \multirow3*{ORSet} & $\sfC{\add(s, v)} = ()$ \\
     & $\sfC{\rmv(s, v)} = ()$ \\
     & $\sfC{\val^o(s)} = \{v | \exists \add^a(s,v) \vis o \cdot
       \not\exists \rmv^b(s, v) \cdot a \vis b \vis o \}$ \\
    \cmidrule{1-2}
    \multirow3*{MVRegister} & $\sfC{\af{wr}(s, v)} = ()$ \\
     &
       $\sfC{\af{rd}^o(r)} = \{ v | \exists \af{wr}^a(r,v) \vis o \cdot
       \not\exists \af{wr}^b(r, \_) \cdot a \vis b \vis o \}$ \\
    \cmidrule{1-2}
    \cmidrule{1-2}
    Barrier$[n]$ &
     $\begin{aligned}
       \sfC{\wait^o(b)} = {} & ()
       \text{ if } \setsize{S} = n \land \forall a\in S, b \po o \cdot a \not\vis b \\
       & \text{ with } S \defeq \{\wait(b) \in \max(\vis^= o)\}
     \end{aligned}$ \\
    \cmidrule{1-2}
    Consensus$[n, f]$ &
     $\begin{aligned}
       \sfC{\af{propose}^o(c, \_)} = {} & f(\{v | \propose(c, v) \in S\}) \\
       & \text{ if } \setsize{S} = n \land \forall a\in S, b \po o \cdot a \not\vis b \\
       & \text{ with } S \defeq \{\propose(b) \in \max(\vis^= o)\}
     \end{aligned}$ \\
  \bottomrule
  \end{tabular}
  \end{center}
\end{table}

The same figure also shows two partial concurrent specifications for two
synchronization-oriented data abstractions, that need blocking
implementations. These are
possible in our framework, explainable by a cyclic visibility. A wait on a
barrier $b$, parameterized by the number of processes $n$, can
return (with no result) when the set $S$ of waits in the most recent
concurrent operations, including itself (i.e., the waits in the
maximals under the reflexive closure of visibility) has $n$ elements, and all
those concurrent waits were not already visible to previous operations from
the process (i.e.,
``used'' for a previous wait). A consensus abstract data type is similar to a
barrier, but the operation $\af{propose}$ takes a value. It waits for $n$
concurrent proposals and returns the same value to each process (usually one
of the proposed values), computed by a function over the set of proposed values.

Table~\ref{tab:abstraction-examples} shows some classes of abstractions and
examples, according to: whether they have sequential or concurrent
specification, partial or total; if histories can be explained by acyclic
visibility or require visibility cycles; and the nature of implementation for
asynchronous distributed systems as either immediate, wait-free or blocking.
Immediate is a special case of wait-free, in which implementation uses local
state to immediately answer queries and then propagate effects asynchronously;
a wait-free implementation can
in general interact with other processes ``for some time'' and ``happen to see
their effects'' but does not depend on them to complete; a blocking
implementation depends on receiving messages to complete some operations, and
will not tolerate network partitions. (Note that some
abstractions that can allow wait-free implementations for shared-memory,
resorting to operations guaranteed to terminate in finite time, such as
Compare-and-Swap, will have to be blocking in asynchronous distributed
systems.)

\begin{table}
  \caption{Examples of shared abstractions, for different combinations regarding:
  sequential or concurrent specifications, whether total or partial;
  visibility (acyclic or cyclic);
  immediate, wait-free, or blocking implementation.}
  \label{tab:abstraction-examples}
\begin{tabular}{@{}lllll@{}}
  \toprule
  specification & visibility & implementation & examples \\
  \midrule
  sequential, total & acyclic & blocking & atomic objects \\
  sequential, partial & acyclic & blocking & bounded-buffer \\
  concurrent, total & acyclic & immediate & CRDTs \\
  concurrent, total & cyclic & wait-free & write\_snapshot~\cite{DBLP:journals/jacm/CastanedaRR18} \\
  concurrent, partial & cyclic & blocking & barrier, consensus \\
  \bottomrule
\end{tabular}
\end{table}

We will also present examples using other well known data types with
sequential specifications, like queues or stacks, whose specifications we omit.

\subsection{Valid executions}

A valid execution is one which is well-formed (the visibility and
serializations satisfy $\axiom{vis}{pr}$ and $\axiom{ser}{vis}$),
and when the result of each operation is valid according with the
specification.

\begin{definition}[Valid execution]
An execution $(H, \po, \vis, \{\ser_i\}_{i \in \ids})$ is valid if and only if
it satisfies:
  \[ 
  \axiom{exe}{wf} \land
  \axiom{res}{val}
  \tag{$\axiom{exe}{val}$}
  \]
\end{definition}

A valid execution satisfies a consistency model if the relations satisfy the
model constraints.  The differences between consistency models are which
constraints the visibility and serializations must satisfy. The stronger the
consistency model, the more constrained those relations.
For the remainder of the article we only consider valid executions,
unless otherwise stated.

\section{Time and consistency models}
\label{sec:time}

Most classic consistency models (Cache, Processor, Pipelined, Causal, and Sequential
Consistency) do not refer to time, and can be specified in this framework.
The main well known exception is
Linearizability~\cite{DBLP:journals/toplas/HerlihyW90}, a time-based model,
which has as input a totally ordered history, according to points in a global
background of ``real-time'', that define when each operation starts and
finishes, to assess overlap or precedence, and restrict acceptable histories.

Making histories involve a total order according to ``real-time'' is
problematic in two ways. First it causes fragility in gathering the input to
be checked for a given consistency model, if we want to check actual
distributed runs and not merely reason about made up histories as an omniscient
observer. Consider an example with a totally
ordered history with two operations, where process $i$ reads $42$ and
later process $j$ writes $42$. If submitting this history to be checked
for some consistency model, it is immediately rejected as impossible to have
happened, unless $i$ could ``predict the future''. A more mundane explanation
is that the input was gathered at different nodes with slightly unsynchronized
clocks, and the write was indeed before the read. The difficulty, and
fragility, of gathering a totally ordered history according to ``real-time''
is a good reason to keep time out of consistency models, and why most do so.
The second reason is that there is no such thing as a ``totally ordered
background of real-time''. Events in spacetime~\cite{Minkowski-spacetime} are
partially ordered according to light cones.

Consider the spacetime diagram in Figure~\ref{fig:light-cone}. Events in the
future light cone of $o$, such as $d$, can be influenced by $o$; events in the
past light cone of $o$, such as $a$ can influence $o$. On the other hand, $b$ and
$c$ can neither influence or be influenced by $o$; they form \emph{spacelike}
intervals and comparing their times is physically meaningless: in this diagram
one could think that $b$ is before $o$ in time, but for a different observer
$o$ can be before $b$. (\textcite{SalgadoMAA2016} allows experimenting with
spacetime diagrams to gain insight.)
Only the comparison of events forming a \emph{timelike} interval such as $a$
and $o$, or $o$ and $d$ remains invariant for all observers and is physically
meaningful. We have, therefore a partial order:
event $o$ can potentially influence $d$, and so \emph{physically
happens-before} $d$; events $b$ and $o$ cannot influence each other, being
unrelated in time.

  \begin{figure}
  \begin{center}
  \includegraphics[scale=0.6]{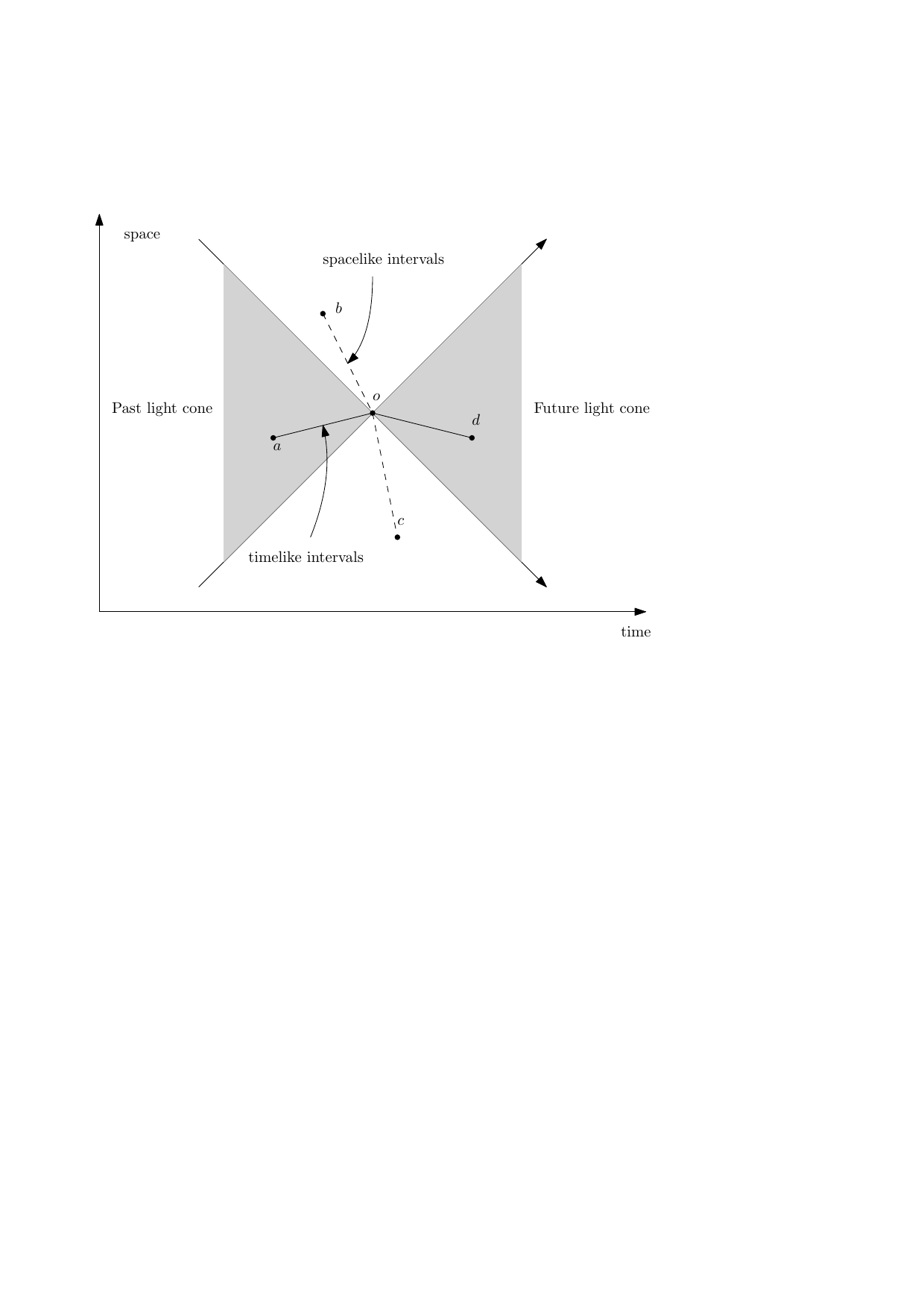}
  \end{center}
    \caption{Past and future light cones of event $o$, space-like and
    time-like intervals.}
    \label{fig:light-cone}
  \end{figure}

\begin{definition}[Physical happens-before]
Event $a$ physically happens-before $b$, written $a \phb b$, if and only if $b$ is
contained within the future light cone of $a$.
\end{definition}

So, $a \phb b$, if and only if it is possible that information about $a$
reaches $b$, i.e., $a$ can influence $b$.
A given distributed system, using some message passing, will exploit a subset of
what is physically possible. This is expressed by the classic \emph{happened
before} by~\textcite{DBLP:journals/cacm/Lamport78}, which we call here
\emph{messaging happens-before}.

\begin{definition}[Messaging happens-before]
  Event $a$ messaging happens-before $b$, written $a \mhb b$, if and
  only if: 1) $a \po b$, or 2) $a$ is a send and $b$ a receive of some
  message, or 3) there is some event $c$ such that $a \mhb c$ and $c \mhb b$.
\end{definition}

Messaging happens-before is a strict partial order contained in the physical
happens-before, i.e.,
\[
  {\mhb} \subset {\phb},
\]
and it involves implementation-level events (send, receive), which are hidden
from the consistency model of a shared abstraction.

Messaging happens-before is about potential causality, given a communication
pattern, but not about the communication that is relevant for the
shared abstraction being implemented. The relevant potential causality,
according with data type semantics, is given by what we have called
simply happens-before, defined similarly, but using visibility instead of
send/receive.

As we discuss in the next section, an operation ``takes some time'' during
which information can be sent or received, involving several messages.
Consider for now the special case of abstractions that allow wait-free
implementations with ``instantaneous'' operations, that use local knowledge and
trigger a single message (possibly broadcast).
Because, for efficiency reasons, implementations normally do not send the full
knowledge (transitive causal past) in messages, happens-before is a subset of
messaging happens-before, and so:
\[
  {\hb} \subseteq {\mhb} \subset {\phb}.
\]

To compare these concepts, consider the well-known example of \emph{causal
broadcast}, using a classic algorithm~\cite{DBLP:journals/tocs/BirmanSS91} that
buffers messages until causal predecessors arrive. The abstraction events
are $\af{cbcast}$ and $\af{deliver}$.
In terms of consistency model, if $a$ is a $\cbcast$ and $b$ a respective
$\deliver$, then $a \vis b$. Causal broadcast aims to ensure that:
\[
  \cbcast_i(a) \hb \cbcast_j(b) \implies \forall k \cdot \deliver_k(a) \po \deliver_k(b)
\]

The implementation events (hidden from the abstraction and consistency model)
are $\af{send}$ and $\af{receive}$. Consider Figure~\ref{fig:cbcast}, where
these four events are abbreviated as ``c'', ``d'', ``s'' and ``r'', respectively.

\begin{figure}
\begin{center}
\center\includegraphics[scale=0.9]{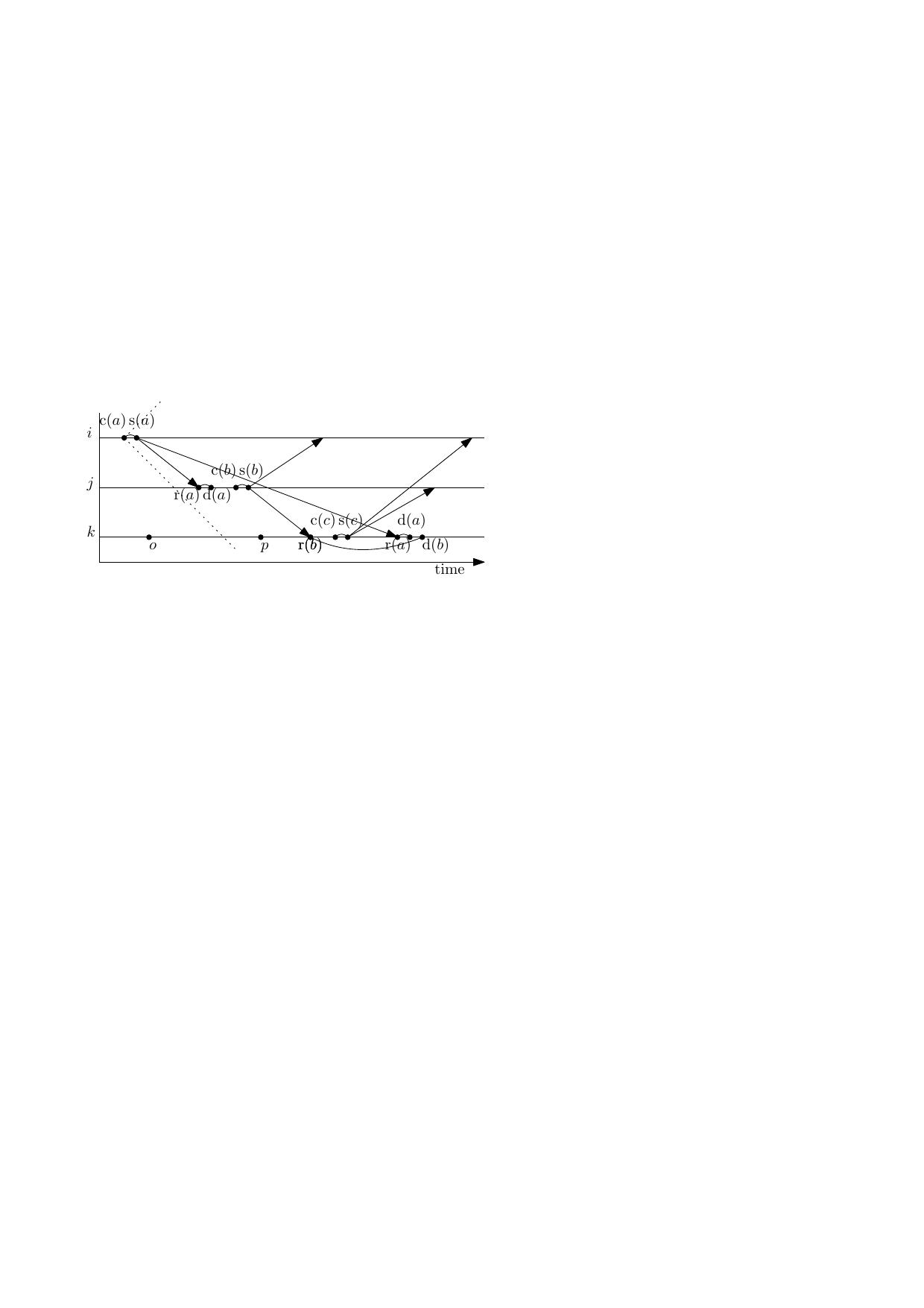}
\end{center}
  \caption{Causal broadcast implemented by buffering messages with missing
  dependencies; abstraction events: \textbf{c}bcast, \textbf{d}eliver;
  implementation level events: \textbf{s}end and \textbf{r}eceive.}
  \label{fig:cbcast}
\end{figure}

Upon a cbcast, the respective message is sent to all processes (self-sends and
delivers are omitted here). When a message is received, it may need to be
buffered for some time before delivery. In this example, when $b$ is received
by process $k$, it cannot be delivered because $a$ is missing; it is buffered,
until $a$ is received, upon which $a$ is delivered and then $b$ is delivered.

Event $o$ in the figure is not inside the light cone of event $\cbcast(a)$,
depicted by dotted lines, therefore $\cbcast(a) \cnot \phb o$. That $o$ is
slightly after in time in this spacetime diagram is irrelevant (it could be
slightly before). On the other hand, $\cbcast(a) \phb p$, but because the
possibility of propagating information was not used (no message), then
$\cbcast(a) \cnot\mhb p$.

We have $\cbcast(a) \hb \cbcast(b)$ as $a$ has already been delivered:
$\cbcast(a) \vis \deliver(a) \po \cbcast(b)$.
We have $\cbcast(a) \mhb \cbcast(c)$ and also $\cbcast(b) \mhb \cbcast(c)$.
Therefore, with this messaging pattern it could be possible that some other
algorithm (with message payloads carrying all causal predecessors) would allow
earlier delivery of both $a$ and $b$, and make $\cbcast(a)$ visible to $\cbcast(c)$.
But this algorithm sends individual messages and is still
waiting for $a$ to arrive. Therefore, no effect (either about $a$ or
$b$) has become visible yet, and 
$\cbcast(a) \not\hb \cbcast(c)$ and also $\cbcast(b) \not\hb \cbcast(c)$.
Message $b$ has already arrived but is being buffered and has not yet been
delivered. So, $\cbcast(c)$ is semantically concurrent to the other two. This
is what matters for the abstraction, and consistency model.

This example illustrates that the presence of physical or messaging happens-before
cannot imply the occurrence of visibility and semantic happens-before. What
time/messaging can imply for the consistency model is the \emph{absence}
of visibility, i.e., what events cannot be visible to others.
If we have information about physical time, we can use the implication:
\[
  a \cnot\phb b \implies a \cnot \vis b
\]
If we have information about messaging, we can use the implication:
\[
  a \cnot\mhb b \implies a \cnot \vis b
\]
which further constrains possible visibility, because it describes actual
messaging that occurred, versus physically possible messaging.

\subsection{Time-constraining valid executions}

What we propose, as a robust way to use information about time, while keeping
consistency models themselves timeless, is to submit valid executions
(satisfying $\axiom{exe}{val}$) to be
checked against time-related constraints to obtain \emph{time valid
executions}. These are checked against consistency models. As opposed to using
time just for some consistency models, this step is optional and orthogonal,
and allows several variants of each consistency model (unconstrained,
messaging constrained or physical time constrained). This is illustrated in
Figure~\ref{fig:executions}.

\begin{figure}
\begin{center}
\includegraphics[scale=0.8]{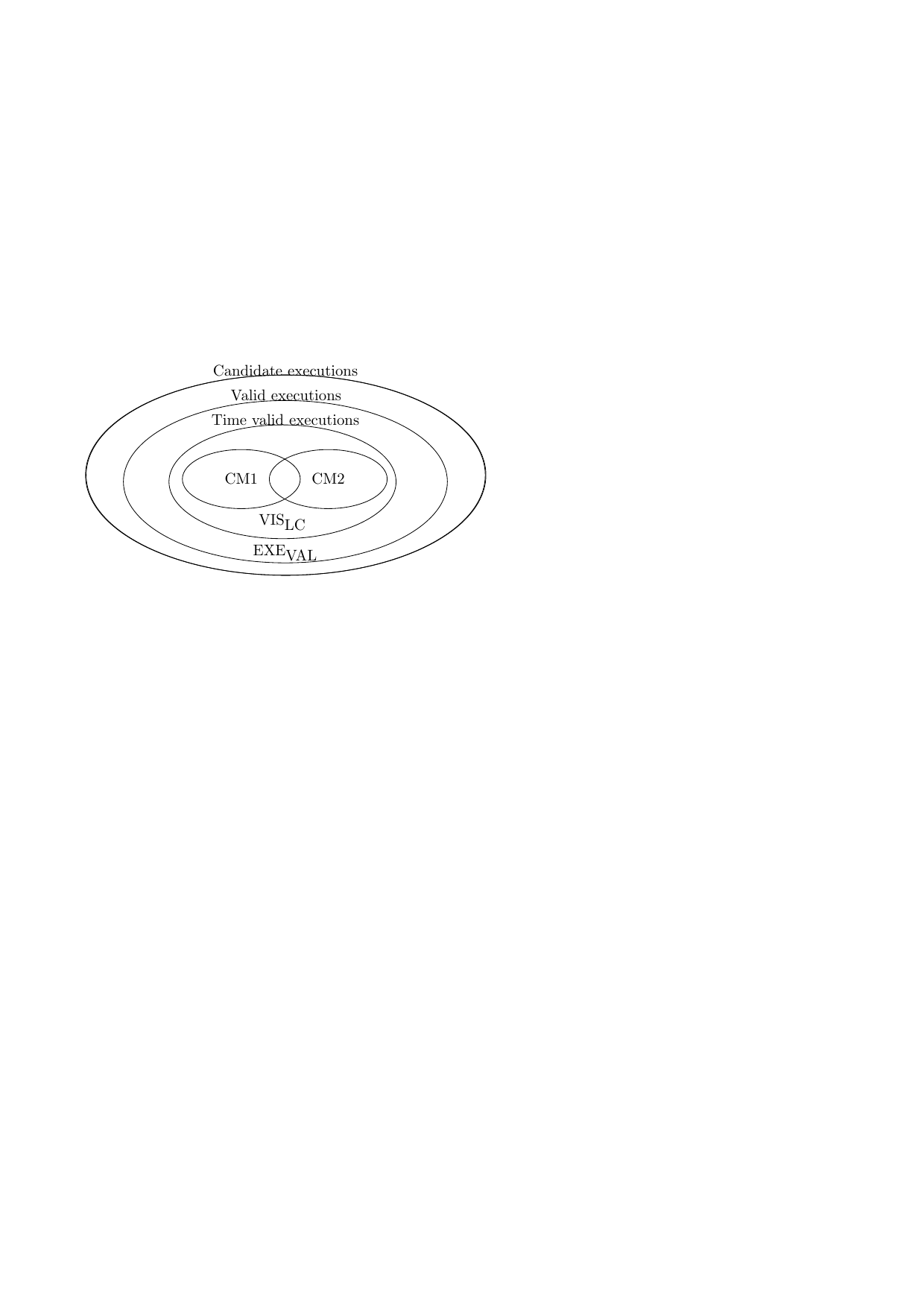}
\end{center}
  \caption{Candidate executions are checked: first for validity
  ($\axiom{exe}{val}$); then, optionally, for time constraints (e.g.,
  $\axiom{vis}{lc}$); finally, against specific consistency models.}
  \label{fig:executions}
\end{figure}

Because access to physical time is difficult and fragile, a robust way
to incorporate time constraints is if we have access to a \emph{logical
clock}~\cite{DBLP:journals/cacm/Lamport78}. Lamport scalar clocks in
particular are very lightweight and can be built by piggybacking each message
with a single integer. A logical clock $L$ satisfies the clock condition
\[
  a \mhb b \implies L(a) < L(b),
\]
where $<$ may be a total (for scalar clocks) or a partial order.
With a logical clock we have that:
\[
  L(a) \cnot< L(b) \implies a \cnot\mhb b \implies a \cnot\vis b.
\]

But while in the abstract model an event is timeless, its implementation may
take some ``time'' and involve some message passing until it completes. This
passage of time, for an operation $o$, can be abstracted by considering the
logical clock value when the operation starts, which we write as
$L(o)$, and when it completes, $L'(o)$.
As illustrated by Figure~\ref{fig:possible-visibility}, if the clock value at
start of $a$ is not less than the clock value upon completion of $b$, then $a$
cannot be visible to $b$.

\begin{figure}
\begin{center}
\includegraphics[scale=0.8]{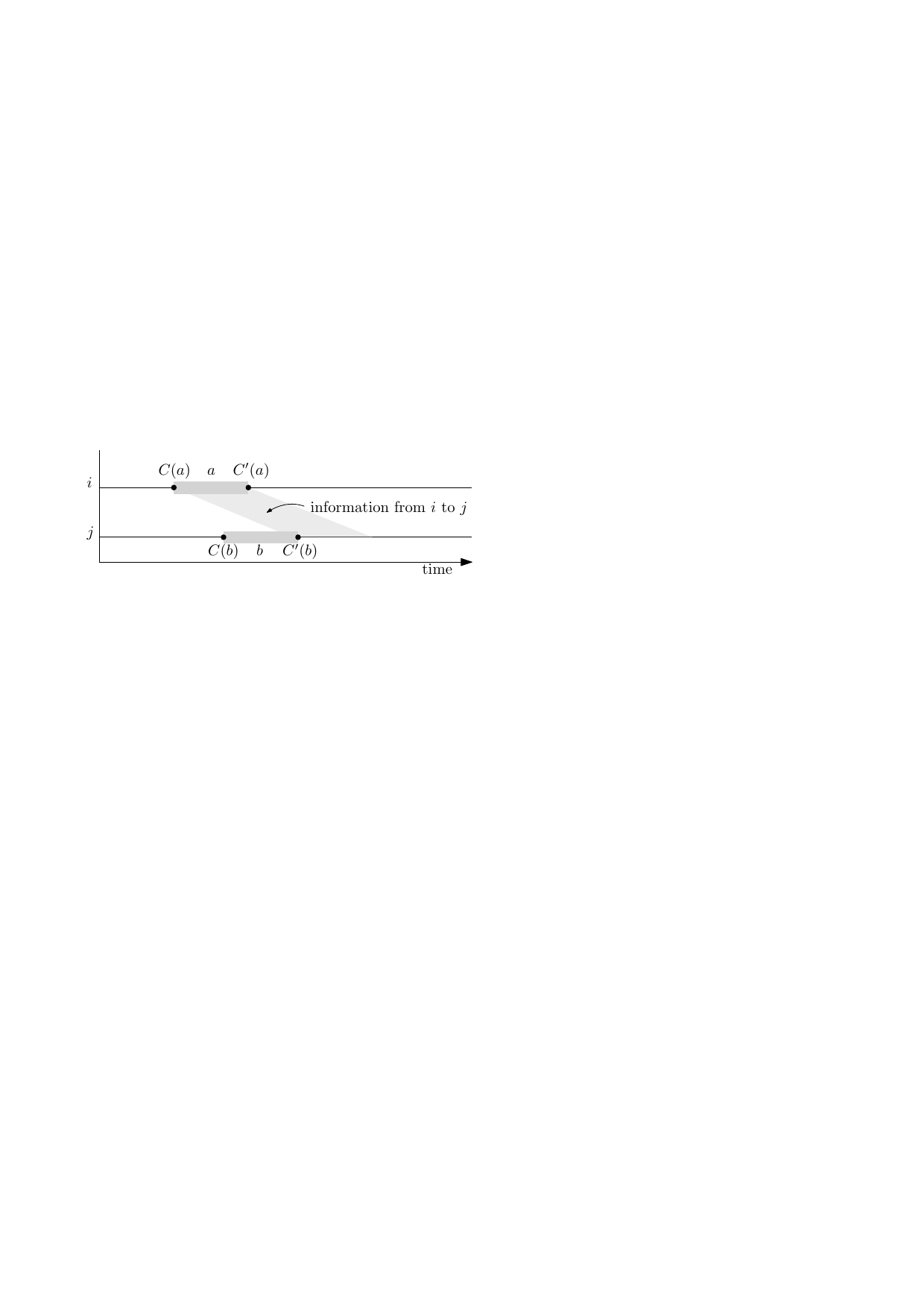}
\end{center}
  \caption{Operation $a$ can be visible to $b$ only if clock $C(a)$ at
  start of $a$ is less than clock $C'(b)$ at completion of $b$, whether for
  logical or physical clocks, either partially or totally ordered.}
  \label{fig:possible-visibility}
\end{figure}

\begin{definition}[Logical clock axiom]
A visibility relation $\vis$ satisfies the logical-clock axiom if and only if,
for any events $a$ and $b$:
\[
a \vis b \implies L(a) < L'(b)
  \tag{$\axiom{vis}{lc}$}
\]
\end{definition}

This allows gathering histories from actual runs of some distributed system,
tagged with logical clock values, which allows, optionally, restricting the set
of candidate executions that are existentially assumed as explaining those
histories while satisfying some consistency model. I.e., not only working in
an ``omniscient mode'', considering hypothetical histories, but enabling,
e.g., submitting actual runs to a consistency checker while taking some
information about time in consideration, in a robust way.

A similar reasoning applies to physical clocks, either partially ordered in
spacetime, or totally ordered (assuming some background of time).
If we do not have access to a logical clock, but consider histories tagged by
an omniscient observer, with access to such a physical clock $P$, we can
constrain valid executions in a similar way.

\begin{definition}[Physical clock axiom]
A visibility relation $\vis$ satisfies the physical-clock axiom if and only if,
for any events $a$ and $b$:
\[
a \vis b \implies P(a) < P'(b)
  \tag{$\axiom{vis}{pc}$}
\]
\end{definition}

This axiom clarifies the possibility of cyclic visibility allowed by the
physical realizability axiom, where operations mutually influence each
other, when the execution of each operation overlaps the light cone at the
start of the other operation.

As an example of applying this axiom, if a totally ordered physical clock is
assumed, the classic linearizability is obtained as a combination of this
axiom and sequential consistency. For the remainder of the article we focus on
the timeless framework.

\section{Basic consistency axioms and Serial Consistently}

We now present three basic axioms which we consider the more fundamental
building blocks for consistency models.  These are satisfied by classic
consistency models (Cache, Processor, Pipelined, Causal, and Sequential
Consistency), when specified in this framework. The first two, \emph{monotonic
visibility} and \emph{local visibility}, correspond roughly to the well known
\emph{monotonic reads} and \emph{read-your-writes}, two of the four session
guarantees~\cite{DBLP:conf/pdis/TerryDPSTW94}. (The session guarantees are,
however, defined resorting to implementation aspects: that processes are
clients to servers which replicate data, specifying which writes should have
reached servers and in which order.) The third, \emph{closed past}, is a
novelty, which allows defining models where operations are final, free from
re-execution, in contrast with those where operations need to be
(conceptually) reapplied until the past stabilizes.  We then introduce
\emph{Serial Consistency}, which ensures serializations which are
self-consistent as serial executions. It implies the three basic axioms, and
is the basic constraint satisfied by classic consistency models.

\subsection{Monotonic visibility}

The first condition is about whether the effects that become visible at each
process remain visible forever. Monotonic visibility expresses that the effect
of any operation (issued by any process), once seen by a process, is not
``unseen''. This makes the set of operations visible to each process increasing
along time.

\begin{definition}[Monotonic visibility]
An execution $(H, \po, \vis, \{\ser_i\}_{i \in \ids})$, satisfies
monotonic visibility if and only if, for any events $a$, $b$, $c$ in $H$:
\[
  a \vis b \po c \implies a \vis c \\
  \tag{$\axiom{vis}{mon}$}
\]
\end{definition}

This does not restrict propagation order, e.g., not respecting pipelined
consistency, only that there is a monotonic accrual of information at each
process. Any effect can only be overridden or complemented by other effects
becoming visible.

Classic models tend to resort to local serializations, and implicitly,
to a monotonic visibility, where writes remain until overwritten,
but we can argue for monotonic visibility even when no serialization is involved
in the data abstraction semantics, such as when only commutative
operations are involved or in CRDTs that converge while resolving conflicts
semantically, without the need for a global arbitration.

Monotonic visibility is arguably the most fundamental condition. Without it,
reasoning about progress in a process becomes hopeless. Not having it results
into something that does not correspond to the notion of process, but to
something that could be called ``byzantinely amnesic'', which can forget
arbitrary ``state'' and return different results to queries along time, after
reaching quiescence, when no updates are being issued, either locally or at
other ``processes''. Monotonic visibility holds not only in all classic
consistency models, but also in recently introduced models which forgo one
of the two basic conditions that we introduce next: \emph{local visibility} or
\emph{closed past}. It can be arguably deemed to be mandatory to any
``reasonable'' consistency model.

\subsection{Local visibility}

The second condition states the standard expectation from the point of view of
a single process: that its operations have an immediate effect upon the local
state, and that effect remains visible (it can only be overridden or
complemented by subsequent operations).

\begin{definition}[Local visibility]
An execution $(H, \po, \vis, \{\ser_i\}_{i \in \ids})$, satisfies
local visibility if and only if, for any events $a$, $b$ in $H$:
\[
  a \po b \implies a \vis b \\
  \tag{$\axiom{vis}{loc}$}
\]
\end{definition}

This condition is natural and implicitly assumed in shared memory models for
multiprocessors. In distributed systems, it needs to be made explicit, and may need
effort to be satisfied.
It is natural when a process is a replica, but not when a process is a
client to some replicated data store, in which case it needs enforcement. Some
ways to do it are restricting client affinity to the same server, or using
some mechanism to track process context when changing server (which may cause
delay and is not partition tolerant).

We note that local visibility only ensures that local operations remain
visible, but not the operations from other processes.
Conversely, monotonic visibility can hold without local visibility, as operations from
the process itself do not need to became immediately visible, only that once they
become visible they remain so.
Figure~\ref{fig:visibility-examples} exemplifies the four combinations of
local and monotonic visibility being or not being satisfied, considering a
single counter object.

\begin{figure*}
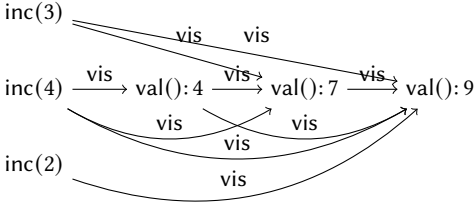
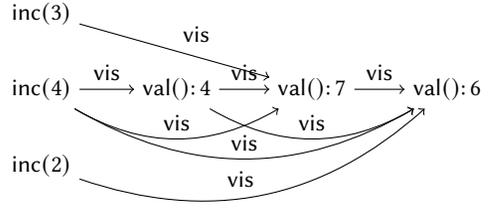
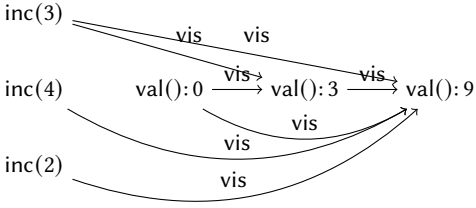
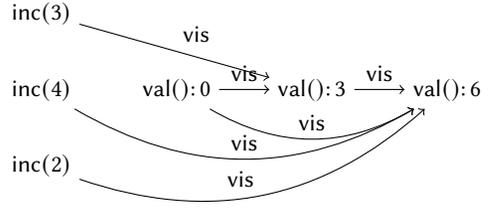

  \small
  \begin{boxes}{2}
    \begin{tcolorbox}
      \tikz \graph [edge label=${\vis}$, grow right=18mm] {
        a/$\inc(3)$;
        b/$\inc(4)$ -> c/$\val():4$ -> d/$\val():7$ -> e/$\val():9$;
        f/$\inc(2)$;
        a -> {d, e};
        b ->[bend right] {d, e};
        c ->[bend right] e;
        f ->[bend right] e;
};
      \subcaption{local and monotonic visibility}
    \end{tcolorbox}
    \begin{tcolorbox}
      \tikz \graph [edge label=${\vis}$, grow right=18mm] {
        a/$\inc(3)$;
        b/$\inc(4)$ -> c/$\val():4$ -> d/$\val():7$ -> e/$\val():6$;
        f/$\inc(2)$;
        a -> d;
        b ->[bend right] {d, e};
        c ->[bend right] e;
        f ->[bend right] e;
};
      \subcaption{local but not monotonic visibility}
    \end{tcolorbox}
    \begin{tcolorbox}
      \tikz \graph [edge label=${\vis}$, grow right=18mm] {
        a/$\inc(3)$;
        b/$\inc(4)$ -!- c/$\val():0$ -> d/$\val():3$ -> e/$\val():9$;
        f/$\inc(2)$;
        a -> {d, e};
        b ->[bend right] e;
        c ->[bend right] e;
        f ->[bend right] e;
};
      \subcaption{monotonic but not local visibility}
    \end{tcolorbox}
    \begin{tcolorbox}
      \tikz \graph [edge label=${\vis}$, grow right=18mm] {
        a/$\inc(3)$;
        b/$\inc(4)$ -!- c/$\val():0$ -> d/$\val():3$ -> e/$\val():6$;
        f/$\inc(2)$;
        a -> d;
        b ->[bend right] e;
        c ->[bend right] e;
        f ->[bend right] e;
};
      \subcaption{not local nor monotonic visibility}
    \end{tcolorbox}
  \end{boxes}
  \caption{Examples about basic visibility axioms, considering a single
  counter object.  Operations from each process from left to right, program
  order left out.}
  \label{fig:visibility-examples}
\end{figure*}

\subsection{Closed past}

%

The well-formedness axiom for serializations, $\axiom{ser}{vis}$, states
that all events visible to some event are serialized before that event, but
allows non visible events to also be serialized before. The
\emph{closed past} axiom forbids non visible events to be serialized
before the visible ones.

\begin{definition}[Closed past]
An execution $(H, \po, \vis, \{\ser_i\}_{i \in \ids})$, satisfies
closed past if and only if, for any events $a,c$ in $H$, and
any event $b_i$ from any process $i$:
\[
   a \vis b_i \land c \cnot\vis b_i \implies a \ser_i c
  \tag{$\axiom{ser}{clo}$}
\]
\end{definition}


This means that the past of $b_i$ is closed, when $b_i$ executes, and will not
be changed subsequently, only appended to. Any operation that subsequently
becomes visible will be serialized after all currently visible operations.
Technically, the set of events $\vis b_i$ visible to an event $b_i$, of
process $i$, is a downward closet subset of $H$ totally ordered by $\ser_i$.

This implies that under a sequential specification, and assuming monotonic
visibility, the sequence of operations $S$ visible to a query $q_i$, and
used to answer it, can be reused and extended. When computing the outcome of a
subsequent query $q_i'$, the new sequence of operations $S'$, that became
visible to $q_i'$ but was not visible to $q_i$, can be simply applied after
$S$, conceptually starting from the state resulting from applying $S$. This
means that the execution of any of these operation is final (not tentative),
with no need for them to be (conceptually) re-executed, when computing the
outcome of subsequent operations, as no new operation will be serialized
before any of them.

This applies to most abstractions, where an operation is either a query or an
update, i.e., not only memory (with read/write) but also most data types,
including CRDTs, which classify an operation as either a query or update. But
even for abstractions having operations that are both a query and update (e.g.,
stack, queue), closed past ensures that a single operation execution is
enough, by splitting an operation in two parts: the query part, to calculate
the result, when the operation is issued, and the update part, when it becomes
visible. So, closed past allows a single final operation execution for all
abstractions.

This also means that serialization coincides with the order of
manifestation of visibility in the process, being incrementally built as
execution proceeds, only by appending new events after the already visible
ones, not ``adding to the past''.
Almost all consistency models satisfy this axiom.
Only systems which involve conceptual re-execution do not satisfy it. This is the
case for some eventually consistent systems, motivated by aiming to achieve
convergence by building an \emph{arbitration}:
a global serialization, common to all processes. For these systems to be
\emph{available}, this arbitration needs to be built a posteriori, after some
operations have already executed, by adding to their past. We return to this
issue below.

Figure~\ref{fig:closed-past} presents an example, for a single queue object,
with traditional sequential specification, where closed past is not satisfied.
This execution is valid, assuming the presented serialization for process $j$.
The first $\deq()$ returns $4$, because $\enq(1)$ is not yet visible; 
it only becomes visible when the subsequent $\val()$ is executed. Because
$\enq(1)$ is ordered in the past of the other, already visible enqueues,
$\val()$ returns the surprising value of $[4, 2]$, with $4$ still in the
queue, even though dequeue returned it. This $4$ is subsequently enqueued,
making the final value be $[4, 2, 4]$.

\begin{figure*}
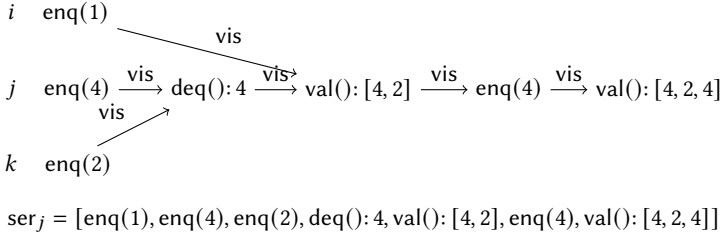

  \small
  \begin{boxes}{1}
    \begin{tcolorbox}
  \tikz \graph [edge label=${\vis}$, grow right=20mm] {
    a/$i\quad\enq(1)$;
    b/$j\quad\enq(4)$ -> c/$\deq():4$ -> d/$\val():{[4, 2]}$ -> e/$\enq(4)$ -> f/$\val():{[4, 2, 4]}$;
    g/$k\quad\enq(2)$;
    a -> d;
    g -> c;
  };
  \[
    {\ser_j} = [\enq(1), \enq(4), \enq(2), \deq():4, \val():[4,2], \enq(4), \val():[4, 2, 4]]
  \]
    \end{tcolorbox}
  \end{boxes}
  \caption{Example without closed past considering a queue object.
  Operations from each process placed in horizontal lines, program order from left to right, left out.
  Transitive visibility, only direct edges drawn. Serialization for process $j$ shown.
  }
  \label{fig:closed-past}
\end{figure*}

It should be noticed that this outcome results from the serial application of
the \emph{effects} of visible operations at each point (the sequential semantics
uses $\op(o)$ for an event $o$, ignoring the result). But the execution, at
each process, namely $j$, is not a serial execution from the point of view of
the process, as the result returned from each operation does not match the
outcome from the corresponding serial execution.

\begin{remark}
A serial execution must have effects and return values consistent with the
sequential specification. This is the same concept that can be used for
transactions, but also for individual
operations~\cite[p.422]{DBLP:conf/ac/Gray78}.
An execution is serial from the point of view of a process if all results of
the queries by the process can be explained by some serial execution. When a
serial execution at one process is built, combining operations from several
processes, the results at other processes are ignored but the results from
queries at the process itself, along the execution, must match the specification.
Reads at other processes are ignored in traditional models with read/write
operations; in our framework, the semantic function for sequential
specifications already ignores the results of other operations.
\end{remark}

In this example, the serial execution would make $\deq()$ return $1$.
Moreover, if process $j$ was running the program:
\[
  \enq(4) ; x = \deq() ; \val() ; \enq(x) ; \val()
\]
where the subsequent enqueue used the result of the previous dequeue, assigned
to $x$, as argument, it would perform $\enq(1)$, making the history itself different.

Rewriting the past, but only for effects, leads to such strange outcomes that
do not occur in classic consistency models where any outcome is explained by a
serial execution.

\subsection{Serial Consistency}

We now define a consistency model slightly stronger than the
combination of the three basic axioms (monotonic visibility, local visibility,
and closed past), that we call \emph{Serial Consistency}. This corresponds to
a generalization for this framework of the weakest model in
\textcite{DBLP:conf/europar/BatallerB97, DBLP:journals/jacm/SteinkeN04}, there
called Local Consistency, where monotonic visibility and closed past are implicitly
assumed. Those frameworks for shared memory, restricted to read/write
operations, are based on \emph{serial views}, without a visibility relation,
in a context where not having monotonic visibility or closed past would be
inconceivable.

Serial consistency means, for sequential specifications, that each process can
explain the local results from a sequence of operations, respecting its own
local program order, obtained by interleaving the effects of remote
operations, in some order.

\begin{definition}[Serial Consistency]
An execution $(H, \po, \vis, (\ser_i)_{i \in \ids})$ satisfies serial
consistency if and only if,
 for any event $o_i$ of any process $i$:
\[
  {\ser_i o_i} = {\vis o_i}
  \tag{$\cons{serial}$}
\]
\end{definition}

This can also be written as $a \ser_i b_i \iff a \vis b_i$, and expresses
that the set of events serialized before some event coincides with those
visible to it. For sequential specifications it means that the sequence of
operations considered to evaluate the result of an operation are exactly those
serialized before it. Because this is true for all operations of a process,
each process can explain the local history by a serial execution of its operations
interleaved with the effects of operations from other processes.

\begin{proposition}
A valid execution $(H, \po, \vis, \{\ser_i\}_{i \in \ids})$ that satisfies
serial consistency also satisfies monotonic visibility, local visibility and
closed past:
\[
  \cons{serial} \implies
  \axiom{vis}{mon} \land
  \axiom{vis}{loc} \land
  \axiom{ser}{clo}
\]
\end{proposition}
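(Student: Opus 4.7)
The plan is to prove each of the three conclusions separately, in the order \axiom{vis}{loc}, \axiom{vis}{mon}, \axiom{ser}{clo}, since the later proofs will rely on the earlier ones. The central tool is the biconditional form $a \ser_i b_i \iff a \vis b_i$ of $\cons{serial}$, which lets us freely translate visibility statements about events in process $i$ into serialization statements in $\ser_i$, and then exploit that $\ser_i$ is a strict total order on all of $H$.

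For local visibility, I would suppose $a \po b$, so $a$ and $b$ are both by some process $i$, and target $a \vis b$, which by $\cons{serial}$ is equivalent to $a \ser_i b$. Since $\po$ is irreflexive we have $a \ne b$, so by totality either $a \ser_i b$ (done) or $b \ser_i a$. In the second case $\cons{serial}$ applied to the event $a$ of process $i$ yields $b \vis a$, hence $b \hb a$; but $\axiom{vis}{pr}$ then forbids $a \po b$, contradiction. This step is the main obstacle, as it is the one place where we must reach outside serial consistency and invoke the well-formedness axiom $\axiom{vis}{pr}$ on the visibility relation; without physical realizability serial consistency alone would be insufficient.

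For monotonic visibility, I would take $a \vis b \po c$, with $b, c$ both in process $i$. By $\cons{serial}$ on $b$ we get $a \ser_i b$; by the local visibility just proved, $b \po c$ gives $b \vis c$, and $\cons{serial}$ on $c$ turns this into $b \ser_i c$. Transitivity of the total order $\ser_i$ then yields $a \ser_i c$, and one final application of $\cons{serial}$ on $c$ gives $a \vis c$.

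For closed past, I would take $a \vis b_i$ and $c \cnot\vis b_i$. The biconditional of $\cons{serial}$ at $b_i$ rewrites these as $a \ser_i b_i$ and $c \cnot\ser_i b_i$. Since $\ser_i$ is a strict total order, $c \cnot\ser_i b_i$ forces either $c = b_i$ or $b_i \ser_i c$; in the first case $a \ser_i b_i = c$, and in the second case transitivity on $a \ser_i b_i \ser_i c$ gives $a \ser_i c$. Either way $a \ser_i c$, which is the required conclusion.
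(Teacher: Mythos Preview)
Your proof is correct and follows essentially the same route as the paper: both argue \axiom{vis}{loc} first via totality of $\ser_i$ and a contradiction with $\axiom{vis}{pr}$, then reuse that step for \axiom{vis}{mon} via transitivity of $\ser_i$, and finally derive \axiom{ser}{clo} from totality. You are slightly more careful in handling the degenerate case $c = b_i$ in the closed-past argument, which the paper silently elides.
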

\begin{proof}
  1) $\cons{serial} \implies \axiom{vis}{loc}$:
assume $a_i \po b_i$; then, $a_i \ser_i b_i$ (it cannot be the case
that $b_i \ser_i a_i$, as it would imply $b_i \vis a_i$, which would
  contradict $\axiom{vis}{pr}$); therefore, $a_i \vis b_i$.
  2) $\cons{serial} \implies \axiom{vis}{mon}$:
assume $a \vis b_i \po c_i$; then, $a \ser_i b_i$ and
also $b_i \ser_i c_i$ (as above); therefore, $a \ser_i c_i$, which implies
$a \vis c_i$.
  3) $\cons{serial} \implies \axiom{ser}{clo}$:
assume $a \vis b_i \land c \cnot\vis b_i$;
then, $a \ser_i b_i$ and also $c \cnot{\ser_i} b_i$; therefore, $b_i \ser_i c$,
which implies $a \ser_i c$.
\end{proof}

The converse is not true; serial consistency is not equivalent to
the combination of the three basic axioms.
If an operation $a$ is not visible to a local operation $b$,
they forbid serializing $a$ before some already visible operation, but not
before $b$ itself.
This is easily seen by an example, shown in Figure~\ref{fig:non-serial}, for a
single stack object.  This execution satisfies the three basic axioms,
including closed past, which allows $\push(3)$ to be serialized before
$\pop()$, even though it is not visible to it. Closed past would only forbid
$\push(3)$ to be serialized before any other push operation, from $i$, visible
to $\pop()$.  This example shows that even the combination of the three basic
axioms still allows ``strange outcomes'', only prevented by serial
consistency.

Serial consistency is desirable not only for sequential specifications.
For concurrent specifications, it means that there is a serialization that can
explain the order of manifestation of the monotonically increasing visibility
along the process execution, with local events becoming immediately visible,
and no remote events that become visible being serialized before the already
visible ones. It implies the three basic axioms and prevents undesirable
outcomes, that could occur otherwise, making it also relevant for models with
abstractions defined using concurrent specifications.

\begin{figure*}
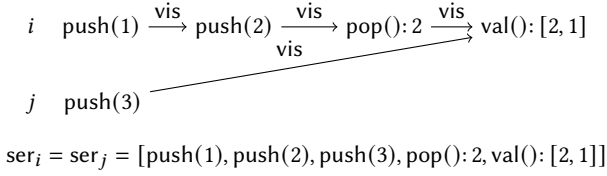

  \small
  \begin{boxes}{1}
    \begin{tcolorbox}
  \tikz \graph [edge label=${\vis}$, grow right=20mm] {
    a/$i\quad\push(1)$ -> b/$\push(2)$ -> c/$\pop():2$ -> d/$\val():{[2,1]}$;
    e/$j\quad\push(3)$;
    e -> d;
  };
  \[
    {\ser_i} = {\ser_j} = [\push(1), \push(2), \push(3), \pop():2, \val():[2,1]]
  \]
    \end{tcolorbox}
  \end{boxes}
  \caption{Example satisfying basic axioms but not serial consistency,
  considering a stack object.  Operations from each process placed in
  horizontal lines, program order from left to right, left out. Transitive
  visibility, only direct edges drawn.}
  \label{fig:non-serial}
\end{figure*}

\section{Classic consistency models}

We now present generalizations, for our framework, of the three more important
classic timeless consistency models. These models were originally presented in
the context of memory (read/write registers). The generalizations can be used
for arbitrary abstractions, not only memory, respect the original
definitions when instantiated for the special case of memory, and respect
current taxonomies (generalizations should generalize, not contradict). As
the classic specifications, each model satisfies the basic serial consistency.

\subsection{Pipelined consistency}

Pipelined consistency is the generalization of PRAM~\cite{lipton1988pram}
(Pipelined RAM) consistency, defined for memory, to general data
abstractions. An execution is PRAM
consistent~\cite{DBLP:conf/spaa/AhamadBJKN93} if the reads, at each process,
are compatible with a process-specific serialization, of all writes and its own
reads, that satisfies program order. I.e., different processes may see
different orders, but all compatible with the program order.

To generalize PRAM to pipelined consistency in our framework, we can start
from two independent guarantees, which can be useful to obtain other models.
The first is having the manifestation of visibility respect program order.

\begin{definition}[Pipelined visibility]
An execution $(H, \po, \vis, \{\ser_i\}_{i \in \ids})$, satisfies
pipelined visibility if and only if, for any events $a$, $b$, $c$ in $H$:
\[
  a \po b \vis c \implies a \vis c \\
  \tag{$\axiom{vis}{pipe}$}
\]
\end{definition}

This guarantee means that if some operation is visible, all previous operations
from the same process are also visible.
Pipelined visibility can be considered dual to monotonic visibility, as
illustrated in Figure~\ref{fig:pipelined-visibility}; one is about future
operations from the same process, the other about past operations.

\begin{figure}
\begin{center}
  \includegraphics[scale=0.8]{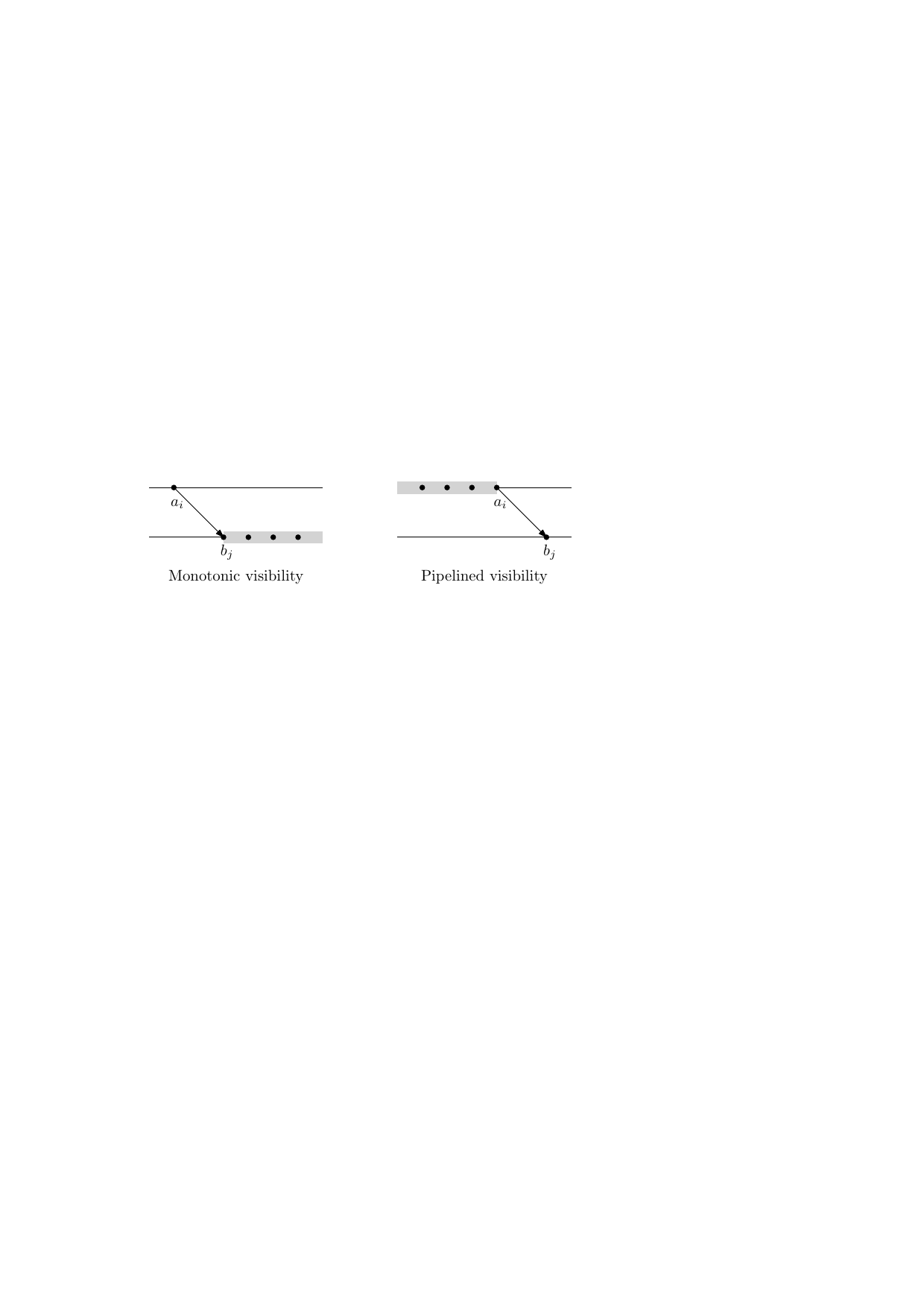}
\end{center}
  \caption{Monotonic visibility: subsequent operations to $b_j$, at process
  $j$ see $a_i$; pipelined visibility: previous operations to $a_i$, from
  process $i$ are visible to $b_j$.}
  \label{fig:pipelined-visibility}
\end{figure}

This is not enough for PRAM, as two
operations from the same process may become visible simultaneously, but be
serialized in the ``wrong'' order. The second guarantee is enforcing the
program order in each serialization.

\begin{definition}[Pipelined serializations]
An execution $(H, \po, \vis, \{\ser_i\}_{i \in \ids})$, satisfies
pipelined serializations if and only if, for any process $i$,
and any events $a$, $b$ in $H$:
\[
  a \po b \implies a \ser_i b
  \tag{$\axiom{ser}{pipe}$}
\]
\end{definition}

These two guarantees together can be called \emph{Pipelining}; they are
the basic ingredients for pipelined consistency, and can be considered a weak
form of pipelined consistency.

\begin{definition}[Pipelining]
An execution $(H, \po, \vis, \{\ser_i\}_{i \in \ids})$, satisfies
pipelining if and only if it satisfies:
\[
  \axiom{vis}{pipe} \land
  \axiom{ser}{pipe}
  \tag{$\axiom{prop}{pipe}$}
\]
\end{definition}

Pipelined consistency can be obtained as the combination of pipelining
together with serial consistency, required for it to be a generalization of
PRAM, satisfying PRAM when instantiated in the context of memory.
The standard PRAM formalization, based exclusively on serializations,
implicitly implies serial consistency in our more general framework.
Pipelining by itself does not imply any of the basic axioms: monotonic
visibility, local visibility or closed past.

\begin{definition}[Pipelined Consistency]
An execution $(H, \po, \vis, \{\ser_i\}_{i \in \ids})$, satisfies
pipelined consistency if and only if it satisfies:
\[
  \axiom{prop}{pipe} \land
  \cons{serial}
  \tag{$\cons{pipe}$}
\]
\end{definition}

\begin{remark}
\label{rem:pram}
Although not commonly realized (if at all), the first axiomatic specification
of PRAM~\cite{DBLP:conf/spaa/AhamadBJKN93}, later adopted by subsequent works,
is more relaxed than the original operational definition of
PRAM~\cite{lipton1988pram}. In the operational definition, reads/writes are to local
memory, with immediate local effect, with writes subsequently propagated to
other processes in FIFO order (considering the name, although not explicitly stated). 
This means that given two writes, $a_i$ at process $i$ and $b_j$ at
process $j$, if $a_i$ is serialized before $b_j$ at $j$, being visible to
$b_j$, then $a_i$ preceded $b_j$ in physical time and the effect of $b_j$
cannot be visible to $a_i$. Therefore, $a_i$ must also be serialized before
$b_j$ at $i$:
\[
  a_i \ser_j b_j \implies a_i \ser_i b_j
  \tag{\axiom{ser}{compatible}}
\]
This axiom, missing from the PRAM axiomatic specification, forbids mutually
incompatible serializations, in which each process serializes the remote
operation before the local one.  The axiomatic specification of PRAM only
constrains individual serializations, but does not have any constraint about
whether all serializations together are physically realizable.  Without this
axiom, the specification fails to prevent physically impossible causality loops, allowing
out-of-thin-air reads.

Consider Figure~\ref{fig:pipe-oota}, representing a history involving two
processes and two memory locations, in which each process does a read of a
location and writes the other location with the value read. In this history
42 comes out-of-thin-air, only justified by a physically impossible causality
loop: the first read of $x$ is explained by a write subsequent to a read which
is explained by a write which is subsequent to that first read of $x$.
It is allowed by the PRAM axiomatic specification, but not the operational
definition. Our framework solves the problem without the need for
this extra axiom: it rejects the execution as not well-formed, as it violates
the physical realizability axiom \axiom{vis}{pr}.

It should be noted that the $\axiom{ser}{compatible}$ axiom about compatible
serializations would apply to most abstractions, like PRAM, with non-blocking
behavior and without cyclic visibility, where operations do not synchronize.
Barrier-like abstractions will not obey it, but they will typically have
concurrent semantics, ignoring serializations.
\end{remark}

\begin{figure*}
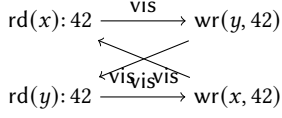

  \small
  \begin{boxes}{1}
  \let\wr=\relax
  \let\rd=\relax
  \auxfun{wr}
  \auxfun{rd}
  \begin{tcolorbox}
      \tikz \graph [edge label=${\vis}$, grow right=25mm] {
        a/{$\rd(x):42$} -> b/{$\wr(y,42)$} ;
        c/{$\rd(y):42$} -> d/{$\wr(x,42)$} ;
        b -> c;
        d -> a;
      };
  \end{tcolorbox}
  \end{boxes}
  \caption{History acceptable by the PRAM axiomatic specification but not the original
  operational definition, only explainable by a causality loop. The execution,
  in our framework, is not well-formed, as it violates \axiom{vis}{pr}.}
  \label{fig:pipe-oota}
\end{figure*}

\subsection{Causal consistency}

Causal Consistency was introduced in the context of memory, i.e., read/write
registers, as usual for most consistency models, in what was called
\emph{Causal Memory}, briefly introduced
by~\textcite{DBLP:conf/icdcs/HuttoA90} and specified axiomatically
by~\textcite{DBLP:journals/dc/AhamadNBKH95}.

It was conceived to be something weaker that sequential consistency (as each
process can see a different serialization), but stronger than PRAM (pipelined
consistency). It guarantees that each serialization provides visibility, not
only of events in the past for each process, but all that are potential causal
dependencies, transitively.

As for pipelined consistency, we can start by two guarantees. The first is
that visibility includes all events in the semantic causal past, given by the
causality relation (semantic happens-before), which we recall was as defined
as:
$
{\hb} \defeq (\po \union \vis)^+
$.

\begin{definition}[Causal visibility]
An execution $(H, \po, \vis, \{\ser_i\}_{i \in \ids})$, satisfies
causal visibility if and only if, for any events $a$, $b$ in $H$:
\[
  a \hb b \implies a \vis b \\
  \tag{$\axiom{vis}{causal}$}
\]
\end{definition}

Visibility includes and is included in happens-before, from its
definition; it follows trivially that:

\begin{proposition}
An execution $(H, \po, \vis, \{\ser_i\}_{i \in \ids})$ satisfies causal
visibility, if and only if
  \[{\vis} = {\hb} \]
\end{proposition}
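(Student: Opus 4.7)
The plan is to prove the biconditional by unfolding the definition of $\hb$ as the transitive closure of $\po \cup \vis$ and observing that one inclusion between $\vis$ and $\hb$ is free.

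For the forward direction, assume causal visibility holds, i.e., $a \hb b \implies a \vis b$. This gives $\hb \subseteq \vis$ directly. For the reverse inclusion $\vis \subseteq \hb$, I would note that by the definition ${\hb} \defeq (\po \union \vis)^+$, any pair in $\vis$ sits inside $\po \union \vis$ and hence inside its transitive closure $\hb$; therefore ${\vis} \subseteq {\hb}$. Combining these two inclusions yields ${\vis} = {\hb}$.

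For the reverse direction, assume ${\vis} = {\hb}$. Then the implication $a \hb b \implies a \vis b$ is immediate by substitution, so causal visibility holds.

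The only step worth flagging is the inclusion $\vis \subseteq \hb$, which is not part of the hypothesis but is automatic from the definition of $\hb$; this is what makes the proposition a clean equivalence rather than an implication. There is no serious obstacle here — the proof is essentially bookkeeping around the definition of the transitive closure.
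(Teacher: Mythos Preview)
Your proof is correct and matches the paper's own reasoning: the paper simply notes that visibility is included in happens-before by the definition ${\hb} = (\po \cup \vis)^+$, and that causal visibility gives the reverse inclusion, making the equivalence trivial. There is nothing to add.
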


The second guarantee is that serializations also respect causality, relevant
for classic sequential specifications. (Concurrent specifications ignore
serializations.)

\begin{definition}[Causal serializations]
An execution $(H, \po, \vis, \{\ser_i\}_{i \in \ids})$, satisfies
causal serializations if and only if, for any process $i$,
and any events $a$, $b$ in $H$:
\[
  a \hb b \land b \not\hb a \implies a \ser_i b
  \tag{$\axiom{ser}{causal}$}
\]
\end{definition}

These two guarantees together can be called \emph{Causality}; they
are the basic ingredients for causal consistency, or a weak form of
causal consistency.

\begin{definition}[Causality]
An execution $(H, \po, \vis, \{\ser_i\}_{i \in \ids})$, satisfies
causality if and only if it satisfies:
\[
  \axiom{vis}{causal} \land
  \axiom{ser}{causal}
  \tag{$\axiom{prop}{causal}$}
\]
\end{definition}

\begin{remark}
A question arises: why not define the causality relation simply as the transitive
closure of visibility, to have causal visibility simply ensure transitive
visibility? I.e., $a \vis^+ b \implies a \vis b$.
For classic models, given that local visibility
holds, the two definitions coincide, as $a \po b \implies a \vis b$. 
But in models such as Prefix Consistency, discussed in
Section~\ref{sec:prefix-consistency}, that is not the case. Using
happens-before which mimics the classic one by Lamport, ensures that
potential causality is respected, even if operations do not become immediately
visible to the process issuing them. It also ensures that causality is
stronger than pipelining.
\end{remark}

\begin{proposition}
If some execution $(H, \po, \vis, \{\ser_i\}_{i \in \ids})$ satisfies
causality, then it also satisfies pipelining.
\end{proposition}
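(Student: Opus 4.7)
The plan is to derive each conjunct of pipelining ($\axiom{vis}{pipe}$ and $\axiom{ser}{pipe}$) separately from the two causality axioms, using the definition ${\hb} \defeq (\po \cup \vis)^+$ together with the well-formedness guaranteed by working with valid executions.

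First I would handle pipelined visibility. Assume $a \po b \vis c$. Since $\po \subseteq \hb$ and $\vis \subseteq \hb$, both $a \hb b$ and $b \hb c$, so by transitivity of $\hb$ we get $a \hb c$. Applying $\axiom{vis}{causal}$ then yields $a \vis c$, which is exactly $\axiom{vis}{pipe}$. This step is essentially a straightforward chase through the definitions.

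For pipelined serializations, assume $a \po b$ and let $i$ be an arbitrary process; we need $a \ser_i b$. From $a \po b$ we immediately get $a \hb b$. To invoke $\axiom{ser}{causal}$ I also need $b \not\hb a$, and this is where the main subtlety lies: the conclusion does not follow from $\axiom{vis}{causal}$ and $\axiom{ser}{causal}$ alone, because in our framework $\hb$ may have cycles. I would appeal to the physical realizability axiom $\axiom{vis}{pr}$, which is part of well-formedness and hence available for any valid execution: it states $x \hb y \implies y \not\po x$. Taking its contrapositive and renaming variables yields $a \po b \implies b \not\hb a$, exactly what is needed. Then $\axiom{ser}{causal}$ delivers $a \ser_i b$.

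The main obstacle is recognising that $\axiom{ser}{pipe}$ is not purely a consequence of the two causality axioms in isolation: without physical realizability, one could have $b \hb a$ alongside $a \po b$ and thereby evade $\axiom{ser}{causal}$. Once one observes that $\axiom{vis}{pr}$ precisely rules this out for valid executions, both conjuncts of $\axiom{prop}{pipe}$ fall out in a few lines, and the proposition follows.
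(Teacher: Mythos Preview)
Your proof is correct and follows essentially the same route as the paper's: derive $\axiom{vis}{pipe}$ by chaining $\po$ and $\vis$ into $\hb$ and applying $\axiom{vis}{causal}$, then derive $\axiom{ser}{pipe}$ from $a \po b$ by obtaining $a \hb b$ and $b \not\hb a$ and invoking $\axiom{ser}{causal}$. Your argument is in fact more explicit than the paper's, which simply asserts ``$b \not\hb a$'' without naming the justification; you correctly identify that this step relies on the physical realizability axiom $\axiom{vis}{pr}$ (via its contrapositive) rather than on the causality axioms alone.
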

\begin{proof}
  If $a \po b \vis c$ then $a \hb c$; by $\axiom{vis}{causal}$, $a \vis c$,
  satisfying $\axiom{vis}{pipe}$.
  If $a \po b$ then $a \hb b$ and $b \not\hb a$; by $\axiom{ser}{causal}$, $a
  \ser_i b$ at any process $i$, satisfying $\axiom{ser}{pipe}$ .
\end{proof}

Causal consistency can be obtained as the combination of causality together
with serial consistency.

\begin{definition}[Causal Consistency]
An execution $(H, \po, \vis, \{\ser_i\}_{i \in \ids})$, satisfies
causal consistency if and only if it satisfies:
\[
  \axiom{prop}{causal} \land
  \cons{serial}
  \tag{$\cons{causal}$}
\]
\end{definition}

Requiring serial consistency is needed for the same reason as for pipelined
consistency: to have causal consistency as a generalization of causal memory
for arbitrary data types, satisfying it when instantiated in the context of
memory. The causal memory formalization, based only on serializations, without
a visibility relation, implicitly implies serial consistency in our framework.
Causality by itself implies monotonic and local visibility but not closed past.

\begin{remark}
\label{rem:cm-cl}
As for PRAM, the axiomatic specification of causal memory is lacking an axiom,
and allows histories only explainable by physically impossible causality loops
when there are several writes of the same value to the same location.
This is a technical issue having to do with 1) use of a writes-into relation,
which only relates a single write as the cause of a read, to define a
causality order, together with 2) serializations are allowed in which
another write of the same value to the same location, serialized after,
explains the read. This allows the real cause of a read not to be accounted
for in the causality order, allowing histories that can only be explained by
physically impossible causality loops of the real information propagation.

An example is shown in Figure~\ref{fig:causal-oota}, where the last read of
each process, by not returning $2$ (which had already overwritten the $1$
previously written), is explained by the last write of the other process.
In the causal memory specification, these last writes are used to explain the
reads in the serializations but not accounted for in the causality order.
In our framework, those writes must be visible to the reads to explain
them (otherwise they would return 2). But doing so makes them accounted for in the
happens-before relation, which renders the execution not admissible as not
being well-formed, as it violates the physical realizability axiom \axiom{vis}{pr}.
Our framework not only is more general, but solves this problem when instantiated
for memory.
\end{remark}

\begin{figure*}
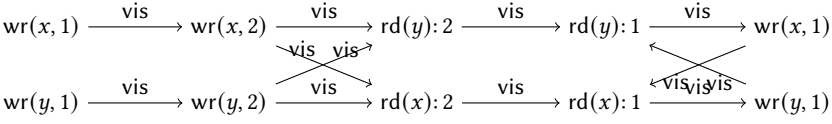

  \small
  \begin{boxes}{1}
  \let\wr=\relax
  \let\rd=\relax
  \auxfun{wr}
  \auxfun{rd}
  \begin{tcolorbox}
      \tikz \graph [edge label=${\vis}$, grow right=25mm] {
  a/{$\wr(x,1)$} -> b/{$\wr(x,2)$} -> c/{$\rd(y):2$} -> d/{$\rd(y):1$} -> e/{$\wr(x,1)$} ;
  f/{$\wr(y,1)$} -> g/{$\wr(y,2)$} -> h/{$\rd(x):2$} -> i/{$\rd(x):1$} -> j/{$\wr(y,1)$} ;
  b -> h;
  g -> c;
  e -> i;
  j -> d;
      };
  \end{tcolorbox}
  \end{boxes}
  \caption{History acceptable by the causal memory specification, where the
  final read of each process can only be explained by a causality loop not
  captured in the causality order.
  The execution, in our framework, with the visibility edges from the final
  writes to reads, to justify the results, is not well-formed, as it violates
  \axiom{vis}{pr}.}
  \label{fig:causal-oota}
\end{figure*}

It is easy to see that causal consistency is stronger than (implies)
pipelined consistency, as causality is stronger than pipelining.

\subsection{Sequential consistency}

Sequential consistency is when there is some global total order
respecting program order, which is seen by all processes, defining both
visibility and each serialization. This makes the outcome equivalent to a
sequential execution at a single process.

\begin{definition}[Sequential Consistency]
An execution $(H, \po, \vis, \{\ser_i\}_{i \in \ids})$, satisfies
sequential consistency if and only if $\vis$ is a strict total order on $H$
such that, for any process $i$:
\[
  {\vis} \supseteq {\po} \land {\ser_i} = {\vis}
  \tag{$\cons{seq}$}
\]
\end{definition}

It is easy to see that sequential consistency is stronger than (implies)
causal consistency.

\begin{proposition}
If some execution $(H, \po, \vis, \{\ser_i\}_{i \in \ids})$ satisfies sequential
consistency, then it also satisfies causal consistency.
\end{proposition}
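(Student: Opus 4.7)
The plan is to unfold the two components of causal consistency, namely causality ($\axiom{prop}{causal}$) and serial consistency ($\cons{serial}$), and verify each from the sequential consistency hypothesis. The linchpin of the argument is a small observation: under $\cons{seq}$, the relations $\vis$ and $\hb$ coincide, so anything said about $\hb$ can be rephrased as a statement about $\vis$.

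First I would establish $\hb = \vis$. Since $\vis$ is a strict total order (in particular transitive) and contains $\po$, we have $\po \cup \vis = \vis$, and therefore $\hb = (\po \cup \vis)^+ = \vis^+ = \vis$. From this, $\axiom{vis}{causal}$ is immediate: $a \hb b$ gives $a \vis b$ directly. For $\axiom{ser}{causal}$, suppose $a \hb b$ and $b \not\hb a$; then $a \vis b$, and since $\ser_i = \vis$ by hypothesis, $a \ser_i b$. This gives $\axiom{prop}{causal}$.

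Next I would verify $\cons{serial}$, i.e., $\ser_i o_i = \vis o_i$ for every event $o_i$. This is trivial: $\ser_i = \vis$ as relations, so the sets of predecessors of $o_i$ under the two relations coincide. Combining with the previous step yields $\axiom{prop}{causal} \land \cons{serial}$, which is the definition of $\cons{causal}$.

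There is essentially no obstacle here; the entire proof reduces to exploiting the two defining equalities of sequential consistency ($\vis$ is a total order containing $\po$, and $\ser_i = \vis$) to collapse $\hb$ onto $\vis$ and then reading off each required axiom. The only subtlety worth stating explicitly is the collapse $\hb = \vis$, since it justifies translating causality conditions (phrased via $\hb$) into facts about $\vis$ and $\ser_i$.
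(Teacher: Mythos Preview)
Your proof is correct and follows essentially the same approach as the paper: establish $\hb = \vis$ from the fact that $\vis$ is a strict total order containing $\po$, then use $\ser_i = \vis$ to read off $\axiom{vis}{causal}$, $\axiom{ser}{causal}$, and $\cons{serial}$. Your version is slightly more explicit in spelling out why the transitive closure collapses, but the argument is the same.
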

\begin{proof}
  As ${\vis} \supseteq {\po}$ and $\vis$ is a strict total order, then ${\hb} =
  {\vis}$
  and so $\axiom{vis}{causal}$ holds.
  As ${\ser_i} = {\vis} = {\hb}$, for each process $i$, both $\axiom{ser}{causal}$
  and $\cons{serial}$ hold.
\end{proof}

As an example of what cyclic visibility allows, a relaxed variant of
sequential consistency can be defined: \emph{Set-Sequential Consistency}. This
is the analogue for timeless consistency models of
set-linearizability~\cite{DBLP:conf/podc/Neiger94}, and expresses that there
is a total order of equivalence classes of concurrent operations.

\begin{definition}[Set-Sequential Consistency]
An execution $(H, \po, \vis, \{\ser_i\}_{i \in \ids})$, satisfies
set-sequential consistency if and only if $\vis^=$ is a total preorder on $H$
such that, for any process $i$, and any events $a$, $b$ in $H$:
\[
  {\vis} \supseteq {\po} \land (a \vis b \land b \not\vis a) \implies a \ser_i b
  \tag{\cons{set-seq}}
\]
\end{definition}

\section{Convergence}
\label{sec:convergence}

Strong consistency models, such as sequential consistency, give us
convergence. Even if the implementation involves several replicas, the outcome
is equivalent to as if there is a single replica. I.e., all replicas are
always converged, and convergence is not even explicitly mentioned.

For weaker models convergence becomes an issue, but as achieving convergence
is a liveness property, convergence by itself has not been incorporated in
consistency models, let alone as a safety property. Some
taxonomies~\cite{DBLP:journals/csur/ViottiV16} use the notion of
\emph{Eventual Consistency} or \emph{Strong Eventual Consistency}, but
these involve liveness, unlike criteria that are purely about safety, as
desirable for consistency models. Moreover, eventual consistency only allows
reasoning about infinite histories.
Even~\textcite{MADahlin2011}, devoted to convergence, mentions it a as liveness
property. Specific models have been presented, such as \emph{Causal
Convergence}~\cite{DBLP:conf/ppopp/PerrinMJ16}, but based on eventual
consistency (therefore involving liveness) and overly specific (assuming a
global arbitration). Convergence as a standalone safety criterion, to be
combined with others, is missing from taxonomies of consistency models.

\subsection{Convergence}

We present here convergence as a safety property, defining \emph{Convergent
Consistency}, or simply \emph{Convergence} for short.
Eventual consistency involves liveness, being applicable only to infinite
executions, but not to finite executions corresponding to actual runs of some
distributed system. The key insight is not to use eventual consistency as
criterion in models, but split it into the two properties that eventually
consistent systems ensure, according to the original detailed introduction of
the term eventual
consistency~\cite{DBLP:conf/pdis/TerryDPSTW94,DBLP:conf/sosp/TerryTPDSH95}.
One is about liveness and the other exclusively about safety. Rephrasing those
properties in the current framework terminology:
\begin{itemize}
  \item eventual visibility: each operation becomes visible eventually; only a
    finite number of operations can have it not visible (liveness).
  \item convergence: processes having the same set of visible operations
    have equivalent states, returning the same result from the same operation
    (safety).
\end{itemize}

This second property is exactly the perfect criterion for convergence
for consistency models:
exclusively about safety and quite general (not requiring an arbitration).

\begin{definition}[Convergence]
An execution $(H, \po, \vis, \{\ser_i\}_{i \in \ids})$, satisfies
convergence if and only if it satisfies:
\[
  \op(a) = \op(b) \land {\vis a} =  {\vis b} \implies \re(a) = \re(b)
  \tag{$\axiom{res}{conv}$}
\]
\end{definition}

This means that different processes that have the same set of visible
operations will return the same result, even if the operations became visible
in different orders and the processes have different serializations.

The first two executions in Figure~\ref{fig:convergence-examples} are with
and without convergence. A counter (\ref{fig:convergence-examples-counter}),
with increment being commutative satisfies convergence, returning the same
values for the same set of increments. This holds both for a concurrent and a
sequential counter specification.
In the queue execution (\ref{fig:convergence-examples-queue}), assuming closed
past, the remote enqueue must be serialized after the local enqueue, leading
to different outcomes of the final queries.  So, the same visible operations
lead to different outcomes, not satisfying convergence.

\begin{figure*}
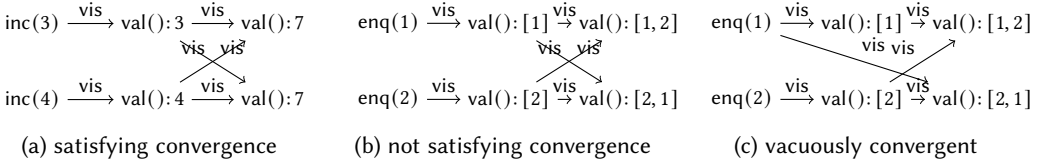

  \footnotesize
  \begin{boxes}{3}
    \begin{tcolorbox}
      \tikz \graph [edge label=${\vis}$, grow right=16mm] {
        a/$\inc(3)$ -> b/$\val():3$ -> c/$\val():7$;
        d/$\inc(4)$ -> e/$\val():4$ -> f/$\val():7$;
        b -> f;
        e -> c;
};
      \subcaption{satisfying convergence}
      \label{fig:convergence-examples-counter}
    \end{tcolorbox}
    \begin{tcolorbox}
      \tikz \graph [edge label=${\vis}$, grow right=16mm] {
        a/$\enq(1)$ -> b/$\val():{[1]}$ -> c/$\val():{[1,2]}$;
        d/$\enq(2)$ -> e/$\val():{[2]}$ -> f/$\val():{[2,1]}$;
        b -> f;
        e -> c;
};
      \subcaption{not satisfying convergence}
      \label{fig:convergence-examples-queue}
    \end{tcolorbox}
    \begin{tcolorbox}
      \tikz \graph [edge label=${\vis}$, grow right=16mm] {
        a/$\enq(1)$ -> b/$\val():{[1]}$ -> c/$\val():{[1,2]}$;
        d/$\enq(2)$ -> e/$\val():{[2]}$ -> f/$\val():{[2,1]}$;
        a -> f;
        e -> c;
};
      \subcaption{vacuously convergent}
      \label{fig:defeating-convergence-queue}
    \end{tcolorbox}
  \end{boxes}
  \caption{Executions a) with a counter, convergent; b) a queue, with closed
  past, not convergent); c) alternative execution for history in b), vacuously
  convergent.
  Operations from each process from left to right, program order left out.
  Transitive visibility, only direct edges drawn.}
  \label{fig:convergence-examples}
\end{figure*}

Deterministic concurrent specifications satisfy convergence trivially, given that
the semantic function ignores serialization ($\sfR{o,\vis,\ser_i} = \sfC{o,
\vis}$). Two events with the same operation and the same set of visible
events must return the same result in a valid execution.

\subsection{Arbitration}

For sequential specifications, a way to ensure convergence is to
have the same serialization at all processes. Some frameworks for eventual
consistency~\cite{DBLP:journals/ftpl/Burckhardt14,DBLP:conf/popl/BurckhardtGYZ14},
assume the existence of an \emph{arbitration}, a global total order (or a
union of total orders, but global anyway). Our framework includes per-process
serializations, possibly different in general. We say that an execution where
all serializations are the same satisfies \emph{arbitration}.


\begin{definition}[Arbitration]
An execution $(H, \po, \vis, \{\ser_i\}_{i \in \ids})$, satisfies
arbitration if all serializations are identical,
i.e., for any processes $i$ and $j$:
\[
  {\ser_i} = {\ser_j}
  \tag{$\axiom{ser}{arb}$}
\]
\end{definition}

From its definition, arbitration implies convergence, for any kind
of specification in our framework.

\begin{proposition}
If some execution $(H, \po, \vis, \{\ser_i\}_{i \in \ids})$ satisfies
arbitration, then it also satisfies convergence.
\end{proposition}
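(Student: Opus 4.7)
The plan is to show that under arbitration, the result validity axiom forces equal results for events with equal operation and equal visibility sets, by considering the two classes of semantic functions defined in the framework.

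First I would fix two events $a$ and $b$ with $\op(a) = \op(b)$ and ${\vis a} = {\vis b}$, say with $a$ performed by process $i$ and $b$ by process $j$. Since the execution is valid, $\axiom{res}{val}$ gives $\re(a) = \sfR{a, \vis, \ser_i}$ and $\re(b) = \sfR{b, \vis, \ser_j}$. The goal is to conclude these two applications of $\sfR{}$ coincide.

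Then I would split on the form of the semantic function. For concurrent specifications, $\sfR{o, \vis, \ser_k} = \sfC{o, \vis}$ ignores the per-process serialization, so equality of $\op$ together with equality of the overall visibility (and of ${\vis a} = {\vis b}$) immediately yields $\re(a) = \re(b)$; this case does not even need arbitration, as already noted in the text after the convergence definition. For sequential specifications, $\sfR{o_k, \vis, \ser_k} = \sfS{\op(o_k), [\op(x) \mid x \in {\ser_k}\domainrestrict{\vis\, o_k}]}$, so it suffices to show the two argument lists to $\sfS{}$ are identical: the first arguments match because $\op(a) = \op(b)$, and the list arguments match because ${\ser_i} = {\ser_j}$ by $\axiom{ser}{arb}$ and the restricting sets ${\vis a}$ and ${\vis b}$ are equal by hypothesis, so ${\ser_i}\domainrestrict{\vis\, a} = {\ser_j}\domainrestrict{\vis\, b}$ as lists of events, yielding equal lists of operations.

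There is no hard step here; the argument is essentially bookkeeping. The only thing to be careful about is that $\ser_i$ is being used as a list (sequence) when fed to $\sfS{}$, so we need equality of the restricted relations to translate into equality of the ordered sequences — but this is immediate because a strict total order on a fixed set uniquely determines the sequence. Thus $\re(a) = \re(b)$ in both cases, establishing $\axiom{res}{conv}$.
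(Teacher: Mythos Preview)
Your proof is correct and follows essentially the same approach as the paper's: both invoke result validity and observe that with all $\ser_i$ equal, the semantic function yields the same output for events with the same operation and the same set of visible events. You are simply more explicit, unpacking the argument into the sequential and concurrent cases, whereas the paper treats $\sfR{}$ abstractly in a single sentence; your added care about the list/relation coercion for $\ser_i$ is a nice touch but not something the paper spells out.
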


\begin{proof}
A valid execution must satisfy result validity: $\re(o) =
\sfR{o,\vis,\ser_i}$, for each event $o$.  If all $\ser_i$ are the same for
each process $i$ (arbitration), the semantic function will give the same
result for two events $a$ and $b$ with the same operation and set of visible
events, making $\re(a) = \re(b)$.
\end{proof}

From the definition it also follows trivially that sequential consistency is
stronger than arbitration.

\begin{proposition}
If some execution $(H, \po, \vis, \{\ser_i\}_{i \in \ids})$ satisfies sequential
consistency, then it also satisfies arbitration.
\end{proposition}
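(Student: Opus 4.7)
The plan is to unfold the two definitions and observe that sequential consistency already fixes a single total order that coincides with each process's serialization, so arbitration falls out immediately.

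First I would invoke the definition of sequential consistency to obtain a strict total order $\vis$ on $H$ with $\vis \supseteq \po$, and, crucially, the equality ${\ser_i} = {\vis}$ for every process $i$. This equality is the only hypothesis I actually need; the constraint $\vis \supseteq \po$ plays no role in establishing arbitration (it is relevant for other consequences like causal consistency, already handled in the preceding proposition).

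Next, I would pick two arbitrary processes $i$ and $j$ and combine the two instantiations $\ser_i = \vis$ and $\ser_j = \vis$ to conclude $\ser_i = \ser_j$ by transitivity of equality. Since $i$ and $j$ were arbitrary, this establishes the axiom $\axiom{ser}{arb}$, and hence arbitration, for the execution.

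I do not anticipate any obstacle: the proof is essentially a one-line chain of equalities, and there is no subtle interaction with visibility, well-formedness, or the semantic function. The only pitfall to avoid is conflating arbitration with convergence — but since the preceding proposition already shows arbitration implies convergence, and the present statement is the stronger claim that sequential consistency gives a single shared serialization, it suffices to exhibit that shared serialization as $\vis$ itself.
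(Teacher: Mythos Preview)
Your proposal is correct and matches the paper's approach: the paper treats this proposition as following trivially from the definition, giving no explicit proof, and your argument---$\ser_i = \vis = \ser_j$ for arbitrary $i,j$---is exactly the immediate observation the paper has in mind.
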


So, arbitration can be used to achieve convergence, for sequential
specifications. Sequential consistency is a special case, but weaker
models satisfying arbitration will also allow convergence.
Deterministic concurrent specifications do not need arbitration as they will
always satisfy convergence.

\begin{proposition}
Some execution $(H, \po, \vis, \{\ser_i\}_{i \in \ids})$ satisfies sequential
consistency if and only if it satisfies serial consistency and arbitration:
\[
  \cons{seq} \iff \cons{serial} \land \axiom{ser}{arb}.
\]
\end{proposition}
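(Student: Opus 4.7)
The plan is to prove the two implications separately, noting that the forward direction is essentially immediate from the definitions and the reverse direction requires lifting the per-event condition of $\cons{serial}$ into a global equality of relations, using $\axiom{ser}{arb}$ as the bridge.

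For the forward direction ($\cons{seq} \implies \cons{serial} \land \axiom{ser}{arb}$), I would unfold the definition of $\cons{seq}$, which asserts $\ser_i = \vis$ for every process $i$. Arbitration follows instantly: all $\ser_i$ coincide, since each equals the single relation $\vis$. For serial consistency, observe that $\ser_i o_i = \vis o_i$ is a direct consequence of $\ser_i = \vis$; the left-set of $o_i$ under $\ser_i$ must equal the left-set of $o_i$ under $\vis$.

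For the reverse direction ($\cons{serial} \land \axiom{ser}{arb} \implies \cons{seq}$), I would first let $\ser$ denote the common serialization guaranteed by $\axiom{ser}{arb}$, so $\ser = \ser_i$ for every $i$. Then I would show $\vis = \ser$ by a double inclusion at the level of pairs: for any event $b$, let $i = \pr(b)$; then $\ser b = \ser_i b = \vis b$ by serial consistency, which gives $a \vis b \iff a \ser b$ for every $a$. Since this holds for all $b$, the relations $\vis$ and $\ser$ are equal. From this equality I can read off the three clauses of $\cons{seq}$: $\vis$ is a strict total order because $\ser$ is (each $\ser_i$ is a strict total order by $\axiom{ser}{vis}$); $\ser_i = \vis$ for every $i$ by transitivity through $\ser$; and $\vis \supseteq \po$ follows because the earlier proposition established $\cons{serial} \implies \axiom{vis}{loc}$.

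There is no real obstacle, as the argument is essentially definition-chasing; the only subtle point is the transition from the event-wise formulation $\ser_i o_i = \vis o_i$ to the relational equality $\ser_i = \vis$, which needs the remark that a binary relation on $H$ is determined by its left-sets at every element. Once this bookkeeping step is made, the rest reduces to pulling in the previously established fact that $\cons{serial}$ entails local visibility and the well-formedness axiom $\axiom{ser}{vis}$ guaranteeing each $\ser_i$ is a strict total order.
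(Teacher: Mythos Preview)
Your proposal is correct and follows essentially the same approach as the paper: the forward direction is immediate, and the reverse direction uses arbitration to lift the per-event equivalence $a \ser_i b_i \iff a \vis b_i$ from serial consistency to the global equality $\ser_i = \vis$, with $\vis \supseteq \po$ coming from local visibility. Your formulation is slightly more streamlined (picking $i = \pr(b)$ directly rather than the paper's case split on whether $b$ belongs to process $i$ or some other $j$), but the underlying argument is the same.
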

\begin{proof}
As sequential consistency implies causal consistency, therefore serial
consistency, and arbitration, it remains to show the reverse
implication, i.e., that that these imply: 
  1) ${\vis} \supseteq {\po}$; this is local visibility, implied by serial
  consistency.
  2) ${\ser_i} = {\vis}$, for each process $i$;
there are two possible cases for pairs of events, an event $a$ and either:
an event $b_i$ from process $i$, in which case,
from serial consistency, $a \ser_i b_i \iff a \vis b_i$; or an event
$b_j$ from some other process $j$; then, $a \ser_j b_j \iff a \vis b_j$; but
as all serializations are identical, it also implies that
$a \ser_i b_j \iff a \vis b_j$.
\end{proof}

\section{Existential versus universal quantification of executions}

The previous sections have defined consistency criteria for valid executions.
But the ultimate aim is to express whether a history satisfies a consistency
model, and to be able to validate implementations. An implementation of an
abstraction aiming to satisfy a consistency model is correct if any possible
history allowed by the implementation satisfies it.

For traditional consistency models, a history is defined as satisfying the
model is there is some valid execution justifying the history that satisfies
the model. This amounts to existentially assuming such an execution.

\begin{definition}[Existential satisfaction for histories]
A history $H$, with program order $\po$, satisfies existentially a consistency
model $C$ if there exists some valid execution $(H, \po, \vis, \{\ser_i\}_{i
\in \ids})$, that satisfies $C$.
\end{definition}

This definition is appropriate to define whether a history satisfies any of the
consistency criterion discussed above, with one exception: convergence.
The way this criterion is defined, while suitable considering specific
executions, renders the existential quantification definition useless, as it
would cause most histories to be vacuously considered as satisfying it.

The reason is that normally there are many valid executions that justify a
given history, differing in visible edges or serializations. For most
histories, it would be possible to pass the criterion by ``feeding'' different
sets of visible operations to $a$ and $b$, whenever $\op(a) = \op(b)$. This
can be done, e.g., making redundant operations, such as pure queries, visible
to $a$ but not $b$ or vice versa.
This frees $a$ and $b$ from the requirement to return the same result.

An alternative to the queue execution in
Figure~\ref{fig:convergence-examples-queue}, justifying the same history,
could be existentially assumed, in which different queries see different sets
of visible operations. This is shown in
Figure~\ref{fig:defeating-convergence-queue}.  As no two $\val$ from different
processes have the same set of visible operations, the constraint is vacuously
met. This would make the history be considered as satisfying convergent
consistency, under the existential definition, something clearly not
desirable.

It could be thought that the difficulty results from allowing pure queries to be
considered visible. But not allowing them to be visible not only would ruin
the framework elegance (by not allowing simple rules, such as ${\vis} = {\hb}$ for
causal visibility) but would not solve the problem. The reason is that
different sets of visible operations can be formed through redundant
updates, updates that cancel each other, or equivalent updates from different
processes (that are different events).

Two examples are shown in Figure~\ref{fig:defeating-convergence-examples}. On
the left, an execution involving a queue, with two extra processes issuing an
enqueue each, identical to the other two enqueues. In this execution no two
$\val$ queries have the same set of visible events, therefore vacuously satisfying
convergence. On the right, an execution for a stack, resembling
Figure~\ref{fig:convergence-examples-queue} for the two
topmost processes, with a third process issuing a push and pop canceling
each other, while contributing to the visible events of the second process,
which results in the identical outcome of vacuous satisfaction of convergence.

\begin{figure*}
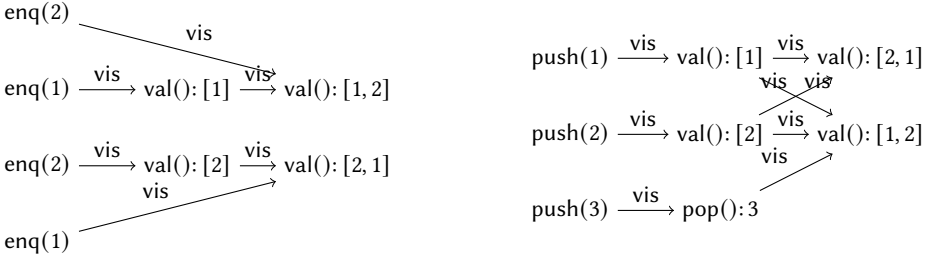

  \small
  \begin{boxes}{2}
    \begin{tcolorbox}
      \tikz \graph [edge label=${\vis}$, grow right=20mm] {
        g/$\enq(2)$;
        a/$\enq(1)$ -> b/$\val():{[1]}$ -> c/$\val():{[1,2]}$;
        d/$\enq(2)$ -> e/$\val():{[2]}$ -> f/$\val():{[2,1]}$;
        h/$\enq(1)$;
        g -> c;
        h -> f;
};
    \end{tcolorbox}
    \begin{tcolorbox}
      \tikz \graph [edge label=${\vis}$, grow right=20mm] {
        a/$\push(1)$ -> b/$\val():{[1]}$ -> c/$\val():{[2,1]}$;
        d/$\push(2)$ -> e/$\val():{[2]}$ -> f/$\val():{[1,2]}$;
        g/$\push(3)$ -> h/$\pop():{3}$;
        b -> f;
        e -> c;
        h -> f;
};
    \end{tcolorbox}
  \end{boxes}
  \caption{Executions vacuously satisfying convergence, for queue
  and stack, with closed past.
  Operations from each process from left to right, program order left out.
  Transitive visibility, only direct edges drawn.}
  \label{fig:defeating-convergence-examples}
\end{figure*}

For each of these two histories, an alternative execution exists
not satisfying convergence, similar to the one in
Figure~\ref{fig:convergence-examples-queue} in what concerns the
processes issuing $\val$, with the operations from the extra processes not
visible. Such executions could correspond to a run with the extra process(es)
partitioned, with the two final $\val$ having identical sets of visible
operations while having different results.
It is desirable that this history is considered not convergent, as
some executions that explain it violate convergence. The way to achieve it is to
consider a history as satisfying convergence only if all valid
executions for the history satisfy it. This amounts to universally quantifying
over valid executions.

\begin{definition}[Universal satisfaction for histories]
A history $H$, with program order $\po$, satisfies universally a consistency
model $C$ if any valid execution $(H, \po, \vis, \{\ser_i\}_{i \in \ids})$,
satisfies $C$.
\end{definition}

Universal satisfaction will thus be the appropriate criterion for classifying a
history as satisfying convergence. It means that
whatever the way the history can be justified (i.e., no matter the way
that operations became visible and were ordered) the convergence requirement
holds.
While the other criteria impose constraints relating program order, visibility
and serializations, for which some solution is sought, convergence has a
different nature, resulting from having ${\vis a} = {\vis b}$ in the
antecedent of an implication. Universal satisfaction prevents undesired vacuous
solutions that contradict the desired classification.

We conjecture that universal satisfaction is not needed for the purpose of
checking implementations: as an implementation must be correct for every
history that can be produced, for each vacuously accepted history, another
history without the extraneous events would also be produced, which would not
be vacuously accepted, failing for a wrong implementation. But even if the
conjecture is true, it would not solve the problem of classifying histories
appropriately regarding convergence, for which universal satisfaction is the
solution.

%

\section{Non-serial consistency models with arbitration}

Replica convergence for sequential specifications can be achieved
by establishing an arbitration, which defines the sequence in which operations
are applied. To ensure high-availability two approaches have been pursued,
which we describe next:
one forgoes closed past while keeping monotonic and local visibility; 
the other forgoes local visibility while keeping monotonic visibility and closed past.
A weaker variant could keep only monotonic visibility and forgo the other two
basic axioms, but there is no good reason to do so, and we are not aware of
existing models that do it.

\subsection{Replay Consistency}

To ensure high-availability, one approach, followed by
Bayou~\cite{DBLP:conf/sosp/TerryTPDSH95}, is to make both local operations and
remote operations that have been received immediately visible (applied), even if
other operations could arrive in the future, and be arbitrated before them.
When an operation $o$ arrives, all operations already applied and arbitrated
after $o$ are undone and then reapplied after applying $o$.
OpSets~\cite{DBLP:journals/corr/abs-1805-04263} and
ECROs~\cite{DBLP:journals/pacmpl/PorreFPB21} follow a similar approach.
In our framework this amounts to forgoing closed past while keeping the other
two basic axioms. We name this model \emph{Replay Consistency}.

\begin{definition}[Replay Consistency]
An execution $(H, \po, \vis, \{\ser_i\}_{i \in \ids})$, satisfies
replay consistency if and only if it satisfies monotonic visibility, local
visibility, and arbitration:
\[
  \axiom{vis}{mon} \land
  \axiom{vis}{loc} \land
  \axiom{ser}{arb}
  \tag{$\cons{replay}$}
\]
\end{definition}

\begin{proposition}
If some execution satisfies replay consistency, then it also satisfies 
pipelined serializations.
\end{proposition}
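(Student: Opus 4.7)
The plan is to chain together the three ingredients of replay consistency with the well-formedness axiom $\axiom{ser}{vis}$, which is available because we only consider valid executions. Suppose $a \po b$; we need to show $a \ser_i b$ for every process $i$.

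First I would apply local visibility ($\axiom{vis}{loc}$) to $a \po b$ to obtain $a \vis b$. Let $j$ be the process of $b$ (so $b = b_j$ in the notation of $\axiom{ser}{vis}$). The serialization-of-visibility axiom, which holds for every valid execution, immediately gives $a \ser_j b$. Finally, arbitration ($\axiom{ser}{arb}$) states that $\ser_i = \ser_j$ for any other process $i$, so $a \ser_i b$ holds universally, which is exactly $\axiom{ser}{pipe}$.

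There is no real obstacle; monotonic visibility is not even used, and the argument is just a one-line composition: $\axiom{vis}{loc}$ turns program order into visibility, $\axiom{ser}{vis}$ turns visibility into an entry of the local serialization at the target's process, and $\axiom{ser}{arb}$ propagates that entry to all serializations. The only thing to be careful about is to invoke $\axiom{ser}{vis}$ at the process of $b$ (not of $a$ or an arbitrary $i$), since that axiom is stated only for events $b_i$ belonging to process $i$; arbitration is then what bridges the gap to a statement about every process.
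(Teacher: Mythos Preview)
Your proof is correct and follows essentially the same route as the paper: apply $\axiom{vis}{loc}$ to turn $a \po b$ into $a \vis b$, invoke $\axiom{ser}{vis}$ at the process owning $b$ to get the serialization edge there, and then use $\axiom{ser}{arb}$ to propagate it to every process. Your observation that monotonic visibility plays no role and your care in applying $\axiom{ser}{vis}$ at the correct process are both on point.
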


\begin{proof}
For any events $a$ and $b$ from the same process $i$, if $a \po b$ then $a \vis
b$ ($\text{VIS}_{LOC}$), which implies $a \ser_i b$ ($\text{SER}_{VIS}$);
therefore, $a \ar b$, which means that for any other process $j$, $a \ser_j b$.
\end{proof}

However, replay consistency does not imply pipelined visibility (and
therefore, pipelining), as the lack of closed past allows non-visible events to
be serialized before visible ones.
A stronger variant of replay consistency can be defined, satisfying
pipelining and an even stronger one satisfying causality.

\begin{definition}[Pipelined Replay Consistency]
An execution $(H, \po, \vis, \{\ser_i\}_{i \in \ids})$, satisfies
pipelined replay consistency if and only if it satisfies replay consistency and
pipelining:
\[
  \cons{replay} \land
  \axiom{prop}{pipe}
  \tag{$\cons{preplay}$}
\]
\end{definition}

\begin{definition}[Causal Replay Consistency]
An execution $(H, \po, \vis, \{\ser_i\}_{i \in \ids})$, satisfies
causal replay consistency if and only if it satisfies replay consistency and
causality:
\[
  \cons{replay} \land
  \axiom{prop}{causal}
  \tag{$\cons{creplay}$}
\]
\end{definition}

We remark that undoing and re-executing operations is conceptual, and may
be optimized in practice, depending on the data abstraction. For memory
(read/write operations) no re-execution is necessary, being enough to simply
ignore writes being received if they are arbitrated before an already visible
write for the same memory location.

\subsection{Prefix Consistency}
\label{sec:prefix-consistency}

An alternative approach, to avoid having to replay operations and have
immediate finality, once an operations is executed, is to keep closed past.
An operation is only made visible once no more operations can
be arbitrated before it, which implies that local operations may need to have
their visibility postponed for some time, i.e., forgoing local visibility.
We name the resulting model \emph{Prefix Consistency}.

\begin{definition}[Prefix Consistency]
An execution $(H, \po, \vis, \{\ser_i\}_{i \in \ids})$, satisfies
prefix consistency if and only if it satisfies monotonic visibility, closed past, and
arbitration:
\[
  \axiom{vis}{mon} \land
  \axiom{ser}{clo} \land
  \axiom{ser}{arb}
  \tag{$\cons{prefix}$}
\]
\end{definition}

By forgoing local visibility, prefix consistency does not necessarily imply
pipelined serializations, and therefore, pipelining.
A stronger variant of prefix consistency can be defined, satisfying
pipelining.

\begin{definition}[Pipelined Prefix Consistency]
An execution $(H, \po, \vis, \{\ser_i\}_{i \in \ids})$, satisfies
pipelined prefix consistency if and only if it satisfies prefix consistency and
pipelining:
\[
  \cons{prefix} \land
  \axiom{prop}{pipe}
  \tag{$\cons{pprefix}$}
\]
\end{definition}

%

An even stronger model can be defined, satisfying causality.

\begin{definition}[Causal Prefix Consistency]
An execution $(H, \po, \vis, \{\ser_i\}_{i \in \ids})$, satisfies
causal prefix consistency if it satisfies prefix consistency and
causality:
\[
  \cons{prefix} \land
  \axiom{prop}{causal}
  \tag{$\cons{cprefix}$}
\]
\end{definition}

We remark that causal prefix consistency satisfies arbitration, monotonic
visibility, local visibility (implied by causality), and closed past.
The three basic axioms together are weaker than serial consistency.
Together with arbitration they remain weaker than sequential
consistency, as exemplified in Figure~\ref{fig:non-serial}, for which arbitration also
holds, but the execution is not sequentially consistent.
Therefore, causal prefix consistency is weaker than sequential consistency.
In practice, it is not an interesting model, as it does not allow wait-free
implementations for general abstractions, as we will show next, while still
allowing ``strange'' outcomes.

\section{The CAL trilemma for wait-free systems}

The CAP theorem~\cite{DBLP:conf/hotos/FoxB99,DBLP:journals/sigact/GilbertL02}
implies that if a system aims to remain always Available (i.e., operations
returning a result) even during network Partitions, strong Consistency (more
exactly, linearizability) cannot be ensured. But it does not help in
understanding possible tradeoffs in the AP region of the design space, i.e.,
highly available systems.

In a highly available system operations must return without waiting
unboundedly for messages (as there could be a partition lasting an arbitrary
amount of time). This means that operations must be able to return after some
finite steps without depending on other processes.  These systems are, therefore,
wait-free~\cite{DBLP:journals/toplas/Herlihy91}.  Finite timeouts can be
used, but to achieve essentially ``zero latency'', a common approach used by
CRDTs is not to wait for any message, but simply use the local state at
invocation time (i.e., the operations already visible) to compute the
result.

We now show that AP systems, satisfying Monotonic visibility, cannot ensure
simultaneously: 1) Closed past, 2) Arbitration and 3) Local visibility.
This is expressed in the CLAM theorem that we present below. Given that
monotonic visibility is fundamental and should not be forgone, the CAL trilemma is
about which of the other three guarantees (C, A, or L) to forgo.

\subsection{Phantom visibility, overriding, and irredundant operations}

There is a technical issue that must be addressed. While the definition of
visibility states ``if the effect is visible'', there is considerable freedom
when existentially assuming executions whether to consider some event as
visible or not, if the effect is irrelevant or redundant, regardless of it
being ``really visible'' due to physical propagation of information.  This
will make it seem possible to overcome the CAL trilemma, implying some care in
formulating the CLAM Theorem below.

Consider the execution involving a register in
Figure~\ref{fig:defeating-cal-phatom-visibility-register}.
Not only it satisfies serial consistency but also arbitration (an arbitration
can be defined combining the fragments of serializations shown), overcoming the trilemma.
The visibility edge from event $d$ to $a$ (in blue) was existentially assumed
so that $d$ can be serialized before $a$ in process $i$, without violating
closed past, allowing identical serializations for both processes, i.e., arbitration.
But this visibility edge carries no information, being redundant; the
operations give the same result with or without it; we refer to it as
\emph{phantom visibility}.

\begin{figure*}
  \small
  \begin{boxes}{1}
  \let\wr=\relax
  \let\rd=\relax
  \auxfun{wr}
  \auxfun{rd}
  \begin{tcolorbox}
      \tikz \graph [edge label=${\vis}$, grow right=20mm] {
        a/$i\quad\wr^a(1)$ -> b/$\rd^b():1$ -> c/$\rd^c():1$;
        d/$j\quad\wr^d(2)$ -> e/$\rd^e():2$ -> f/$\rd^f():1$;
        a -> f;
        d ->[blue] a;
      };
  \[
    \begin{split}
      {\ser_i} & \supset [\wr^d(2), \wr^a(1), \rd^b():1, \rd^c():1 ] \\
      {\ser_j} & \supset [\wr^d(2), \rd^e():2, \wr^a(1), \rd^f():1 ]
    \end{split}
  \]
    \tcblower
    \subcaption{register satisfying serial consistency and arbitration}
    \label{fig:defeating-cal-phatom-visibility-register}
  \end{tcolorbox}
  \begin{tcolorbox}
      \tikz \graph [edge label=${\vis}$, grow right=20mm] {
        a/{$i\quad\wr^a(x,1)$} -> b/{$\rd^b(x):1$} -> c/{$\rd^c(y):0$}
        -> d/{$\rd^d(y):2$};
        e/{$j\quad\wr^e(y,2)$} -> f/{$\rd^f(y):2$} -> g/{$\rd^g(x):0$}
        -> h/{$\rd^h(x):1$};
        a -> h;
        e -> d;
      };
  \[
    \begin{split}
      {\ser_i} & \supset
      [\wr^a(x,1), \rd^b(x):1, \rd^c(y):0, \wr^e(y,2), \rd^d(y):2] \\
      {\ser_j} & \supset
      [\wr^e(y,2), \rd^f(y):2, \rd^g(x):0, \wr^a(x,1), \rd^h(x):1]
    \end{split}
  \]
    \tcblower
    \subcaption{memory satisfying serial consistency but not arbitration}
    \label{fig:defeating-cal-phatom-visibility-memory}
  \end{tcolorbox}
  \end{boxes}
  \caption{Executions satisfying serial consistency and either (a) satisfying
  arbitration or (b) not satisfying arbitration. Transitive visibility, only direct edges
  drawn. Serializations showing own operations and writes.
  Phantom edges added to maintain visibility transitive assumed for (a) and not
  explicitly drawn.}
  \label{fig:defeating-cal-phatom-visibility}
\end{figure*}

A phantom visibility edge provides no actual information and when added to an
execution results in an alternative execution that explains the same history.
They are useful to existentially assume executions aiming to satisfy some desired
constraint, for some consistency model, allowing elegant formulations.

\begin{definition}[Phantom visibility]
Given a valid execution $E = (H, \po, \vis, \{\ser_i\}_{i \in \ids})$,
a visibility edge $e$ not present in $\vis$ is phantom if adding it to $\vis$ results in
  another execution $E' = (H, \po, {\vis} \union \{e\}, \{\ser_i\}_{i \in
  \ids})$, for the same history $(H, \po)$, which is also valid.
\end{definition}

In the register example
(Figure~\ref{fig:defeating-cal-phatom-visibility-register}), suppose that the
system was partitioned until after the first query from each process. Those
reads would return, 1 and 2 respectively, seeing the effect of the local
write. When the write from $j$ arrives at $i$, this process
``pretends'' that it was already visible and serialized before $a$,
effectively ignoring it.

This ``pretending'' is possible for this abstraction (single register),
because every update overrides all previous updates, making them redundant. It
has a sequential semantics in which only the last update serialized before a
read matters. This makes it possible to pretend, when an operation arrives,
that it was already visible in the past, without affecting history and without implying
re-execution. But the single register abstraction is the exception. For
practically all data abstractions, including memory, such pretense will not be
possible.

Consider the history involving memory in
Figure~\ref{fig:defeating-cal-phatom-visibility-memory}.
Unlike for the register history, for this one no execution can be found
that satisfies both serial consistency and arbitration, by adding phantom
edges to the execution shown. (Otherwise, it would be sequentially consistent,
which is clearly not the case. For this example, no possible result value
for the final read of each process could make it sequentially consistent.)
As each write does not override the other, as they are to different memory
locations, there is no way to pretend that one was already visible, and
serialize it before the other, as in the register example, without altering
the outcome and rendering the execution invalid.


The concept of operation overriding is, therefore, essential to the
definition of the CLAM theorem, but it will be more useful to define the
``opposite'', operation irredundance, for operations that do not override each
other.

\newcommand\viewers[1]{\af{lv}(#1)}

\begin{definition}[Local viewers]
The local viewers of an event $a$ in an execution $E = (H, \po, \vis,
  \{\ser_i\}_{i \in \ids})$, is the set of events
  $\viewers a \defeq \{b | a \po b \land a \vis b\}$,
i.e., those from the same process as $a$ for which $a$ is visible.
\end{definition}

\begin{definition}[Irredundant pair of updates]
Two updates $a_i$ and $b_j$ form an irredundant pair of updates if there
exists a history $(H, \po)$ containing them for which, given any valid
execution $E = (H, \po, \vis, \{\ser_i\}_{i \in \ids})$ that explains it,
satisfying $\forall c \in \viewers{b_j} \cdot a_i \not\vis c $,
$\forall c \in \viewers{a_i} \cdot b_j \not\vis c $,
and $a_i \ser_j b_j$ or $b_j \ser_i a_i$,
adding one visibility edge from $a_i$ to each event in $\viewers{b_j}$,
if $a_i \ser_j b_j$,
  or from $b_j$ to each event in $\viewers{a_i}$,
if $b_j \ser_i a_i$,
will produce an execution
contradicting result validity ($\axiom{res}{val}$).
\end{definition}

We call a history that shows a pair of updates to be irredundant an
\emph{irredundancy witness history}.
This definition expresses that for some histories, given any execution that explains
it with queries that see only one of the updates, making one update visible
to the queries that see the other, before the visible update, will render the
resulting execution invalid (even if well-formed). This will be due to the
histories containing some query operation, for which the newly visible
operation, serialized before the already visible one and not being overridden
by it, but complementing it instead, affects the results, breaking result
validity.

Intuitively, it means that none of the two operations is redundant; having
both operations visible produces a different outcome than having just one of
them visible. The difference in outcomes may need more than one query to be
detected, and not all histories may reveal it;
thus the existential quantification in the definition. The definition
involves a pair of updates and it is about the effect of these to other
operations, not one operation to the other. (Otherwise, a pair containing an
update and a pure query which depends on it, such as a write and a read of a
single register, would be considered an irredundant pair.)
The definition involves adding the invisible update before the visible one, to
reflect the impossibility of adding an edge as phantom. Adding it after would
normally change the outcome, even for a single register.

Examples are any pair of updates, even if identical, for non-idempotent
operations, such as $(\inc(1), \inc(1))$ for a counter, $(\enq(1), \enq(1))$
for a queue, or $(\push(1), \push(1))$ for a stack; or operations
which, even if idempotent, are applied to different objects/locations/elements,
such as $(\add(x), \add(y))$ for a set, or $(\af{wr}(x, 1), \af{wr}(y, 2))$ in the
memory example. In almost all abstractions defined with sequential semantics,
not only the last update matters.

\subsection{The CLAM theorem}

We can now state the CLAM theorem. Being based on the concept of irredundant
operations, it applies to practically all abstractions, where not every
operation overrides every other operation; the single register was the only
counter-example we have found.

\begin{theorem}[CLAM]
In asynchronous distributed systems with at least two processes, any wait-free
implementation of a data abstraction with irredundant updates allows
histories that cannot be explained by any execution that simultaneously
satisfies Closed past, Local visibility, Arbitration, and Monotonic visibility.
\end{theorem}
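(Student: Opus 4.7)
The plan is a classic CAP-style partitioning argument combined with the irredundance definition. Given an irredundant pair $(a_i, b_j)$ with witness history $H$, I use wait-freedom to construct an implementation run that realizes $H$: during a network partition between processes $i$ and $j$, process $i$ performs $a_i$ followed by local queries populating $\viewers{a_i}$, and process $j$ performs $b_j$ followed by local queries populating $\viewers{b_j}$, with any additional events executed after the partition heals to complete the witness. Because the implementation is wait-free, each query during the partition completes in finitely many local steps without awaiting cross-partition messages, so its return value depends only on locally visible events, not on the opposite side's update.

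Suppose for contradiction that a valid execution $E = (H, \po, \vis, \{\ser_i\}_{i \in \ids})$ explains $H$ and satisfies $\axiom{vis}{mon}$, $\axiom{vis}{loc}$, $\axiom{ser}{clo}$, and $\axiom{ser}{arb}$. Arbitration collapses all serializations to a single order $\ser$; without loss of generality assume $a_i \ser b_j$, the opposite case being symmetric by swapping the roles of $a_i$ and $b_j$. For each $c \in \viewers{b_j}$, local visibility gives $b_j \vis c$, so $\axiom{ser}{clo}$ together with $a_i \ser_j b_j$ forces $a_i \vis c$: if instead $a_i \not\vis c$, closed past would yield $b_j \ser_j a_i$, contradicting arbitration.

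To close the argument I invoke the irredundance definition via a companion execution $E_0$ whose visibility matches the actual information flow of the physical run, so that during the partition $a_i$ is not visible to any $c \in \viewers{b_j}$ and $b_j$ is not visible to any $c' \in \viewers{a_i}$; choose serializations for $E_0$ consistent with $\axiom{ser}{vis}$ with $a_i$ placed before $b_j$ in process $j$'s serialization. Because each measured result was computed locally in the physical run, it matches the semantic function evaluated on $E_0$'s visibility, so $E_0$ is valid and meets the antecedent of irredundance. Irredundance then guarantees that some $c \in \viewers{b_j}$ has its semantic function output diverge from $\re(c)$ once the edge $a_i \vis c$ is added. But $E$ has exactly that edge by the preceding paragraph, so the result-validity condition for $c$ must fail in $E$ as well, contradicting its assumed validity.

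The main obstacle is this bridging step: the irredundance property is stated relative to augmenting the specific $E_0$, so one must argue that the same change of semantic function output carries over to the arbitrary $E$, which could differ from $E_0$ in other visibility edges or serialization choices. This reduces to showing that the change introduced by $a_i$ becoming visible to $c$ is robust to context, which is essentially the operative meaning of ``irredundant'' (the effect of $a_i$ is not absorbed by $b_j$ or by other visible events); making this precise may require strengthening the formal definition or invoking it on a suitably chosen $E_0$ that is structurally close to $E$. A minor additional point is ensuring $\viewers{a_i}$ and $\viewers{b_j}$ are non-empty in the constructed run, which follows from local visibility together with the freedom to schedule the local successor queries explicitly during the partition.
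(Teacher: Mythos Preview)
Your approach is essentially the paper's: run the witness history under a full-duration partition so that wait-freedom forces each query's result to depend only on same-process events; assume an execution $E$ satisfying all four axioms; use arbitration to fix $a_i \ser b_j$ WLOG; use local visibility plus closed past to force $a_i \vis c$ for every $c \in \viewers{b_j}$; then invoke irredundance against a baseline execution $E_0$ with only local visibility to contradict result validity of $E$. Your derivation of the forced edge via closed past is in fact more explicit than the paper's one-line ``because $u_j$ is visible to all queries at $j$, so must be $u_i$.''

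The bridging obstacle you flag is the right thing to worry about, and the paper handles it in a way you almost reach but do not quite commit to. The paper does \emph{not} build $E_0$ independently from the physical run and then try to transport the conclusion to an unrelated $E$; instead it defines $E_0$ \emph{from} $E$, as ``similar to $E$ except that visibility is just the union of the visibilities in $E_i$ and $E_j$'' --- same serializations, cross-process visibility stripped. Because the irredundance definition quantifies over \emph{all} valid executions satisfying the antecedent, it applies to this particular $E_0$, and then $E_0$ plus the forced edges is (up to semantically inert edges) $E$ itself. The residual gap --- that $E$ might carry further cross-process edges beyond the forced ones --- is closed by the paper's setup of the witness history: it contains exactly one update per process with the remaining events being queries, so any extra cross-process edges involve only queries and are phantom for result validity. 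You should state that restriction on the witness history explicitly and take $E_0$ from $E$ rather than from the physical run; then your ``robust to context'' worry dissolves without needing to strengthen the definition.
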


\begin{proof}
By contradiction. Assume a system with at least two processes $i$ and $j$, and
a data abstraction having irredundant updates with a wait-free
implementation. Assume an irredundancy witness history $H$ with one update at each
process, $u_i$ and $u_j$, forming an irredundant pair, and one or more queries
at each process, subsequent to the respective update. Assume a partition which
lasts the whole history, which makes no information from $i$ reach $j$ and
vice-versa. Because the implementation is wait-free, the queries have returned
at some point before the partition heals, without depending on actions from
other processes.
This implies that the results of queries at $i$ depend only of events at $i$
(update and queries, if not pure), and the same for $j$. This means that any
execution $E$ that explains the whole history is an extension of two executions
$E_i$ and $E_j$, sharing no events, each restricted to the corresponding
process, with $E$ having the union of the respective events and program
order, visibility containing the respective visibilities and possibly some
extra edges added, and serializations which extend the respective serializations.
Assume $E$ satisfies Monotonic visibility, Arbitration, and Local visibility.
Local visibility implies that the update at $i$ is visible by all queries at
$i$, and the same for $j$.
Due to arbitration, $u_i$ and $u_j$ are ordered in the same way in both
serializations; assume WLOG that $u_i \ar u_j$.
Assume that $E$ also satisfies Closed past. Therefore, because $u_j$ is
visible to all queries at $j$, so must be $u_i$. But the history must be
explained by an execution $E_0$, similar to $E$ except that visibility is just
the union of the visibilities in $E_i$ and $E_j$, corresponding to the absence of
information propagation between $i$ and $j$ during the partition. But the
extra visibility edges in $E$ over $E_0$, from $u_i$ to all queries in $j$, to
satisfy closed past, make $E$ contradict result validity, given that $H$ is
an irredundancy witness history. This contradicts $E$ being a valid execution
which explains the history.
\end{proof}

\subsection{Comparison with the CAP theorem}

We now discuss how CLAM and CAP compare. We show that they are technically
incomparable but CLAM is practically stronger: it is stronger for any data
abstraction to which it applies.

The CAP conjecture~\cite{DBLP:conf/hotos/FoxB99} is somewhat vague; the CAP
theorem was proved~\cite{DBLP:journals/sigact/GilbertL02} for a single
linearizable register. The CLAM theorem does not apply to a single register as
it does not contain irredundant operations, so CLAM cannot be stronger than
CAP. In fact, a single register can be sequentially consistent and wait-free.

\begin{proposition}
A single register can have a wait-free implementation in an asynchronous
distributed system while being sequentially consistent.
\end{proposition}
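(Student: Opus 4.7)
The plan is to exhibit a concrete wait-free implementation and show, by explicit construction, that every history it produces admits an execution satisfying $\cons{seq}$. I would implement the register as a last-writer-wins (LWW) register driven by Lamport timestamps paired with process identifiers as a tiebreaker, so that any two write timestamps are comparable. Each process $i$ maintains a local pair $(v,t)$: the operation $\af{wr}(v')$ generates a fresh timestamp $t'$ strictly above the local maximum, sets $(v,t):=(v',t')$, and asynchronously broadcasts $(v',t')$; the operation $\af{rd}()$ returns the stored $v$; an incoming $(v',t')$ overwrites the local pair iff $t'>t$. No operation ever waits for a message, so the implementation is trivially wait-free.

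For any history $H$ produced by this implementation, I would construct the execution as follows. Let $\vis$ be the total order obtained by first ordering all writes by their timestamps, then inserting each read $r$ performed by some process $i$ immediately after the write $w$ whose value $r$ returned, sliding the insertion point forward if necessary to the earliest position consistent with $\po$. Then set $\ser_i=\vis$ for every process $i$. Showing that $\vis$ is a strict total order containing $\po$ is direct: Lamport clocks strictly increase along each local history, so the induced order on writes respects $\po$, and reads are placed respecting $\po$ by construction. What remains to verify is result validity, $\axiom{res}{val}$: each read must return the value of the write immediately preceding it in $\vis$.

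The main obstacle is precisely this result-validity step. The key lemma I would prove is that at the moment a read $r$ at process $i$ executes, the local timestamp $t$ equals $t(w)$ for the write $w$ whose value is returned, and no write $w'$ with timestamp strictly between $t(w)$ and the chosen position of $r$ has been applied at $i$ before $r$; by the LWW rule, such a $w'$ would have overwritten $w$ and changed the returned value. By monotonicity of the local clock, any such candidate $w'$ is either not yet received at $i$ when $r$ executes, or is a later local write of $i$, and in either case it is safe to position it after $r$ in $\vis$ without violating $\po$. The subtlety requiring care is showing that the placements made independently for reads across all processes remain mutually consistent within one global total order; this works because the single-register semantics depends only on the relative order of writes, so reads from different processes can be freely interleaved between any two consecutive writes without collisions.
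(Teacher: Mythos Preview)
Your proposal is correct and follows essentially the same approach as the paper: an LWW register with Lamport-clock timestamps (with process-id tiebreak), writes ordered by timestamp, and each read inserted after the write whose value it returned, with reads from the same write block ordered to respect $\po$. The paper is slightly more explicit about how to interleave read-blocks from different processes that share the same preceding write (it orders those blocks by process id), whereas you leave this as ``freely interleaved''; you should fix a concrete rule there, but otherwise the constructions coincide.
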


\begin{proof}
By describing such an implementation, assuming totally ordered process ids,
consisting of:
1) processes maintain a Lamport clock~\cite{DBLP:journals/cacm/Lamport78},
updated only on write operations and message passing, and build a total
order of writes using pairs (clock, process) under lexicographic order;
2) a write is performed locally and disseminated to other processes, piggybacking
the clock value;
3) when a write message is received, it is ignored if the corresponding pair
is ordered before the local one, otherwise the value and clock is updated;
4) a read returns the locally stored value.
Considering a run of this implementation: an arbitration can be built by
ordering sequences of reads from a process after the write that was read and
before the next write in the total order of writes; if
two processes have sequences of reads for the same write, totally order those sequences
by process id; all operations arbitrated before some operation can be
considered to be visible (reads are ignored in the semantic function and
only the last write matters), and so we have a serial execution at each
process, i.e., serial consistency. Together with the arbitration, it implies
sequential consistency.
\end{proof}

\begin{remark}
This result is not practically useful. Even if a single register can have a
wait-free implementation, sequential consistency is not composable (unlike
linearizability). Sequentially consistent memory cannot be obtained from a set
of sequentially consistent registers.
\end{remark}

On the other hand, even if the CAP theorem could be proved for other data
types, in the same way considering linearizability,
C+L+A+M together are weaker than sequential consistency, which is weaker than
linearizability, making the the CLAM impossibility stronger in terms of
which consistency models it prevents.

So, in practice, CAP only says that linearizability is not possible for AP
(wait-free) systems, without shedding light on what tradeoffs can be made.
On the other hand, CLAM goes to a fine grain and states that for almost all
abstractions (those that have irredundant updates) we cannot have even the
weaker combination of properties C+L+A+M together if we want a wait-free
system.
As M is too fundamental to let go, CLAM shows what tradeoffs can be made in
what we call the CAL trilemma. We have seen examples of models where one
of C, L, or A was forgone, for which wait-free implementations are easily
obtained. We discuss these in more detail below.

\section{Taxonomy of consistency models}

We summarize the different axioms and consistency models discussed in the
article in Table~\ref{tab:summary}, organized in three groups. The first group has
axioms about well-formedness (physical realizability and serialization of
visibility), validity, and time constraints (given a logical or physical
clock). The second group has axioms used to define consistency models: the basic
axioms (monotonic visibility, local visibility, and closed past); axioms
related to inter-process propagation order (pipelined visibility and
serializations, together defining pipelining, and causal visibility and
serializations, together defining causality); and convergence axioms, namely
arbitration. In the third group we present consistency models: classic models
which satisfy what we have called serial consistency, the bottom of classic
models; replay models, which forgo closed past, variants of what we call replay
consistency; and finally prefix models, which forgo local visibility, variants of
prefix consistency.

\begin{table*}
  \caption{Summary of axioms and consistency models, in three main groups:
  well-formedness, validity, and time axioms; basic, propagation, and
  convergence axioms; serial, replay, and prefix models.}
  \label{tab:summary}
  \begin{center}
    \small
    \begin{tabular}{@{}llll@{}}
  \toprule
      Class & Criterion & Symbol & Definition \\
  \midrule
      \multirow3*{Well-formed}
      & physical realizability &
        \axiom{vis}{pr} & 
         $ a \hb b \implies b \not\po a$ \\
      & serialization of visibility &
        \axiom{ser}{vis} & $a \vis b_i \implies a \ser_i b_i $ \\
      & well-formed execution &
        \axiom{exe}{wf} & $\axiom{vis}{pr} \land \axiom{ser}{vis}$ \\
   \cmidrule{2-4}
      \multirow2*{Validity}
      & result validity &
        \axiom{res}{val} & $\re(o_i) = \sfR{o_i,\vis,\ser_i}$ \\
      & valid execution &
        \axiom{exe}{val} & $\axiom{exe}{wf} \land \axiom{res}{val}$ \\
   \cmidrule{2-4}
      \multirow2*{Time}
      & logical clock &
        \axiom{vis}{lc} & $a \vis b \implies L(a) < L'(b)$ \\
      & physical clock &
        \axiom{vis}{pc} & $a \vis b \implies P(a) < P'(b)$ \\
  \midrule
      \multirow3*{Basic}
      & monotonic visibility &
        \axiom{vis}{mon} & $a \vis b \po c \implies a \vis c$ \\
      & local visibility &
        \axiom{vis}{loc} & $a \po b \implies a \vis b$ \\
      & closed past &
        \axiom{ser}{clo} & $a \vis b_i \land c \cnot\vis b_i \implies a \ser_i c$ \\
   \cmidrule{2-4}
      \multirow6*{Propagation}
      & pipelined visibility &
        \axiom{vis}{pipe} & $a \po b \vis c \implies a \vis c$ \\
      & pipelined serializations &
        \axiom{ser}{pipe} & $a \po b \implies a \ser_i b$ \\
      & pipelining &
        \axiom{prop}{pipe} & $\axiom{vis}{pipe} \land \axiom{ser}{pipe}$ \\
      & causal visibility &
        \axiom{vis}{causal} & $a \hb b \implies a \vis b$ \\
      & causal serializations &
        \axiom{ser}{causal} & $ a \hb b \land b \not\hb a \implies a \ser_i b $ \\
      & causality &
        \axiom{prop}{causal} & $\axiom{vis}{causal} \land \axiom{ser}{causal}$ \\
   \cmidrule{2-4}
      \multirow2*{Convergence}
      & convergence &
        \axiom{res}{conv} & $\op(a) = \op(b) \land {\vis a} =  {\vis b}
        \implies \re(a) = \re(b)$ \\
      & arbitration &
        \axiom{ser}{arb} & ${\ser_i} = {\ser_j}$ \\
  \midrule
      \multirow5*{Serial models}
      & serial &
        \cons{serial} & ${\ser_i o_i} = {\vis o_i}$ \\
      & pipelined &
        \cons{pipe} & $\axiom{prop}{pipe} \land \cons{serial}$ \\
      & causal &
        \cons{causal} & $\axiom{prop}{causal} \land \cons{serial}$ \\
      & convergent causal &
        \cons{causal+} & $\cons{causal} \land \axiom{res}{conv}$ \\
      & sequential &
        \cons{seq} & $\cons{serial} \land \axiom{ser}{arb}$ \\
   \cmidrule{2-4}
      \multirow3*{Replay models}
      & replay &
        \cons{replay} & $\axiom{vis}{mon} \land \axiom{vis}{loc} \land
        \axiom{ser}{arb}$ \\
      & pipelined replay &
        \cons{preplay} & $\cons{replay} \land \axiom{prop}{pipe} $ \\
      & causal replay &
        \cons{creplay} & $\cons{replay} \land \axiom{prop}{causal} $ \\
   \cmidrule{2-4}
      \multirow3*{Prefix models}
      & prefix &
        \cons{prefix} & $\axiom{vis}{mon} \land \axiom{ser}{clo} \land \axiom{ser}{arb}$ \\
      & pipelined prefix &
        \cons{pprefix} & $\cons{prefix} \land \axiom{prop}{pipe} $ \\
      & causal prefix &
        \cons{cprefix} & $\cons{prefix} \land \axiom{prop}{causal} $ \\
  \bottomrule
    \end{tabular}
  \end{center}
\end{table*}

The way different axioms and consistency models are partially ordered, according to
strength is presented in Figure~\ref{fig:taxonomy}. It presents axioms in
boxes with a dashed border and consistency models with a solid border. 
Those which satisfy convergence have a dark background.
All models presented satisfy the fundamental monotonic visibility.
To better visualize how they compare, border color is an RGB combination of
red if satisfying Local visibility, blue if satisfying Closed past, and green
if satisfying Arbitration.
This makes it easy to see that all except those with a gray border (CAL) allow
wait-free implementations for arbitrary, non synchronization-oriented, data abstractions.
Classic models (except sequential consistency) form a column on the left side, variants of
replay consistency a column on the center, and variants of prefix consistency form a
column on the right.

It can be seen that causal prefix consistency ends up not allowing wait-free
implementations (therefore, being uninteresting), due to causality implying local
visibility. It can also be seen that convergent causal consistency is the
only model that allows wait-free implementations and where convergence is
achieved while also satisfying serial consistency, without forgoing either
local visibility or closed past; this is possible by not depending on
arbitration for convergence.

\begin{figure*}
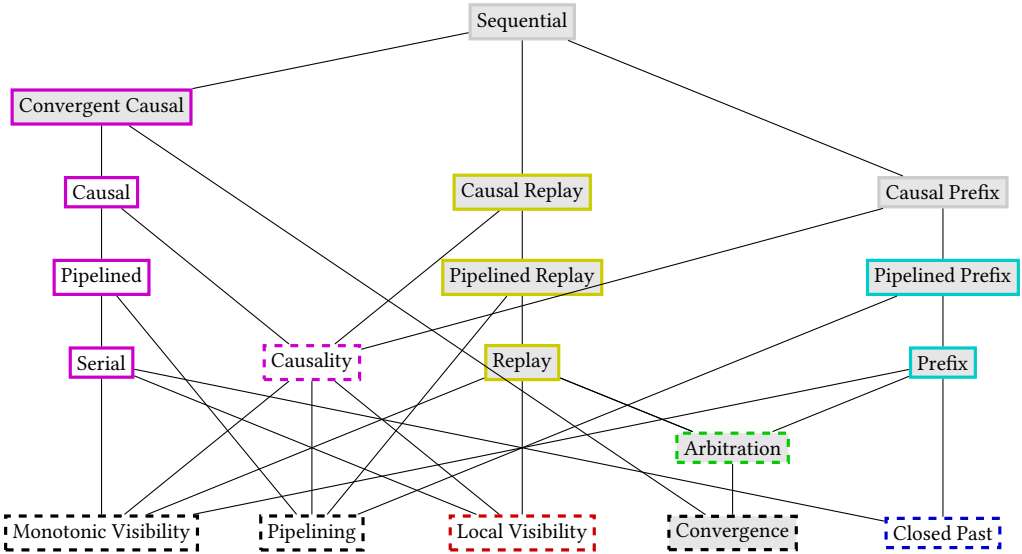

\definecolor{conv}{gray}{0.90}
\definecolor{ncon}{gray}{1.0}
\definecolor{l}{rgb}{0.8, 0.0, 0.0}
\definecolor{a}{rgb}{0.0, 0.8, 0.0}
\definecolor{c}{rgb}{0.0, 0.0, 0.8}
\definecolor{lc}{rgb}{0.8, 0.0, 0.8}
\definecolor{la}{rgb}{0.8, 0.8, 0.0}
\definecolor{ca}{rgb}{0.0, 0.8, 0.8}
\definecolor{lca}{rgb}{0.8, 0.8, 0.8}
  \footnotesize
  \setlength\h{28mm}
  \tikz \graph
  [nodes={very thick, fill=conv, draw=black}, grow down=0.4\h, branch right=\h]
  {
    { seq/Sequential [draw=lca, x=2\h] --
     cc/Convergent Causal [draw=lc] --[matching]
     {{[nodes={draw=lc, fill=ncon}]
      Causal -- Pipelined  --[matching] { Serial,  Causality [dashed]}},
      {[nodes={draw=la}]
      cu/Causal Replay  -- pu/Pipelined Replay -- Replay } --
      arb/Arbitration [x=\h, draw=a, dashed],
      {[nodes={draw=ca}]
      cp/Causal Prefix [draw=lca, x=\h] -- pp/Pipelined Prefix [x=\h] -- Prefix [x=\h]}
     }
      --[draw=none]
      { [nodes={dashed, fill=ncon}] mv/Monotonic Visibility, Pipelining,
      lv/Local Visibility [draw=l], Convergence [fill=conv], cpa/Closed Past [draw=c]}
    },
    seq -- {cu, cp},
    {Causal, cu, cp} -- Causality -- {mv, lv, Pipelining},
    {Pipelined, pu, pp} -- Pipelining,
    Serial -- {mv, lv, cpa},
    Replay -- {mv, lv, arb},
    Prefix -- {mv, cpa, arb},
    {cc, arb} -- Convergence,
  };
  \caption{A taxonomy of consistency axioms (dashed border) and models (solid
  border), partially ordered according to strength;
  with a dark background if satisfying convergence, 
  border color being a combination of red if satisfying Local visibility, blue
  if satisfying Closed past, and green if satisfying Arbitration;
  all except those with a gray border (CAL) allow wait-free implementations for
  arbitrary data abstractions.}
  \label{fig:taxonomy}
\end{figure*}

\section{Related work and discussion}

\subsection{Consistency frameworks}

Much work has been done regarding consistency models for shared memory. They
tend to have per-location constraints and several orders expressing possible
reorderings of read and write operations to ``the'' memory, and model
machine instructions like memory fences. Normally they do not discuss
convergence, which is implicitly assumed. Two good examples are the
axiomatic frameworks
by~\textcite{DBLP:journals/fmsd/Alglave12,DBLP:journals/toplas/AlglaveMT14}.
But shared memory is not the focus of this article.

Although addressing memory, but suitable to distributed systems, the framework
by~\textcite{DBLP:journals/jacm/SteinkeN04} was an important milestone, by
allowing classic memory consistency models to be obtained as a combination of
orthogonal constraints on serial views.  Being a timeless model, based on
program order and writes-to, but without visibility, it lacks a mechanism to
prevent physically impossible causality loops in a general way. It fails to prevent mutually
incompatible serializations, that do not respect physical information
propagation, in a model agnostic way.
But it unified classic specifications in an accurate way: the specifications
themselves (except for sequential consistency) were over relaxed and allow
physically impossible causality loops. We discuss how the specification for PRAM is more relaxed
than the original operational definition, in Remark~\ref{rem:pram}, and how
the specification of causal memory also allows some physically impossible causality loops, in
Remark~\ref{rem:cm-cl}.

Frameworks devoted to replicated
abstractions~\cite{DBLP:journals/ftpl/Burckhardt14,DBLP:conf/popl/BurckhardtGYZ14},
were developed with eventual consistency as a goal.
They resort to visibility and an arbitration, allowing concurrent
specifications, used in many CRDTs. But they do not have per-process
serializations, not being able to describe models without convergence.
The limitation comes from having a single global arbitration. Our model
keeps the classic per process serialization, suffering no such limitation,
allowing consistency models that do not imply convergence. Moreover even when
convergence is desired, arbitration is not always needed, or necessarily
desirable. In our framework arbitration is a criterion.

On the other hand, \textcite{DBLP:conf/ppopp/PerrinMJ16}, who point
the limitations of having visibility and a global arbitration,
generalize causal consistency beyond memory to abstract data types, but is
limited to sequential specifications. It is based on serial views, allowing
per-process serializations and the absence of convergence, but does not use
visibility, therefore not being able to restrict the operations in a view
according to visibility, nor to use visibility for defining concurrent
specifications, something important to CRDTs. The framework does not
prevent physically impossible causality loops in general: it does so for causal consistency, by
the existential assumption of a single causal order that must be respected by
all serializations, but fails to do it for the generalization of PRAM into
pipelined consistency, importing the same problem of the PRAM axiomatic specification
being overly relaxed, allowing histories with mutually incompatible
serializations. As an example, their Figure 3(i) is pipelined consistent
according to their specification, as the PRAM axiomatic specification allows,
but has a causality loop and would not be accepted by the original operational
definition of PRAM, as we discuss in Remark~\ref{rem:pram}.

Like~\textcite{DBLP:journals/jacm/SteinkeN04}, we define a set of axioms
that can be combined in different ways, to achieve a partial order of models.
Our framework is timeless but prevents physically impossible causality loops in a general way
through the \axiom{vis}{pr} well-formedness axiom. It uses both visibility and
per-process serializations, allows general data types, both sequential and
concurrent specifications, convergence as a property, achievable in different
ways, with arbitration also as a property which may or may not hold. 
It also allows a wider range of models to be defined that are relevant to
distributed systems, but not explainable by serial executions, such as the
prefix and replay consistency families of models.

Some models or frameworks have operational definitions, and either
architecture or implementation related assumptions, which hurts their
generality and make comparing models and building taxonomies harder.  An
example, the framework by~\textcite{DBLP:conf/concur/ShapiroAP16}, assumes
clients and replicas, and that operations are divided into calls, generators,
effectors and returns (somewhat like operation-based CRDTs). It allows
per-process serializations but not an independent visibility (it is derived
from the serializations), not allowing the definition of replay or prefix
consistency.

Many presentations of the subject tend to be considerably cluttered with
accidental complexity, details that are irrelevant to the essential matter,
which also frequently results in them being unnecessarily restrictive.
Examples being formally defining ADTs, their alphabets, distinguishing updates
from queries, having collections of objects with operations acting only on a
single object. Doing this and restricting visibility to relate operations on
the same
object~\cite{DBLP:journals/ftpl/Burckhardt14,DBLP:conf/popl/BurckhardtGYZ14},
not only prevents having operations that involve several objects, but also
has unfortunate consequences in terms of preventing simple and elegant
constraints, such as ${\vis} = {\hb}$ for causal consistency.

One of the more expressive frameworks, by \textcite{DBLP:conf/srds/JiangW020}, 
allows specifying models beyond serial consistency, as our own framework.
It generalizes the framework by \textcite{DBLP:journals/ftpl/Burckhardt14},
making arbitration a partial order and adding a function to the framework to
do something similar to what \textcite{DBLP:conf/ppopp/PerrinMJ16} does for
individual models: express which return values of visible operations cannot be
ignored when explaining the result of a given operation. The framework uses
sequential specifications and conflates convergence with arbitration being a
total order. The more problematic issue is that arbitration constraints both
local and remote serializations. The framework assumes possible serializations
for a given abstract execution, constrained by arbitration (not made explicit
in the sequential semantics but clearly intended, as otherwise arbitration
would play no role). This means that, to constrain local serializations to
respect program order, arbitration must respect it, i.e., ${\po} \subseteq
{\ar}$.  But doing so, serializations at other processes will also respect it,
making the framework unsuitable to models not respecting pipelined
serializations, such as serial consistency. On the other hand, it allows
expressing outcomes not possible by our framework, easily resulting in models
that do not arise in actual systems and do not suit the notion of a sequential
process, as we discuss in Section~\ref{sec:monotonic-serialization}.

We consider our framework, with the traditional per-process serialization
augmented with visibility, to be a sweet spot in allowing useful generality,
precision in describing desirable outcomes, and preventing undesirable
phenomena without the need for extra ad-hoc per-model constraints, while
keeping simplicity and intuitiveness. We detail some of these aspects below.

\subsection{Time in consistency models}

Most consistency models do not use time. Those who do, like Linearizability,
define operations as a pair of invocation and response events, placed in a
timeline, i.e., assuming a totally ordered physical time.
Frameworks that consider
time~\cite{DBLP:journals/ftpl/Burckhardt14,DBLP:journals/csur/ViottiV16} adopt
the same approach, assigning a start and a return time for each operation, and
define a \emph{returns-before} partial order according to precedence in
real-time. So, a history in these frameworks already includes time
information. Specific consistency models, such as linearizability, can be defined using a
\emph{Realtime} axiom defined in terms of the returns-before relation.

As we have discussed in Section~\ref{sec:time}, requiring that a history already
includes information about time makes it more difficult and fragile to collect
histories from actual runs of a system to be checked. We kept our framework
timeless, but discussed how information and constraints about time can be
optionally added, orthogonally.

Importantly, we allow several variants for
partially or totally ordered physical time, and also the use of logical time
to generate constraints. This allows instrumenting a system with a logical clock
and not accepting an existentially assumed execution in which $a$ is visible
to $b$, if no information could have reached $b$ from $a$, which is the case
if  the logical clock when $a$ starts is not less than the one when $b$
ends. Precedence in a totally ordered physical time is typically used to constrain
serializations but must be more conservative in pruning visibility than
constraints generated by a partially ordered logical clock. Moreover, if $a$
and $b$ are spacelike in space-time, outside each other light-cone,
enforcing a specific serialization between them
(according to totally ordered physical time) will not admit some executions
which should be perfectly valid, even if the goal is to avoid missing
information that could have been propagated between $a$ and $b$ by hidden
channels, such as in \emph{external consistency}~\cite{10.5555/910052}.

Preventing physically impossible causality loops and out-of-thin-air values is one
major concern for consistency models, but most timeless models for
distributed systems have failed to prevent them.
As an example, the formalism used for causal
memory~\cite{DBLP:journals/dc/AhamadNBKH95} is timeless, as ours, 
but lacks an axiom to prevent some non obvious causality loops: a technical
issue caused by multiple writes of the same value to the same location,
causing a possible discrepancy between a causality order existentially assumed
and the real cause of a read, allowing histories with causality loops,
which should have been rejected. Understanding whether such outcomes are
possible in a specific model, and adding ad hoc axioms to prevent them is
non trivial.

A significant feature of our framework is that it abstracts away time, while
having physically impossible causality loops disallowed by the physical realizability axiom
$\axiom{vis}{pr}$ of well-formed executions.
This frees specific consistency models from individually having to worry about
causality loops; they just need to assume well-formed executions. Notably,
physical realizability allows cyclic visibility dependencies, as long as program
order is not involved. This allows synchronization-oriented abstractions,
something not possible with the more standard acyclic happens-before.

\subsection{Causal consistency}

Causal consistency has been introduced
by~\textcite{DBLP:journals/dc/AhamadNBKH95}, in the context of memory, as
causal memory, and generalized beyond memory
by~\textcite{DBLP:conf/ppopp/PerrinMJ16} to data types having a sequential
specification. Our framework generalizes it beyond sequential specifications,
allowing also concurrent specifications. 

There is a discrepancy between the specification of causal memory, lacking a
relevant axiom, and its stated goal, namely being stronger than PRAM (it is
stronger than the PRAM axiomatic specification, itself lacking an axiom, but
not the PRAM operational definition). As we discussed in the context of
pipelined consistency, our framework overcomes this issue in a model
independent way, generalizing causal memory while better fitting the
intention, by discarding physically impossible histories. It describes more
accurately the possible outcomes when instantiated for memory than the causal
memory specification itself.

We have defined Causality as a common denominator for several variants of
causal consistency (standard, causal prefix, and causal replay).  It is the
natural generalization of the definition by~\textcite{DBLP:journals/ftpl/Burckhardt14}
to models where convergence is not implied. In that work,
for eventually consistent systems, with arbitration being assumed, causality
corresponds to our causality+arbitration, i.e., causal replay consistency, and
causal consistency corresponds to causal replay consistency (for valid
executions) plus the liveness property of eventual visibility. As we did for
time, we leave liveness outside the framework.

\emph{Weak Causal Consistency (WCC)}, introduced
by~\textcite{DBLP:conf/ppopp/PerrinMJ16} as a common denominator for different
variants, is considerably weaker than our causality, as it allows different
operations, even from the same process, to be explained by different
serializations of the causal past; i.e., a per-operation serialization is
allowed (not a per-process serialization, as in our framework) as long as it
respects happens-before.

\textcite{DBLP:conf/popl/BouajjaniEGH17} provide a definition for
causal consistency which is similar to WCC, in
which outcomes are also explained by a per-operation serialization.
Such definitions allow histories such as the one in
Figure~\ref{fig:per-operation-serialization} where, given a pair of concurrent
writes, a sequence of reads can keep returning any of those written values,
alternating in an apparently random fashion along time, without
ever stabilizing. No classic model, even the quite weak Slow Memory, or
the ones introduced here allow such range of outcomes. Such definition has no
resemblance to the standard definitions of causal consistency, whether for
memory~\cite{DBLP:journals/dc/AhamadNBKH95} or its generalization to
sequential data types~\cite{DBLP:conf/ppopp/PerrinMJ16} or concurrent data
types as in this article.  These definitions do not match what can be
expected from a sequential process incorporating concurrent updates.
Moreover, reusing the same term ``causal consistency'' for a model so
substantially different and weaker than the
traditional~\cite{DBLP:conf/fsttcs/RaynalS95,DBLP:journals/jacm/SteinkeN04}
usage of the term has the potential to cause considerable confusion.  When
discussing examples of histories satisfying causal consistency, \textcite[Fig.
3]{DBLP:conf/fsttcs/RaynalS95} explicitly reject such patterns of
alternating values.  When generalizing from memory to arbitrary sequential data
types, \textcite{DBLP:conf/ppopp/PerrinMJ16} avoid this confusion by introducing
a different term (WCC) for this weaker model.

\begin{figure*}
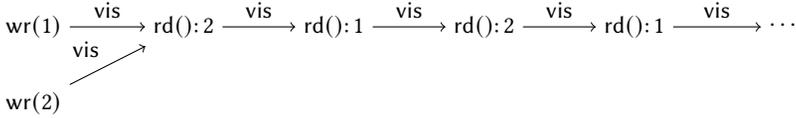

  \small
  \begin{boxes}{1}
  \let\wr=\relax
  \let\rd=\relax
  \auxfun{wr}
  \auxfun{rd}
  \begin{tcolorbox}
      \tikz \graph [edge label=${\vis}$, grow right=20mm] {
        a/$\wr(1)$ -> b/$\rd():2$ -> c/$\rd():1$ -> d/$\rd():2$ -> e/$\rd():1$
        -> f/{$\cdots$} ;
        g/$\wr(2)$;
        g -> b;
      };
  \end{tcolorbox}
  \end{boxes}
  \caption{History only explainable by a per-operation serialization, allowing
  an infinite sequence of apparently random values from $\{1, 2\}$ to be read over
  time.
  }
  \label{fig:per-operation-serialization}
\end{figure*}

\subsection{The rationale for per-process serializations: monotonic serialization}
\label{sec:monotonic-serialization}

Regardless of its reasonableness, the above weak definition of causal
consistency raises the question of how expressive a consistency framework
needs to be. In particular, whether the choice of a per-process serialization
in our framework is not overly restrictive. We settled on a per-process
serialization, complemented by visibility, as we found no reason, or example,
to justify the range of outcomes possible by a per-operation serialization.

\textcite{DBLP:conf/popl/BouajjaniEGH17} give as motivation for per-operation
serialization that a process can ``change its mind'', to allow speculative
execution and rollbacks, motivated by the desire to achieve convergence. But
we have not found a single example of an actual system in which processes
``change their minds'' along time regarding how two operations are serialized
in a given process. They either:

\begin{itemize}
  \item Do not care about convergence and each process serializes operations
    as they become visible, satisfying serial consistency.
  \item Achieve convergence through arbitration,
    making local operations immediately visible, but possibly serializing a
    remote operation that arrives later before already visible ones, forgoing
    closed past, requiring (conceptual) re-executing of operations, satisfying
    replay consistency. This is the case in the classic
    Bayou~\cite{DBLP:conf/sosp/TerryTPDSH95}, or the more recent
    OpSets~\cite{DBLP:journals/corr/abs-1805-04263} and
    ECROs~\cite{DBLP:journals/pacmpl/PorreFPB21}.
  \item Achieve convergence through arbitration,
    by delaying the visibility of operations, even local ones, until they are
    sure no operation may arrive that would be serialized before. They forgo
    local visibility but do not require re-execution, building an arbitration
    and making visible a prefix of the arbitration. This is the case of prefix
    consistency.
  \item Converge without needing an arbitration, by using a concurrent
    specification. This is the case of many CRDTs.
\end{itemize}

So, in all systems that we are aware of using sequential specifications, the first
moment two operations are made visible to a process, their order in the
process has been decided, remaining immutable.
Processes do not ``change their mind'' but either delay visibility or insert new
operations into the past, when they resort to arbitration for convergence.

Moreover, perpetually alternating values as in
Figure~\ref{fig:per-operation-serialization} do not match the notion of a
sequential process, even in a concurrent setting. Models for sequential
processes and sequential specifications expect that when the effects of all
concurrent operations have been incorporated, the process has one of several
possible states. It is expected that the current state results from a sequence
of updates leading to that state; the state will explain possible results of
future queries. Such alternating values do not occur in deterministic
specifications. Given that one goal of a consistency model is to accurately
define the range of desired or possible outcomes of a given system, such
consistency criteria that allow these alternating values fail in that goal.

Our framework explains desired/actual outcomes in a precise way. It implicitly enforces
a \emph{Monotonic Serializations} axiom that could be defined in a framework
with per-operation serialization: that as the process
evolves and more operations become visible (monotonic visibility) the new
operations add pairs to the current serialization but do not contradict the
current pairs. Monotonic visibility and monotonic serializations can be seen
as the more fundamental axioms for a desirable consistency model.

\subsection{Eventual consistency and convergence}

\textcite{DBLP:journals/cacm/Vogels09} popularized a definition of eventual
consistency (EC) as eventual convergence when updates stop being issued.
This definition is not very useful, as updates may not stop in many/most systems.
This liveness property is just one consequence of what EC systems ensure, but
they must ensure something stronger, which implies this property: different
replicas must keep continuously converging as operations arrive, so that
when/if update operations stop being issued replicas will eventually converge.

\textcite{DBLP:conf/sss/ShapiroPBZ11} defined \emph{strong eventual
consistency} (SEC), which guarantees that all updates will eventually become
visible everywhere (eventual visibility) and that processes that see the same
set of updates have equivalent state, regardless of the order in which they become
visible (\emph{strong convergence}). This definition also mixes safety and
liveness properties.

As we discuss elsewhere~\cite{10.1145/3695249}, SEC was an unfortunate choice
of terminology, possibly causing some confusion, due to the use of the word
\emph{strong}, normally used for strong consistency models, such as sequential
consistency and linearizability. More importantly, the original guarantees
that eventually consistent systems provide, when eventual consistency was
introduced by~\textcite{DBLP:conf/pdis/TerryDPSTW94}, already includes SEC.
In that paper, EC systems must have mechanisms to ensure two properties:
\begin{itemize}
  \item \emph{total propagation}: each update is eventually propagated
    everywhere, by some anti-entropy mechanism (i.e., eventual visibility);
  \item \emph{consistent ordering}: non-commutative updates are applied in the
    same order everywhere, which implies the \emph{strong convergence}
    property of SEC.
\end{itemize}
As described by~\textcite{DBLP:conf/sosp/TerryTPDSH95}, also addressing EC,
these properties mean that ``all servers eventually receive all Writes via
the pair-wise anti-entropy process and that two servers holding the same set
of Writes will have the same data contents''. So, from these descriptions, EC
systems already satisfy SEC. Eventual convergence when updates stop is a
mere consequence of these properties, but not enough in itself.

A definition of EC similar to the original meaning, ensuring (strong)
convergence was also used by~\textcite{DBLP:journals/jpdc/RohJKL11}.
As originally introduced, what is ``eventual'' in EC systems is operation
visibility (delivery); replicas that have delivered the same set of updates
have converged.

\begin{remark}
It would be difficult to achieve convergence otherwise, if it had not been
achieved for the same set of visible events. If that was not the
case, \emph{when} would replicas converge? I.e., as result of processing
\emph{which} other events, namely if there were no more events to be acted
upon? An implementation cannot know if updates have stopped being issued.
In a framework as ours, the semantic function that defines the result of a
query uses the visible operations, no more and no less.
``Visible'' means semantically visible to the process using the
abstraction; in implementation terms, information may have been received by
the process but kept in some buffer and not yet delivered to the abstraction,
therefore, still not visible.
It would be possible to resort to implementation level events, on which
convergence depends, but that defeats the purpose of specifying a shared
object while abstracting away from implementation details.
If some explicit action, like ``reconcile'', is necessary for convergence,
that becomes part of the API and used in the semantic function.
So, SEC is not something that implies some extra cost over the baseline,
but the natural condition for EC systems. The term SEC itself should be deprecated in
favor of using EC, as originally introduced: the combination of eventual
visibility and convergence as we define it.
\end{remark}

Defining convergence as we do, intending the ``strong'' variety is not
original. It was already done, e.g.,
by~\textcite{DBLP:journals/tochi/SunJZYC98} in the context of
collaborative editing: ``When the same set of operations have been executed at
all sites, all copies of the shared document are identical''.
What is original in our approach is using convergence as an axiom for consistency
models and taxonomy, rather than eventual consistency as
usual~\cite{DBLP:journals/csur/ViottiV16}. Discarding the liveness part of
EC, as we do, is the right choice to define consistency models that are
suitable to both finite and infinite histories.

Models defined using EC, specially the weak variant, end up not being useful
for finite histories. An example is \emph{Update
Consistency}~\cite{DBLP:conf/ipps/PerrinMJ15}, which accepts most finite
histories as valid (as invalid results from an arbitrary finite set of queries
can be ignored). We see using EC to base consistency models on as a ``dead end''.

\subsection{Convergence, causality, and wait-freedom}

Models that satisfy an arbitration, such as sequential consistency, or the
non-serial models that we introduced (replay and prefix consistency) satisfy
convergence. But arbitration is just one possible means to achieve
it. Convergence can be achieved in a wait-free way using concurrent
specifications, common in CRDTs, but we point out that even some data types
with sequential specifications can achieve wait-free convergence without
arbitration, namely data types in which update operations are commutative.

As an example, consider the execution in
Figure~\ref{fig:convergence-examples-counter} for a counter object, satisfying
convergence. Each process sees the two increments in different orders, but
when both increments are visible both queries return the same result. The
execution is also causally consistent, and such a counter can have a wait-free
implementation. Another example is a grow-only set, without a remove operation.

\textcite{DBLP:conf/ppopp/PerrinMJ16} present two incomparable models:
Causal Consistency, which does not ensure convergence; and
Causal Convergence, which allows wait-free implementations but is not stronger
than Causal Consistency. But the result from~\textcite{DBLP:conf/ipps/PerrinMJ15},
that we cannot have a wait-free implementation that is both pipelined consistent (let
alone causal consistent) and eventually consistent was proven for a set data
type with a sequential specification.  It does not mean that other
abstractions cannot be implemented as wait-free. The result also used a definition of
eventual consistency only relevant to infinite histories.
An analogous result, replacing eventual consistency with convergence, to be
useful for finite histories, also would not imply that no data type can be
implemented as wait-free while being simultaneously pipelined consistent and
convergent.

As the examples above illustrate, we can achieve both causal consistency and
convergence in several cases. The definition of causal convergence by
\textcite{DBLP:conf/ppopp/PerrinMJ16}, which requires an arbitration, is
both relaxed, as it does not ensure causal consistency, but also unnecessarily
restrictive, as convergence can be achieved by different means other than
arbitration. It corresponds
to our causal replay consistency. Our definition of convergent causal
consistency stresses that there is a model that satisfies both causal
consistency and convergence. Causal consistency allows different processes to
see different serializations, as long as they respect causality. Convergence
can be added on top not always requiring arbitration, allowing wait-free
implementations of several abstractions. Only those that require arbitration
for convergence (and have irredundant updates, as is usually the case), cannot
be wait-free, as the CLAM theorem shows.

One model that combines causal consistency and convergence is the
so called \emph{Causal+}: causal consistency with convergent conflict
handling.
Causal+ has been defined in the context of a key-value store with put/get
operations (or equivalently, memory), where a get depends only on the values 
from the most recent concurrent puts, by using a commutative and associative
conflict handling function. This corresponds to a concurrent specification in
our model.
Our convergent causal consistency can be seen
as the generalization of Causal+ beyond memory / key-value stores to arbitrary
data types.

But actual so called Causal+ systems, namely COPS, introduced
together with Causal+, may act differently. COPS performs a so called
last-writer-wins, where computing the result of a read does not depend purely
on the values written themselves, but on how updates are ordered.
It builds a total order of puts and applies or ignores an update to a given
key according to how it is ordered relatively to the update that wrote the
currently stored value.

But from the CLAM theorem, we cannot have general wait-free abstractions that
provide both serial consistency (let alone causal consistency) and
arbitration. In fact COPS allows histories that cannot be explained as
causally consistent, considering the sequential specification of memory.
An example is shown in Figure~\ref{fig:lww-convergence} in which the value of
$y$ converges to a value $v$, either $2$ or $4$. Regardless of to which value
$y$ converges, depending on how the writes to $y$ are arbitrated, one of the
processes cannot explain the history by a serial execution that satisfies
causality. It can, however, be explained by an execution forgoing closed past,
which allows serializing updates not yet received before already visible ones,
while hiding them from the semantic function.

\begin{figure*}
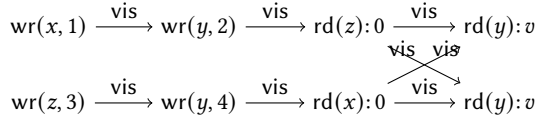

  \small
  \begin{boxes}{1}
  \let\wr=\relax
  \let\rd=\relax
  \auxfun{wr}
  \auxfun{rd}
  \begin{tcolorbox}
      \tikz \graph [edge label=${\vis}$, grow right=20mm] {
        a/{$\wr(x,1)$} -> b/{$\wr(y,2)$} -> c/{$\rd(z):0$} -> d/{$\rd(y):v$} ;
        e/{$\wr(z,3)$} -> f/{$\wr(y,4)$} -> g/{$\rd(x):0$} -> h/{$\rd(y):v$} ;
        c -> h;
        g -> d;
      };
  \end{tcolorbox}
  \end{boxes}
  \caption{Convergent execution through last-write-wins. Whatever the value of
  $v$, either $2$ or $4$, one of the processes cannot explain the outcome as a
  serial execution satisfying causality. Transitive visibility, only direct
  edges drawn.}
  \label{fig:lww-convergence}
\end{figure*}

So, formally Causal+ corresponds to our convergent causal consistency for the case of
memory, but several systems that are claimed to satisfy Causal+, namely COPS,
do not satisfy it but something else, like causal replay consistency.

A more suitable encompassing definition of Causal Convergence, to contemplate
all the different cases that arise in practice, relaxing the one
from~\textcite{DBLP:conf/ppopp/PerrinMJ16}, is to simply define it as the
conjunction of causality and convergence from our framework:

\[
  \text{Causal Convergence} \defeq
\axiom{prop}{causal} \land \axiom{res}{conv}.
\]

\subsection{Non-serial models: replay and prefix consistency}

Pipelined replay consistency corresponds to the safety conditions of
\emph{strong update consistency}~\cite{DBLP:conf/ipps/PerrinMJ15}, ignoring its liveness
requirements. Replay consistency by itself does not enforce pipelining, namely
pipelined visibility.
Apart from the minor differences regarding pipelining, we adopted the
different term ``replay'' instead of ``strong update'' for several reasons: to
avoid confusion because we do not require the liveness conditions; we dislike the
``strong'' term for weak models; the same work already defines an ``update
consistency'' variant which accepts most finite histories as valid (as
invalid results from an arbitrary finite set of queries can be ignored); but
most importantly because ``update'' does not convey its essence, that
it involves (conceptual) re-execution of operations. Prefix consistency also
builds a total order of updates, while providing different guarantees.

Prefix consistency in our framework is similar to \emph{Consistent
Prefix}~\cite{DBLP:journals/cacm/Terry13}, although this criteria also includes
satisfying pipelining, i.e., what we call pipelined prefix consistency.
Consistent prefix was only described informally.  A related model is
\emph{Monotonic Prefix Consistency}~\cite{DBLP:conf/forte/GiraultGGHS18}, but
this one uses a non-traditional definition of trace where writes are unordered
and unrelated to reads, being suitable to blockchains, but not for the
traditional notion of process used in almost all consistency models.
``Monotonic'' there is about monotonic visibility, a fundamental axiom which
we do not include in the name of any model.

\subsection{Sequential versus concurrent specifications}

All classic models and most generalizations tend to focus on sequential
specifications. 
\textcite{DBLP:conf/ipps/PerrinMJ15,DBLP:conf/ppopp/PerrinMJ16} argue that
concurrent specifications are limited to systems ensuring convergence and they
may be as complicated as the implementations themselves.
While the frameworks
by~\textcite{DBLP:journals/ftpl/Burckhardt14,DBLP:conf/popl/BurckhardtGYZ14},
resorting to visibility and arbitration, are limited to convergent models, our
framework allows both visibility and per-process serializations, and both
concurrent and sequential specifications.

We argue that there are many examples of simple and elegant concurrent specifications,
substantially less complicated than the implementation. Even a counter allows
a concurrent specification as trivial as a sequential one, with
implementations of replicated counters being considerably complex if aiming
for scalability~\cite{DBLP:journals/dc/AlmeidaB19} or to allow embedding the
counter in other data structures~\cite{DBLP:conf/eurosys/WeidnerA22}. An
ORSet has an elegant specification (see
Table~\ref{tab:concurrent-specs}), while deriving optimized implementations
that avoid the need for tombstones requires some
care~\cite{DBLP:journals/corr/abs-1210-3368}.

Concurrent specifications are indeed important for CRDTs, which have been
successfully used in the industry. They allow achieving convergence without
the need for arbitration. From the CLAM theorem and CAL trilemma, they are the
only way to achieve wait-free implementations for arbitrary abstractions while
satisfying both local visibility and closed past. With sequential
specifications, there is the need to forgo one of them, either delaying
visibility of local updates or requiring conceptual re-execution,
which removes some of the advantages of using sequential specifications in terms
of reasoning about a program. Moreover, synchronization-oriented abstractions,
such as a barrier or a consensus data type, need a concurrent specification.

Our framework allows specifying a wide range of outcomes from concurrent
shared abstractions, suitable to distributed systems.
Other models for objects with non sequential specifications, such as
set-linearizability types~\cite{DBLP:conf/podc/Neiger94} allow very limited outcomes.
Even the more expressive
interval-linearizability~\cite{DBLP:journals/jacm/CastanedaRR18}, which
allows explaining the outcome of an operation as several transitions, suffers
from the same essential limitation resulting from using a totally ordered global time
model. In all these variations, when an operation finishes, its
effects have reached the entire system, and are seen by any operation that
subsequently starts. In our framework, they imply:
\[
  P'(a) < P(B) \implies a \vis b
\]
Although this can be somewhat reasonable for shared-memory systems with a very
small spatial span, it is unreasonable as a base assumption for general
distributed system, as it implies strong consistency models. Many shared
abstractions for systems with large spatial spans, either aiming for high
availability or low latency do not meet it.

Our visibility simply ensures atomic visibility of effects (the A in ACID)
leaving any ordering of even permanence of visibility open. Visibility
may propagate asynchronously over the system after an operation concludes.
Together with visibility cycles allowed physical realizability, it allows
explaining a wide range of histories where some processes have local
interactions, some operations mutually seeing the effects of each other, but
without ``immediate global reach''.

Consider a set variant which, like the
write\_snapshot~\cite{DBLP:journals/jacm/CastanedaRR18},
has a single operation which is both an update and a query.
In this set data type, the $\add$ adds an element to the set and also returns
the resulting set:
\[
  \sfC{\add^o(s, v)} = \{v\} \union \{ x | \add(s, x) \vis o \}
\]
And consider the execution in Figure~\ref{fig:set-val} for this set.
This history does not satisfy interval-linearizability (however it is
augmented with invoke-response time information). The problem is that the
insertion of 1 and 3, which are not both seen by the first add of processes
$i$ and $k$ must have occurred at different transitions, one before the other.
Therefore, either a returned set containing 1 must also contain 2 or vice
versa, which is not the case.
But this outcome would naturally result from a simple distributed
implementation where each process stores a copy of the set and disseminates
the inserted value upon an add, but ``waits for a bit'' before returning,
processing any message that happens to arrive.

\begin{figure*}
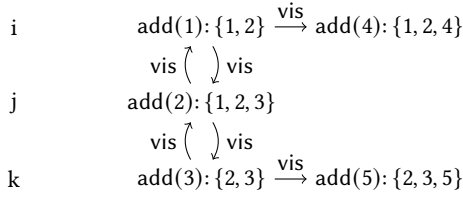

  \small
  \begin{boxes}{1}
  \let\wr=\relax
  \let\rd=\relax
  \auxfun{wr}
  \auxfun{rd}
  \begin{tcolorbox}
      \tikz \graph [edge label=${\vis}$, grow right=25mm] {
        i -!- a/{$\add(1):\{1, 2\}$} -> d/{$\add(4):\{1, 2, 4\}$};
        j -!- b/{$\add(2):\{1, 2, 3\}$};
        k -!- c/{$\add(3):\{2, 3\}$} -> e/{$\add(5):\{2, 3, 5\}$};
        a ->[bend left] b;
        b ->[bend left] a;
        b ->[bend left] c;
        c ->[bend left] b;
      };
  \end{tcolorbox}
  \end{boxes}
  \caption{Execution for concurrent set data type where $\add$  also returns
  resulting set. Monotonic but non-transitive visibility, inferred edges not drawn.}
  \label{fig:set-val}
\end{figure*}

Like for sequential specifications (e.g., memory), concurrent specifications
like this can have many different implementations satisfying different
consistency models, like Serial, Pipelined, Convergent Causal or Sequential
consistency, most of which would not satisfy interval-linearizability.  Our
framework is general as it does not tie specifications to a a specific
consistency model.

\subsection{The CAP theorem}

The \emph{Strong CAP Principle} conjectured
by~\textcite{DBLP:conf/hotos/FoxB99} was stated in terms of serializability.
In the same work they mention there is a \emph{Weak CAP Principle} about finer
grain guarantees, but leave it vague and without formalization.
The CAP theorem~\cite{DBLP:journals/sigact/GilbertL02} is about not being
possible to have a wait-free linearizable register. (It does not prevent a
wait-free sequentially consistent register.)

CAP is frequently mentioned as the trilemma of choosing between CA, CP, and
AP. But partitions occur beyond our control, even in surprising
scenarios~\cite{DBLP:journals/cacm/BailisK14} where one would not expect them
possible. So, CA is not really an option: any distributed application must have some
defined behavior for \emph{when} a partition occurs (and it will, sooner or later).

So, CAP only states that we can choose to forgo Availability to ensure
strong Consistency, or vice-versa, i.e., a choice between CP and AP. If one desires an
always-available system CAP does provide actionable tradeoffs than can be
made in the context of AP systems, namely what weaker consistency models can
be ensured while remaining available.

Recognizing the limitations of the practical impact of CAP,
\textcite{DBLP:journals/computer/Abadi12} introduces PACELC as a way to better
classify practical systems, according to the tradeoff between consistency and
latency. PACELC refers to: under Partitions how does a system trades-off Availability and
Consistency, Else (in the absence of partitions) what is the tradeoff between
Latency and Consistency.
Also addressing CAP, \textcite{DBLP:journals/corr/Kleppmann15} discusses the impact
of network latency in consistency models.

But these works do not provide axioms that can be used as building blocks for
consistency models and used in a more refined CAP-like theorem, as our CLAM
theorem does. Without the basic axioms that we introduced, the CLAM theorem
cannot even be formulated. Our Closed past, Arbitration, Local visibility, and
Monotonic visibility axioms, and the CLAM theorem allow a more precise
characterization of possible tradeoffs in wait-free (available) systems.
The CAL trilemma clearly points to either: 1) achieving convergence and
serial consistency if not depending on arbitration; 2) having local visibility
and arbitration for convergence, but forgoing closed past (replay
consistency); 3) forgoing local visibility instead, while having closed past
and arbitration for convergence (prefix consistency). Previously to our work,
the two families of models corresponding to the second and third options,
relevant for practical AP systems, had not yet been formally characterized or
made part of current taxonomies~\cite{DBLP:journals/csur/ViottiV16}.

\section{Conclusions}

While axiomatic models are better than the operational, to compare models and
build taxonomies, it may be surprisingly difficult to see whether possible
outcomes are really what we intend. This has been the case for PRAM and causal
memory, whose axiomatic specifications allow unreasonable outcomes, only
explainable by physically impossible causality loops. For PRAM the discrepancy
between the original operational definition and the axiomatic specification
used everywhere seems to have passed unnoticed all this time. For causal memory,
an issue concerning using several identical writes was spotted but the
essential issue was not clearly identified.

We have introduced a framework for consistency models in distributed
systems based on timeless histories. It allows using time information,
whether from logical or physical clocks, whether partially or totally ordered,
as optional, orthogonal, constraints, keeping models themselves timeless.
But we have defined a physical realizability axiom that prevents causality loops
that ``bring information from the future'' but allows synchronization-oriented
abstractions, such as barriers, to be specified, even when no time information
is available. This frees specific models from having to introduce ad hoc
axioms to overcome the problem, something error prone and easily missed, as
the PRAM and causal memory examples show.

Our framework achieves generality simultaneously in several dimensions: beyond
memory to general data types; beyond sequential specifications to both
sequential and concurrent, even allowing synchronization-oriented
abstractions through partial specifications; beyond serial executions to
models where local visibility or closed past does not hold. It does
this by combining per-process serializations, used in classic axiomatic
models, with visibility, used in models devoted to eventual consistency.

Like the framework by~\textcite{DBLP:journals/jacm/SteinkeN04} based in serial
views, where models are build from a combination of several properties, we
defined three basic axioms that can be composed to obtain other models:
monotonic visibility (the most essential axiom), local visibility, and closed
past. We also defined serial consistency, stronger than the three basic axioms
combined, as the generalization for our framework of local consistency,
satisfied by all classic models where each process explains the outcome as
some serial execution of its operations and the effects of remote operations
in some order.
Together with physical realizability, closed past is the other more original
axiom that we introduced. It allows classifying a model as allowing finality
of execution versus requiring (conceptual) re-execution of operations.

We also defined axioms related to the order of inter-process information
propagation: pipelining and causality. These can be combined with other axioms
to obtain variants of pipelined and causal consistency. There has been
considerable confusion about the meaning of causal consistency, not helped by
the pervading overloading of the term for significantly different definitions.
Our generalization both beyond memory and sequential specifications preserves
the original intent when instantiated for sequential specifications and
memory (it coincides with causal memory with its problem corrected).

We have introduced convergence as a stand-alone safety property for consistency
models. Current models and taxonomies have been based on eventual consistency,
which combines both safety and liveness properties, which is not suitable to
define consistency models that can be applied to finite histories. Convergence
is different than other criteria in that it requires universal quantification
over executions to classify a history as convergent, as opposed to the
standard existential quantification for other criteria.
We have introduced arbitration as a criteria to be used for consistency
models, implying convergence.
This is unlike most convergence oriented models that assume the existence of
an arbitration, and do not allow models without convergence. We also point out that
arbitration is just one possible means to obtain convergence. Current
definitions that resort to arbitration, such as causal convergence, are
unnecessarily restrictive.

The more interesting result was how the basic axioms that we have defined can be
used to characterize the space of consistency models that allow wait-free
distributed implementations, i.e., the design space of highly
available partition tolerant systems. In this respect, the CAP theorem does
not provide any useful information about what kind of tradeoffs we can make in
those systems.  We have formulated and proved the CLAM theorem for
asynchronous distributed systems, which essentially says that any wait-free
implementation of practically all data abstractions cannot simultaneously
satisfy Closed past, Local visibility, Arbitration, and Monotonic visibility.
Given that monotonic visibility is too fundamental to forgo, this results in
the CAL trilemma: which of Closed past, Arbitration, or Local visibility to
forgo. This trilemma clarifies possible tradeoffs: either we do not rely on
arbitration for convergence, and be in the family of serial consistency,
achieving up to convergent causal consistency; or we forgo closed past, and
achieve up to causal replay consistency; or we forgo local visibility and
achieve up to pipelined prefix consistency. A new taxonomy shows how the
axioms that we have introduced can be combined to obtain and compare different
consistency models, clearly showing the tradeoffs resulting from the CAL
trilemma.

\printbibliography

\end{document}